\numberwithin{equation}{section}
\newtheorem{theorem}{Theorem}[section]
\newtheorem{lemma}[theorem]{Lemma}
\newtheorem{prop}[theorem] {Proposition}
\newtheorem{cor}[theorem]  {Corollary}
\newtheorem{definition}[theorem] {Definition}
\theoremstyle{definition}
\theoremstyle{remark}
\newtheorem{remark}[theorem]{Remark}
\newtheorem{example}[theorem]{Example}
\newcommand{\e}{\mathrm{e}} % exponential e
\newcommand{\N}{\mathbb{N}}
\newcommand{\R}{\mathbb{R}}
\newcommand{\Z}{\mathbb{Z}}
\newcommand{\C}{\mathbb{C}}
\newcommand{\dd}{\mathrm{d}} %integration d
\newcommand{\eps}{\varepsilon}
\newcommand{\la}{\langle}
\newcommand{\ra}{\rangle}
\newcommand{\vect}[1]{\boldsymbol{#1}}
\newcommand{\be}{\begin{equation}}
\newcommand{\ee}{\end{equation}}
\newcommand{\ba}{\begin{equation} \begin{aligned}}
\newcommand{\ea}{\end{aligned}\end{equation}}
\newcommand{\bes}{\begin{equation*}}
\newcommand{\ees}{\end{equation*}}
\def\1{{\mathchoice {1\mskip-4mu\mathrm l}      % Blackboard bold 1
{1\mskip-4mu\mathrm l}
{1\mskip-4.5mu\mathrm l} {1\mskip-5mu\mathrm l}}}
\newcommand{\esssup}{\operatornamewithlimits{ess\,sup}}
\begin{document}

%%%%%%%%%%%%%%
% Front matter %% 
\title{Virial inversion and density functionals}
\author{Sabine Jansen}
\address{Mathematisches Institut, Ludwig-Maximilians-Universit{\"a}t, 80333 M{\"u}nchen,  Germany}
\email{jansen@math.lmu.de} 
\author{Tobias Kuna}
\address{Department of Mathematics and Statistics,
University of Reading, Reading RG6 6AX, UK}
\email{t.kuna@reading.ac.uk}
\author{ Dimitrios Tsagkarogiannis}
\address{Dipartimento di Ingegneria e Scienze dell'Informazione e Matematica, Universit\`a degli Studi dell'Aquila, 67100 L'Aquila, Italy}
\email{dimitrios.tsagkarogiannis@univaq.it}

\date{30 August 2019}
%%%%%%%%
\begin{abstract} 
	We prove a novel inversion theorem for functionals given as power series in  infinite-dimensional spaces and apply it to the inversion of the density-activity relation for inhomogeneous systems. This provides a rigorous framework to prove
	convergence for density functionals for inhomogeneous systems with applications in classical density function theory, liquid crystals,
	molecules with various shapes or other internal degrees of freedom. The key technical tool is the representation of the inverse via a fixed point equation and a combinatorial identity for trees, which allows us to obtain convergence estimates in situations where Banach inversion fails. 
	Moreover, the new method for the inversion gives for the (homogeneous) hard sphere gas a significantly improved radius of convergence for the virial expansion improving the first and up to now best result  by Lebowitz and Penrose (1964).
\\

\noindent\emph{Keywords}: cluster and virial expansions -- density functional theory  -- holomorphic functions in Banach spaces \\
% colored combinatorial species %-- Lagrange-Good inversion.\\

\noindent\emph{MSC 2010 classification:}  82B05, 82D15, 82D30, 
%	46G20, %infinite-dimensional holomorphy
	47J07, %Abstract inverse mapping and implicit function theorems
	05C05 % trees
\end{abstract}

\maketitle

\tableofcontents
 %%%%%%%%%%%%% 

\section{Introduction}

Deriving functional expressions for thermodynamic quantities from microscopic models which are based on physical principles is one of the main challenges of both theoretical and computational methods in statistical mechanics. Furthermore, the use of such functionals is ubiquitous in applied mathematics for example in classical density function theory, liquid crystals, heterogenous materials, colloid systems, system of molecules with various shapes or other internal degrees of freedom. However, often the key point in variational calculus and the theory of PDE is to consider non constant densities and hence  non translation invariant systems. 
One key mathematically rigorous result in this direction was the proof of the convergence of the virial expansion by Lebowitz and Penrose in 1964~\cite{lebowitz-penrose1964virial},  building on the previously established convergence of the activity expansion of the pressure and of the density. 
The proof consists out of three main steps: first to invert the density-activity relation, second to plug the resulting expansion of the activity as a function of the density into the pressure-activity expansion and resum, and finally bound the radius of convergence of the composed power series combining convergence results for the inversions and for activity expansions. Previous results~\cite{mayerbook}, based on formal manipulations of power series and combinatorics of graphs, had already identified the coefficients in the density series in terms of two-connected (``irreducible'') graphs. A by-product of the convergence result from~\cite{lebowitz-penrose1964virial} is the absolute convergence of the generating function for two-connected graphs, thus justifying formulas that were already in use. 

This recipe for going from activity expansions to density expansions extends to  quantities whose activity expansion is well understood, for example, the truncated correlation functions. However convergence proofs for other quantities are more delicate, as explained in detail in \cite{kuna-tsagkaro2016} for the direct correlation functions. Indeed, even though combinatorial series for various quantities are available, their derivation rests on formal manipulations and graph re-summations that have yet to be rigorously justified.  
The formal graph re-summations were developed  in the 60's mainly by the works of Morita and Hiroike \cite{morita-hiroike1, morita-hiroike3} and of Stell  \cite{stell1964} on liquid state theory expansions for inhomogeneous fluids, allowing for position-dependent densities. In contrast, the convergence result from~\cite{lebowitz-penrose1964virial} and all subsequent works addresses homogeneous systems only. 

Our goal, therefore, is twofold: 
\begin{enumerate}
\item Establish the validity of the inversion formulas for inhomogeneous fluids.
\item Prove the validity of re-summation operations on graphs by showing that the resulting power series are absolutely convergent. 
\end{enumerate}

As far as  goal (2) is concerned,
in a previous work \cite{kuna-tsagkaro2016} we proved convergence for some resummation of expansions leading to graphs with higher connectivity properties, but starting from the canonical ensemble. That choice was made in order to avoid the graph re-summations that come with the inversion, but also since it is more natural for expansions with respect to the density. 
%Hence, establishing it in the grand-canonical case is a more delicate issue since it is more singular but also more
%relevant if we want to extend the proofs of convergence for expansions with respect to other quantities (than the density) such as the truncated correlation function.
In the current paper we prove the validity of these re-summations by inverting the density-activity relation, but a similar structure may be expected for other types of resummations as considered, e.g. inverting the truncated correlation vs activity relation.
We intend to
address all these issues in a subsequent work.

Concerning goal (1), since inhomogeneous system can be seen as a system of uncountably many species, when one considers the position $x\in \R^d$ as  species. We consider goal (1) in this more general context of a system of (potentially uncountably many) species. In this way, we can treat at the same time as well systems of mixtures as with internal degrees of freedom. This generalization will not increase the complexity of the arguments involved. 

At first sight, it may look as if goal (1) is best achieved with the help of inverse function theorems in complex Banach spaces, applied to the functional that maps the activity profile $(z(x))_{x\in \Lambda}$ to the density profile $(\rho(x))_{x\in \Lambda}$, see Section~\ref{sec:banach}. This works well for inhomogeneous systemsof e.g. objects of bounded size, e.g., hard spheres of fixed radius. It turns out, however, that Banach inversion fails for mixtures of objects of finite but unlimited size \cite{jttu2014,jansen2015tonks}, see Example~\ref{ex:unbounded}.  As a way out, mixtures of countably many species were treated with the help of Lagrange-Good inversion in~\cite{jttu2014}, leaving the case of uncountably many species wide open. 

Our first main result is a novel inversion theorem (Theorem~\ref{thm:main2})  that addresses the above-mentioned difficulties and bypasses both Banach and Lagrange-Good inversion. 
The novelty is two-fold. First, we work on the level of formal series and relate the formal inverse to generating functions of trees or equivalently, solutions of certain formal fixed point problems (Proposition~\ref{prop:trees}). This part is inspired by the combinatorial proof of the Lagrange-Good formula for finitely many variables given in~\cite{gessel1987}, we will consider this relation in more details in a forthcoming work. Second, we provide sufficient conditions for the convergence of the formal inverse, i.e., of a generalized tree generating functions (Theorem~\ref{thm:main1}). 
The inversion theorem is of an abstract general nature and has the potential of being applied to other situations than the density-activity relation in statistical mechanics. 

In our second group of results (Section~\ref{sec:virial}), we apply the abstract inversion theorem to the concrete problem of inverting the functional that maps the activity profile in an inhomogeneous grand-canonical Gibbs measure (or even a general multi-species system) to the density profile. We exhibit domains on which the activity profile is written as a convergent series in the density profile, relate the coefficients to two-connected graphs,  and show that the virial expansion for the pressure as a functional of the position-dependent density profile converges and is indeed given in terms of two-connected graphs (Theorem~\ref{thm:virmain2}). These results work for general stable pair potentials. 

Finally in Section~\ref{sec:examples} we apply the results to different more concrete choices of pair potentials. We demonstrate the power of our approach for systems of homogeneous hard spheres, our results yield a significant improvement over previously available bounds (Theorem~\ref{thm:hom-virmain}). For general non-negative potentials the improvement is almost 27\%. 
 For mixtures of thin rods with different orientiations, we obtain a series representation of the (grand-canonical) free energy as a function of the overall density $\rho_0$ of rods and the probability density $p(\sigma)$ on different orientations (Theorem~\ref{thm:onsager} and Corollary~\ref{cor:feinternal}). 
In fact, in an early work, Onsager \cite{onsager1949} derived a density functional for liquid crystals,
keeping track of the orientation of the atomistic elongated molecules. Working in the canonical ensemble
he discretized the space of orientations and assigned each value to a species obtaining
a multi-species canonical partition function for (for finitely many) species. Following the new developments \cite{pulvirenti-tsagkaro2012}, the convergence of this expansion can be easily proved to be valid in the low density regime. Our result allows for a direct treatment of continuous values of the orientation as inhomogeneous systems. It bypasses the need to estimate errors from discretizing the orientation space, at the price of a detour through the grand-canonical ensemble. The improvements we are obtain are purely due to the improved inversion results as we used the classical tree-graph bound in the grand-canonical ensemble \cite{penrose1963},  \cite{ruelle1969book},\cite{MalyshevMinlos1991}, \cite{poghosyan-ueltschi2009}, \cite{procacci-yuhjtman2017} and for marked systems \cite{kuna2001}.

%This, together with the fact that one would like to work closer to the common operations
%in liquid state theory which reveal the underlying combinatorial structure and give coefficients expressed via special graph classes
%(e.g. irreducible for the virial expansion)
%led us to seek for an alternative point-wise result.
%The key new element is a combinatorial identity between generating functions of trees. 
%Its validity can be also proved by a simple argument involving
%formal power series by showing that it comes as the unique solution of a finite system of equations. In the present paper we
%chose to give the latter explanation in order to emphasize the simplicity of the argument. However, for a better understanding, more details
%about the combinatorial structure and its connection to the Lagrange-Good inversion formulas will be given in a subsequent publication.

%Thus, having at hand the validity of density functionals for inhomogeneous systems, a number of interesting applications in the density function theory are rigorously tractable.

Following the above discussion
we summarize below the main outcomes of this paper:
\begin{enumerate}
\item Proof of a novel inversion theorem (Theorem~\ref{thm:main1}), applicable to the inversion of the density-activity relation for inhomogeneous systems (Section~\ref{sec:virial}), yielding a convergent power series of the inverse map.  
\item Key technical tool: a fixed point equation for generating functions of special trees (Proposition~\ref{prop:trees}).   
\item Various applications: inhomogeneous gas, liquid crystals,
	molecules with various shapes (internal degrees of freedom), see Section~\ref{sec:examples}.
\item Comparison to existing theorems of inversion in Banach spaces (Proposition~\ref{prop:rhoholo} and Theorem~\ref{thm:banach}). 
\item Discussion of  the improvement of the radius of convergence for the (homogeneous) hard sphere gas (Section~\ref{sec:hom}).
\end{enumerate}

\section{General inversion theorems}\label{sec:general}

\subsection{Main inversion theorem with proof}
Let $(\mathbb X, \mathcal X)$ be a measurable space and $\mathfrak M(\mathbb X,\mathcal X)$ the set of $\sigma$-finite non-negative measures on $(\mathbb X,\mathcal X)$. Further let $\mathfrak M_\mathbb C(\mathbb X, \mathcal X)$ be the set of complex linear combinations of measures in $\mathfrak M(\mathbb X,\mathcal X)$. When there is no risk of confusion, we shall write $\mathfrak M$ and $\mathfrak M_\mathbb C$ for short.  
Suppose we are given a family of measurable functions
$A_n: \mathbb X\times \mathbb X^n\to \C$, $(q, (x_1,\ldots, x_n))\mapsto A_n(q;x_1,\ldots,x_n)$. We assume that each $A_n$ is symmetric in the $x_j$'s, i.e., 
\be
	A_n(q;x_{\sigma(1)},\ldots,x_{\sigma(n)})= A_n(q;x_1,\ldots,x_n),
\ee
for all permutations $\sigma\in \mathfrak S_n$. 
 When we say that a power series converges absolutely, we mean that
 \be
	\sum_{n=1}^\infty \frac{1}{n!}\int_{\mathbb X^n} \bigl|A_n(q;x_1,\ldots,x_n)\bigr|\, |z|(\dd x_1)\cdots |z|(\dd x_n)< \infty
\ee
where $|z|$ is the total variation of $z$\footnote{If $z= \mu_1 - \mu_2 + \mathrm i \mu_3 - \mathrm i \mu_4$ with $\mu_1,\ldots, \mu_4$ mutually singular $\sigma$-finite non-negative measures, then $|z| = \sum_{i=1}^n \mu_i$.}
Let $\mathscr D(A)\subset \mathfrak M_\C$ be the domain of convergence of the associated power series, that is $z\in \mathscr D(A)$ if and only if the power series converges absolutely in the above sense.
We set 
\be \label{Adef}
	A(q;z):= \sum_{n=1}^\infty \frac{1}{n!}\int_{\mathbb X^n} A_n(q;x_1,\ldots,x_n) z(\dd x_1)\cdots z(\dd x_n) \qquad (z\in \mathscr D(A)). 
\ee 
We are interested in maps of the form
\be
	\mathfrak M_\C \supset \mathscr D(A) \to \mathfrak M_\C,\quad z \mapsto \rho[z]
\ee
 given by 
 \be \label{eq:rhoa}
 	\rho[z](\dd q) \equiv \rho(\dd q;z) := \e^{- A(q;z)} z(\dd q),
 \ee
where $\rho(\dd q;z)$ is just a notation for $\rho[z](\dd q)$. The latter is useful whenever one wants to stress the $q$ instead of the $z$ dependence. 
 Thus $\rho[z]$ is absolutely continuous with respect to $z$ with Radon-Nikod{\'y}m derivative $\exp(-  A(q;z))$. We want to determine the inverse map $\nu\mapsto \zeta [\nu]$, 
 $$ 
 	\nu =\rho[z] \, \Leftrightarrow\, z = \zeta[\nu]. 
$$
Suppose for a moment that such an inverse map exists. 
Clearly $z$  is equivalent to $\nu = \rho[z]$ with Radon-Nikod{\'y}m derivative $\exp(  A(q;z))$. Consequently we should have 
\be
	\zeta[\nu] (\dd q)  \equiv \zeta(\dd q; \nu) = \e^{ A(q; \zeta [\nu])} \nu(\dd q). 
\ee
This observation is the starting point for our inversion result, namely the family of power series $(T_q^\circ)_{q\in \mathbb X}$ given by 
\be
	T_q^\circ (\nu) \equiv T^\circ(q;\nu)=\e^{A(q;\zeta[\nu])} 
\ee 
should solve 
\be
	\zeta[\nu] (\dd q)  = T_q^\circ (\nu) \nu(\dd q)
= \e^{ A(q; \nu T_q^\circ (\nu))} \nu(\dd q)
\ee
and therefore
\be \label{tree-eq}\tag{$\mathsf{FP}$}
	T_q^\circ(\nu) = \exp\Biggl( \sum_{n=1}^\infty \frac{1}{n!} \int_{\mathbb X^n} A_n(q;x_1,\ldots, x_n) T_{x_1}^\circ(\nu)\cdots T_{x_n}^\circ(\nu) \nu(\dd x_1)\cdots \nu(\dd x_n)\Biggr). 
\ee 
In Proposition~\ref{prop:trees} below we provide a combinatorial interpretation of $T_q^\circ$ as the exponential generating function for colored rooted, labelled trees whose root is a ghost of color $q$ (i.e., the root does not come with powers of $\nu$ in the generating function). For our main inversion theorem, however, it is enough to know that the fixed point equation~\eqref{tree-eq} determines the power series $(T_q^\circ)_{q\in \mathbb X}$ uniquely.

\begin{lemma}\label{lem:treesolution}
	There exists a uniquely defined family of formal power series 
	\bes
		T_q^\circ(\nu) = 1+ \sum_{n=1}^\infty \frac{1}{n!} \int_{\mathbb X^n} t_n(q;x_1,\ldots, x_n) \nu(\dd x_1)\cdots \nu(\dd x_n)  \qquad (q\in \mathbb X)
	\ees
	with $t_n:\mathbb X\times\mathbb X^n\to \C$ measurable and symmetric in the $x_j$'s, 	that solves~\eqref{tree-eq} in the sense of formal power series. 
\end{lemma}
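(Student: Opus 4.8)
The plan is to prove existence and uniqueness by induction on the order of the formal power series, extracting from the fixed point equation~\eqref{tree-eq} a recursion that determines the coefficient functions $t_n$ uniquely. First I would expand the right-hand side of~\eqref{tree-eq}. Write $T_q^\circ(\nu) = 1 + \sum_{k\geq 1} \frac{1}{k!}\int t_k(q;\underline x)\,\nu^{\otimes k}(\dd \underline x)$ as an unknown formal series with $t_0 \equiv 1$. Substituting this ansatz into the product $T_{x_1}^\circ(\nu)\cdots T_{x_n}^\circ(\nu)$ inside the exponential, and then expanding the exponential itself, produces a formal power series in $\nu$ whose coefficient of order $n$ is a finite sum of integrals built from the $A_m$ with $m\leq n$ and from the $t_j$ with $j\leq n-1$ only. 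The key bookkeeping point is that every occurrence of $t_j$ on the right-hand side is accompanied by at least one factor of $\nu$ coming from an $A_m$-vertex (since the sum in the exponent starts at $n=1$), so the order of $t_j$ appearing in the order-$n$ coefficient is strictly less than $n$. This is exactly the structural feature that makes the recursion well-posed.

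Next I would set up the induction formally. Comparing coefficients of $\frac{1}{n!}\int (\,\cdot\,)\,\nu(\dd x_1)\cdots\nu(\dd x_n)$ on both sides of~\eqref{tree-eq} gives, after symmetrization in $x_1,\ldots,x_n$, an identity of the form
\be
t_n(q;x_1,\ldots,x_n) = \Phi_n\bigl(A_1,\ldots,A_n;\, (t_j)_{0\le j\le n-1}\bigr)(q;x_1,\ldots,x_n),
\ee
where $\Phi_n$ is an explicit (if notationally heavy) polynomial expression in finitely many integrals of products of the $A_m$'s and the lower-order $t_j$'s against the remaining variables. For $n=1$ one reads off directly $t_1(q;x_1) = A_1(q;x_1)$. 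Assuming $t_0,\ldots,t_{n-1}$ are already constructed as measurable functions symmetric in their $x$-arguments, the right-hand side $\Phi_n$ is a finite sum of integrals of measurable functions and hence measurable in $(q;x_1,\ldots,x_n)$; symmetry in the $x_j$'s is built in by the explicit symmetrization (or follows because $\Phi_n$ is manifestly symmetric once one sums over all ways of distributing the labels among the $A_m$-vertices). This defines $t_n$ uniquely, completing the induction and establishing both existence and uniqueness of the formal solution.

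There is a measure-theoretic subtlety I want to flag as the main technical point rather than a genuine obstacle: the coefficient functions $t_n$ are defined by iterated integrals against the (arbitrary $\sigma$-finite) measure $\nu$, and one must make sure these integrals make sense at the level of formal power series. The clean way to handle this is to observe that the recursion $\Phi_n$ can be written purely in terms of the kernel functions $A_m$ and $t_j$ without reference to any particular $\nu$ — that is, $t_n$ is an explicit integral-free algebraic combination of the $A_m$'s and products thereof, with the integrations only reappearing when one pairs $t_n$ against $\nu^{\otimes n}$. Concretely, expanding the product of the $T_{x_i}^\circ$'s and the exponential reorganizes into a sum over partitions of $\{1,\ldots,n\}$ (which indices feed into which $A_m$-vertex, which are absorbed into which $t_j$-subtree), and for each such combinatorial pattern the contribution to $t_n(q;x_1,\ldots,x_n)$ is a pointwise product/composition of values of the $A_m$'s and $t_j$'s at the appropriate arguments, with no integral sign. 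Since there are only finitely many such patterns for each fixed $n$, measurability is immediate from measurability of the $A_m$'s and the inductive hypothesis. This also makes the identification with the tree generating function in Proposition~\ref{prop:trees} transparent, but for the present lemma we only need the existence and uniqueness, which the induction delivers.
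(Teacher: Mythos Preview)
Your proposal is correct and follows essentially the same approach as the paper: both expand the right-hand side of~\eqref{tree-eq} as a formal power series in $\nu$, observe that the order-$n$ coefficient depends only on the $A_m$'s and on $t_j$ with $j\leq n-1$ (because the exponent starts at $n=1$), and conclude existence and uniqueness by induction. The paper makes one intermediate step explicit that you leave implicit: it first names the coefficients $B_n$ of the series \emph{inside} the exponential (your ``$A_m$-vertex together with its attached $t_j$-subtrees'') and then applies the exponential formula to pass from the $B_n$'s to the $t_n$'s, which makes the triangular structure and the integral-free, pointwise nature of the recursion visible without further argument.
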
 

As the above expressions are interpreted in the sense of formal power series, neither the series need to converge nor the integrals need to exist.
 
\begin{proof}
Set $t_0:=1$. 
	Let $B_n(q;x_1,\ldots,x_n)$ be the coefficients of the series in the exponential in~\eqref{tree-eq}, i.e., each $B_n:\mathbb X\times\mathbb X^n\to \C$ is measurable, and we have
	\begin{multline*}
		\sum_{n=1}^\infty \frac{1}{n!}\int_{\mathbb X^n} B_n(q;x_1,\ldots,x_n) \nu(\dd x_1)\cdots \nu(\dd x_n) \\
			= \sum_{n=1}^\infty \frac{1}{n!} \int_{\mathbb X^n} A_n(q;x_1,\ldots, x_n) T_{x_1}^\circ(\nu)\cdots T_{x_n}^\circ(\nu) \nu(\dd x_1)\cdots \nu(\dd x_n)
	\end{multline*}
	in the sense of formal power series. Then 
	\be \label{eq:Bn}
		B_n(q;x_1,\ldots,x_n) = \sum_{m=1}^n \sum_{\substack{J\subset [n]  \\ \#J =m}}  A_{m}\bigl(q;(x_j)_{j\in J}\bigr) \sum_{\substack{(V_j)_{j\in J}:\\  \dot \cup_{j\in J} V_j  = [n]\setminus J}} \prod_{j\in J} t_{\# V_j}\bigl(x_j; (x_v)_{v\in V_j}\bigr),
	\ee
	see Eq.~\eqref{fcomp2} in Appendix~\ref{ApFormal}.
	The third sum is over ordered partitions $(V_j)_{j\in J}$ of $[n]\setminus J$, indexed by $J$,  into $\#J$ disjoint sets $V_j$, with $V_j =\varnothing$ explicitly allowed. 
	For example, 
	\begin{align*}
		B_1(q;x_1) &= A_1(q;x_1),\\
		B_2(q;x_1,x_2) & = A_2(q;x_1,x_2) + A_1(q;x_1) t_1(x_1;x_2) + A_1(q;x_2) t_1(x_2;x_1).
	\end{align*}
	More generally, $B_n(q;\cdot)$ depends on $t_1(q;\cdot),\ldots,t_{n-1}(q;\cdot)$ alone. This is the only aspect of~\eqref{eq:Bn} that enters the proof of this lemma. 
	
	For $n\in \N$, let $\mathcal P_n$ be the collection of set partitions of $\{1,\ldots,n\}$. 	The family $(T_q^\circ)_{q\in \mathbb X}$ solves~\eqref{tree-eq} in the sense of formal power series if and only if for all $n\in \N$ and $q,x_1,\ldots, x_n\in \mathbb X^n$, we have 
 	\be \label{eq:tnBn}
 		t_n(q;x_1,\ldots,x_n) =\sum_{m=1}^n \sum_{\{J_1,\ldots, J_m\}\in \mathcal P_n} \prod_{\ell=1}^m B_{\#J_\ell}\bigl( q;(x_j)_{j\in J_\ell}\bigr),
 	\ee
 	see Eq.~\eqref{eq:formal-exponential} in Appendix~\ref{ApFormal}. In particular, 
 	\begin{align*}
 		t_1(q;x_1) & = B_1(q;x_1) = A_1(q;x)\\
 		t_2(q;x_1,x_2) & = B_2(q;x_1,x_2) + B_1(q;x_1) B_1(q;x_2) 
	\end{align*} 
	which determines $t_1$ and $t_2$ uniquely. A straightforward induction over $n$, exploiting that  the right-hand side of~\eqref{eq:tnBn} depends on $t_1,\ldots, t_{n-1}$ alone (via $B_1$,\ldots, $B_n$), shows that the system of equations~\eqref{eq:tnBn} has a unique solution $(t_n)_{n\in \N}$. 
\end{proof}
 
\begin{remark}
	The proof of Lemma~\ref{lem:treesolution} shows that the coefficients $(t_n)_{n\in \N}$ can be computed recursively.  
\end{remark}

\noindent Next we provide a sufficient condition for the absolute convergence of the series $T_q^\circ(\nu)$. 

\begin{theorem}\label{thm:main1}
	Let $T_q^\circ(\nu)$ be the unique solution of~\eqref{tree-eq} from Lemma~\ref{lem:treesolution}. Assume 
	 that for some measurable function $b:\mathbb X\to [0,\infty)$, the measure $\nu\in \mathfrak M_\C$ satisfies, for all $q\in \mathbb X$, 
	\be \label{suff1} \tag{$\mathscr S_b$}
		\sum_{n=1}^\infty \frac{1}{n!}\int_{\mathbb X ^n }|A_n(q;x_1,\ldots,x_n)|\e^{\sum_{j=1}^n b(x_j)} |\nu|(\dd x_1)\cdots |\nu|(\dd x_n) \leq b(q).
	\ee
 Then, for all $q \in \mathbb X$, we have that
	\begin{equation}\label{mainest1} \tag{$\mathscr M_b$}
1 + \sum_{n=1}^{\infty}\frac{1}{n!}\int_{\mathbb X^n}  
\left| t_{n}(q;x_1,\ldots, x_n)\right| \ |\nu|(\dd x_1) \cdots |\nu|(\dd x_n) \leq \e^{b(q)}
\end{equation}
	and the fixed point equation~\eqref{tree-eq} holds true as an equality of absolutely convergent series.
\end{theorem}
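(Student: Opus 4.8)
The plan is to derive the coefficient bound~\eqref{mainest1} from a monotone iteration, exploiting that the weighted hypothesis~\eqref{suff1} says exactly that $q\mapsto\e^{b(q)}$ is a \emph{supersolution} of the fixed point map underlying~\eqref{tree-eq}. As a preliminary reduction, observe that by the purely algebraic recursions~\eqref{eq:Bn}--\eqref{eq:tnBn} (no integrations enter at the level of the coefficients) each $t_n(q;x_1,\dots,x_n)$ is a finite sum, with non-negative integer multiplicities, of products of values of the $A_m$'s evaluated at sub-tuples of $(q,x_1,\dots,x_n)$. Writing $\bar t_n$ for the coefficients obtained by running the same recursions with $|A_m|$ in place of $A_m$ --- equivalently, the unique formal solution of~\eqref{tree-eq} for the system $(|A_n|)_n$ --- the triangle inequality gives $|t_n(q;\cdot)|\le\bar t_n(q;\cdot)$ pointwise, with $\bar t_n\ge 0$. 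Since~\eqref{suff1} is already stated in terms of $|A_n|$ and $|\nu|$, it suffices to prove~\eqref{mainest1} for the non-negative system $(|A_n|)_n$ and the non-negative measure $|\nu|$; I may therefore assume from now on that $A_n\ge 0$ for all $n$ and that $\nu\ge 0$, so that all $t_n\ge 0$, and it remains to show $1+\sum_{n\ge1}\tfrac1{n!}\int_{\mathbb X^n}t_n(q;x_1,\dots,x_n)\,\nu(\dd x_1)\cdots\nu(\dd x_n)\le\e^{b(q)}$.

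Next I would introduce the operator $\Phi$ sending a measurable $u:\mathbb X\to[0,\infty]$ to
\[
(\Phi u)(q):=\exp\Bigl(\sum_{n\ge1}\tfrac1{n!}\int_{\mathbb X^n}A_n(q;x_1,\dots,x_n)\,u(x_1)\cdots u(x_n)\,\nu(\dd x_1)\cdots\nu(\dd x_n)\Bigr),
\]
which is well defined and measurable by Tonelli and is monotone because $A_n,\nu\ge0$. The content of~\eqref{suff1} is precisely $\Phi(\e^{b})\le\e^{b}$ pointwise. Starting from $u_0\equiv1\le\e^{b}$ and setting $u_{k+1}:=\Phi u_k$, a short induction (monotonicity of $\Phi$ together with $\Phi(1)\ge1$) yields $1=u_0\le u_1\le u_2\le\cdots\le\e^{b}$; in particular $u_k(q)\le\e^{b(q)}<\infty$ for every $k$ and $q$.

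Now I would match the numerical iterates $u_k=\Phi^k(1)$ with truncations of the tree series $T_q^\circ$. Expanding $\Phi$ as a formal power series in $\nu$ one writes $u_k(q)=1+\sum_{n\ge1}\tfrac1{n!}\int_{\mathbb X^n}t_n^{(k)}(q;x_1,\dots,x_n)\,\nu(\dd x_1)\cdots\nu(\dd x_n)$ with $t_n^{(k)}\ge0$, and the combinatorial core of the argument is the observation that the order-$n$ coefficient of $\Phi(u)$ depends only on the coefficients of $u$ of order $\le n-1$ (because the $m$-th term of the exponent has $\nu$-degree at least $m$). Hence by strong induction on $n$ the sequence $k\mapsto t_n^{(k)}$ is constant for $k\ge n$, and passing to that stable value in the relation $u_{k+1}=\Phi u_k$ shows that the stable coefficients satisfy the defining relations of~\eqref{tree-eq} as formal power series, so by the uniqueness part of Lemma~\ref{lem:treesolution} they equal $t_n$. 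Consequently, for every $N\in\N$ and every $k\ge N$,
\[
1+\sum_{n=1}^{N}\tfrac1{n!}\int_{\mathbb X^n}t_n(q;\cdot)\,\nu^{\otimes n}=1+\sum_{n=1}^{N}\tfrac1{n!}\int_{\mathbb X^n}t_n^{(k)}(q;\cdot)\,\nu^{\otimes n}\le u_k(q)\le\e^{b(q)},
\]
and letting $N\to\infty$ gives~\eqref{mainest1}. I expect this bookkeeping --- rigorously identifying the numerical monotone iterates with truncations of the formal tree series --- to be the main point requiring care; the reduction to the non-negative case and the supersolution observation are routine once set up.

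For the final assertion, \eqref{mainest1} gives $\sum_{n\ge1}\tfrac1{n!}\int|t_n(q;\cdot)|\,|\nu|^{\otimes n}\le\e^{b(q)}-1<\infty$, so $T_q^\circ(\nu)$ is an absolutely convergent series with $|T_q^\circ(\nu)|\le\e^{b(q)}$ for all $q$. Substituting this bound into the exponent on the right of~\eqref{tree-eq} and invoking~\eqref{suff1} bounds $\sum_{n\ge1}\tfrac1{n!}\int|A_n(q;\cdot)|\,|T_{x_1}^\circ(\nu)|\cdots|T_{x_n}^\circ(\nu)|\,|\nu|^{\otimes n}$ by $b(q)<\infty$, so the inner series on the right of~\eqref{tree-eq} converges absolutely and its exponential is a well-defined complex number. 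The numerical identity~\eqref{tree-eq} then follows from the already-established formal identity by a routine rearrangement of absolutely convergent series (Fubini), grouping terms by total degree in $\nu$; this step can be carried out directly or quoted from Appendix~\ref{ApFormal}.
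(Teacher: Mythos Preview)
Your proof is correct and follows essentially the same approach as the paper: both arguments reduce to the non-negative case, recognize that $\e^{b}$ is a supersolution of the fixed-point map, and run a monotone induction to bound the partial sums by $\e^{b(q)}$. The only organizational difference is that the paper inducts directly on the truncation degree $N$ of the partial sums $S_q^N$ (showing $S_q^N\le\Phi(S^{N-1})\le\Phi(\e^b)\le\e^b$), whereas you iterate the operator $\Phi$ from $u_0\equiv1$ and then match the stabilized coefficients $t_n^{(k)}$ with $t_n$; the bookkeeping you flag as ``requiring care'' is exactly what the paper sidesteps by working with the partial sums from the start, but both routes are short and sound.
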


\begin{proof}
	The inductive proof is similar to~\cite{ueltschi2004,poghosyan-ueltschi2009}.	Let $S_q^N(\nu)$, $N\in \N_0$, be the partial sums for the left-hand side of~\eqref{mainest1},
\bes
	S_q^N(\nu) :=1+ \sum_{n=1}^N \frac{1}{n!} \int_{\mathbb X^n}\left| t_{n}(q;x_1,\ldots, x_n)\right| \   |\nu| (\dd x_1)\cdots  |\nu|(\dd x_n).
\ees
	We prove $S_q^N(\nu)\leq \e^{b(q)}$ by induction on $N$, building on the proof of Lemma~\ref{lem:treesolution}. The estimate for the full series then follows by a passage to the limit $N\to \infty$.
	
	For $N=0$, we have $S_q^0(\nu) =1$ and the inequality $S_q^0(\nu) \leq \exp(b(q))$ is trivial. Now assume $S_q^{N-1}(\nu)\leq \exp(b(q))$. The triangle inequality applied to Eqs.~\eqref{eq:Bn} and~\eqref{eq:tnBn} yields the same iterative formula for $\left| t_{n}(q;x_1,\ldots, x_n)\right|$ as for $ t_{n}(q;x_1,\ldots, x_n)$ just with $ A_n(q;x_1,\ldots,x_n)$ replaced by $\bigl| A_n(q;x_1,\ldots,x_n) 	\bigr|$.  We noted before that, if we consider $S_q^N(\nu)$ and hence only $\left| t_{n}(q;x_1,\ldots, x_n)\right|$ for $n \leq N$, then on the right hand side only $\left| t_{n}(q;x_1,\ldots, x_n)\right|$  with $n \leq N-1$ appear. However, there are some terms on the right hand side, which as well only contain $\left| t_{n}(q;x_1,\ldots, x_n)\right|$  with $n \leq N-1$  but which come from some term $\left| t_{n}(q;x_1,\ldots, x_n)\right|$  on the left hand side for $n >N$. Adding these missing terms, we reconstruct an exponential on the right hand side. As all of these additional terms are non-negative, we get the following inequality, instead of an equality
	\begin{align*}
		S_q^N(\nu) & \leq \exp\Biggl( \sum_{n=1}^{N-1} \frac{1}{n!} \int_{\mathbb X^n}\bigl| A_n(q;x_1,\ldots,x_n) 	\bigr|\, S_{x_1}^{N-1}(\nu)\cdots S_{x_n}^{N-1}(\nu)
		\   |\nu| (\dd x_1)\cdots  |\nu|(\dd x_n)
		\Biggr) \\
		& \leq  \exp\Biggl( \sum_{n=1}^{N-1} \frac{1}{n!} \int_{\mathbb X^n}\bigl| A_n(q;x_1,\ldots,x_n)\bigr| 	\e^{b(x_1)+\cdots+ b(x_n)}
		\   |\nu| (\dd x_1)\cdots  |\nu|(\dd x_n)
		\Biggr)\\
		&\leq \e^{b(q)}.
	\end{align*}
	The induction is complete. It follows that~\eqref{mainest1} holds true. In particular, the series $T_q^\circ(\nu)$ is absolutely convergent and satisfies $|T_q^\circ(\nu)|\leq \exp(b(q))$. By condition~\eqref{suff1}, the right-hand side of the fixed point equation~\eqref{tree-eq} is absolutely convergent as well. Therefore Eq.~\eqref{tree-eq} holds true not only as an identity of formal power series but in fact as an identity of well-defined complex-valued functions. 
\end{proof} 

\begin{remark}
	For non-negative functions $A_n$, the convergence estimate is sharp, in the following sense: If $\nu\in \mathfrak M$ is a non-negative measure and $T_q^\circ(\nu)<\infty$, 
	then there exists a function $b:\mathbb X\to [0,\infty)$ such that~\eqref{mainest1} holds true. Indeed, an induction over $n$, based on Eqs.~\eqref{eq:Bn} and~\eqref{eq:tnBn}, shows that if the $A_n$'s are non-negative, then the coefficients $B_n$ and $t_n$ are non-negative as well. If $T_q^\circ(\nu)<\infty$, we may define 
	$$
		b(q):=\log T_q^\circ(\nu).
	$$
	Notice $b(q)\geq 0$ because of $T_q^\circ(\nu)\geq 1$ for non-negative $t_n$ and $\nu$. It follows from~\eqref{tree-eq} that the inequality~\eqref{suff1} holds true and is in fact an equality.  This was already noticed in  \cite[Proposition~2.9]{jansen2018pointproc} and the proof of Theorem~4.2(b) in~\cite{jansen2015tonks}. 
\end{remark}

Now that we have addressed the convergence of the series $T_q^\circ$, we may come back to the inversion of the map $\mathscr D(A)\ni z\mapsto \rho[z]$. For measurable $b:\mathbb X\to [0,\infty)$, let 
\be \label{eq:Db}
	\mathscr V_b:=\{ \nu \in \mathfrak M_\C\mid \nu\ \text{satisfies condition~\eqref{suff1}}\}.
\ee
For $\nu \in \mathscr V_b$, define $\zeta[\nu]\in \mathfrak M_\C$ by 
\be \label{zetadef}
	\zeta[\nu](\dd q) = \zeta(\dd q;\nu):= T_q^\circ(\nu) \nu(\dd q). 
\ee

\begin{theorem}\label{thm:main2}
	For every weight function $b:\mathbb X\to \R_+$, there is a set $\mathscr U_b\subset \mathscr D(A)$ such that $\rho:\mathscr U_b\to \mathscr V_b$ is a bijection with inverse $\zeta$. 
\end{theorem}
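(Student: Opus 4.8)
The plan is to \emph{define} the domain as the image, $\mathscr U_b:=\zeta(\mathscr V_b)$, and then verify the bijection by a direct computation, letting Theorem~\ref{thm:main1} do all the analytic work. First I would check that $\zeta$ maps $\mathscr V_b$ into $\mathscr D(A)$. Fix $\nu\in\mathscr V_b$. By Theorem~\ref{thm:main1} the series $T_q^\circ(\nu)$ converges absolutely with $|T_q^\circ(\nu)|\le\e^{b(q)}$ for every $q$; since the $t_n$ are measurable and $\nu$ is $\sigma$-finite, $q\mapsto T_q^\circ(\nu)$ is measurable, and because $|T^\circ_q(\nu)|\le\e^{b(q)}<\infty$ the prescription $\zeta[\nu](\dd q):=T_q^\circ(\nu)\,\nu(\dd q)$ defines a bona fide element of $\mathfrak M_\C$ with $|\zeta[\nu]|(\dd q)\le\e^{b(q)}|\nu|(\dd q)$. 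Inserting this bound into the absolute series for $A(q;\zeta[\nu])$ and invoking condition~\eqref{suff1} gives
\[
\sum_{n=1}^\infty\frac1{n!}\int_{\mathbb X^n}|A_n(q;x_1,\ldots,x_n)|\,|\zeta[\nu]|(\dd x_1)\cdots|\zeta[\nu]|(\dd x_n)\le b(q)<\infty,
\]
so $\zeta[\nu]\in\mathscr D(A)$, and hence $\mathscr U_b:=\zeta(\mathscr V_b)\subset\mathscr D(A)$.

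Next I would establish the key identity $\rho\circ\zeta=\mathrm{id}$ on $\mathscr V_b$. For $\nu\in\mathscr V_b$, substitute $\zeta[\nu](\dd x_i)=T_{x_i}^\circ(\nu)\,\nu(\dd x_i)$ into the definition of $A(q;\zeta[\nu])$; the rearrangement of the sum and the integrals is legitimate by Fubini--Tonelli and the absolute integrability just proved. The outcome is precisely the argument of the exponential on the right-hand side of the fixed point equation~\eqref{tree-eq}, which by Theorem~\ref{thm:main1} holds as an identity of absolutely convergent series. Therefore $T_q^\circ(\nu)=\e^{A(q;\zeta[\nu])}$, and consequently
\[
\rho[\zeta[\nu]](\dd q)=\e^{-A(q;\zeta[\nu])}\,\zeta[\nu](\dd q)=\e^{-A(q;\zeta[\nu])}T_q^\circ(\nu)\,\nu(\dd q)=\nu(\dd q).
\]

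To conclude, I would assemble these pieces. The identity $\rho\circ\zeta=\mathrm{id}_{\mathscr V_b}$ shows that $\rho$ restricted to $\mathscr U_b$ is onto $\mathscr V_b$ and that $\zeta:\mathscr V_b\to\mathscr U_b$ is injective, hence bijective (surjectivity onto $\mathscr U_b$ being automatic from the definition of $\mathscr U_b$). For injectivity of $\rho$ on $\mathscr U_b$, write $z\in\mathscr U_b$ as $z=\zeta[\nu]$ with $\nu\in\mathscr V_b$; then $\rho[z]=\nu$ by the previous step, so $\zeta[\rho[z]]=\zeta[\nu]=z$, i.e.\ $\zeta\circ\rho=\mathrm{id}_{\mathscr U_b}$. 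Thus $\rho:\mathscr U_b\to\mathscr V_b$ is a bijection with inverse $\zeta$. I do not anticipate a genuine obstacle here: once Theorem~\ref{thm:main1} is in hand the argument is essentially formal, and the only points requiring care are bookkeeping — verifying that $\zeta[\nu]$ is a genuine element of $\mathfrak M_\C$ (finite Radon-Nikod{\'y}m density and $\sigma$-finiteness, both immediate from $|T^\circ_q(\nu)|\le\e^{b(q)}$) and justifying the interchange of summation and integration identifying $A(q;\zeta[\nu])$ with the exponent in~\eqref{tree-eq}, which is covered by the absolute convergence guaranteed by~\eqref{suff1}. The one conceptual choice that makes everything slot together is defining $\mathscr U_b$ as $\zeta(\mathscr V_b)$ rather than attempting an intrinsic description.
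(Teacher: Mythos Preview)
Your proposal is correct and follows essentially the same approach as the paper: define $\mathscr U_b:=\zeta(\mathscr V_b)$, use Theorem~\ref{thm:main1} to get $\mathscr U_b\subset\mathscr D(A)$, compute $\rho[\zeta[\nu]]=\nu$ via the fixed point equation~\eqref{tree-eq}, and then unwind the bijection. You are slightly more explicit than the paper about the bookkeeping (measurability of $q\mapsto T_q^\circ(\nu)$, the Fubini justification for identifying $A(q;\zeta[\nu])$ with the exponent in~\eqref{tree-eq}), but the logical skeleton is identical.
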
	

\begin{proof}
	Let $\mathscr U_b$ be the image of $\mathscr V_b$ under $\zeta$. By Theorem~\ref{thm:main1}, the set $\mathscr U_b$ is contained in $\mathscr D(A)$, in particular if $z=\zeta[\nu]$ with $\nu \in \mathscr V_b$, then $\rho[z]$ is well-defined with 
	\begin{align*}
		\rho(\dd q;z) & = \e^{-A(q;z)} z(\dd q) = \e^{-A(q;\zeta[\nu])} \zeta(\dd q;\nu)  \\
			& =  \e^{- A(q;\zeta[\nu])} T_q^\circ(\nu)  \nu(\dd q)= \nu(\dd q). 
	\end{align*}
	For the last identity we have used the fixed point equation~\eqref{tree-eq}.
Thus we have checked that if $z = \zeta[\nu]$, with $\nu\in \mathscr V_b$, then $\rho[z]= \nu$. 
	  Conversely, if $\nu = \rho[z]$ with $z\in \mathscr U_b$, then by definition of $\mathscr U_b$ there exists $\mu \in \mathscr V_b$ such that $z = \zeta[\mu]$, hence $\nu = \rho[z] = \rho [\zeta[\mu]] = \mu\in \mathscr V_b$  and $z = \zeta[\mu]= \zeta[\nu]$. 	
\end{proof} 

\noindent Finally we provide a combinatorial formula for the function $T_q^\circ(\nu)$ appearing in the inverse $\zeta[\nu]$. Consider a genealogical tree that keeps track not only of mother-child relations, but also of groups of  siblings born at the same time. This results in a tree for which children of a vertex are partitioned into cliques (singletons, twins, triplets, etc.). Accordingly for $n\in \N$ we define $\mathcal{TP}_n^\circ$ as the set of pairs $(T, (P_i)_{0\leq i \leq n})$ consisting of: 
\begin{itemize}
	\item A tree $T$ with vertex set $[n]:=\{0,1,\ldots,n\}$. The tree is considered rooted in $0$ (the ancestor).
	\item For each vertex $i\in \{0,1,\ldots,n\}$, a set  partition $ P_i$ of the set of children\footnote{The members of the partition are assumed to be non-empty, except we consider the partition of the empty set.} of $i$. If $i$ is a leaf (has no children), then we set $P_i = \varnothing$. 
\end{itemize} 
For $x_0,\ldots, x_n\in \mathbb X$, we define the weight of an enriched tree $(T, (P_i)_{0\leq i \leq n})\in \mathcal{TP}_n^\circ$ as 
\be \label{eq:treeweight}
	w\bigl( T,(P_i)_{0\leq i \leq n}; x_0,x_1,\ldots, x_n\bigr)  := \prod_{i=0}^n \prod_{J\in P_i} A_{\#J +1}\bigl(x_i;(x_j)_{j\in J}\bigr)
\ee
with $\prod_{J\in \varnothing} =1$. So the weight of an enriched tree is a product over all cliques of twins, triplets, etc., contributing each a weight that depends on the variables $x_j$ of the clique members and the variable $x_i$ of the parent. 

\begin{prop} \label{prop:trees}
	The family of power series $(T_q^\circ)_{q\in\mathbb X}$ from Lemma~\ref{lem:treesolution} is given by 
	$$
		T_q^\circ(z) = 1+ \sum_{n=1}^\infty \frac{1}{n!} \int_{\mathbb X^n} \sum_{(T,(P_i)_{i=0,\ldots,n})\in \mathcal{TP}_n^\circ} 
		w\bigl( T,(P_i)_{i=0,\ldots,n}; q,x_1,\ldots, x_n\bigr) z^n(\dd \vect x). 
	$$
\end{prop}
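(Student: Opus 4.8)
The plan is to check that the series on the right-hand side satisfies the fixed point equation~\eqref{tree-eq} in the sense of formal power series, and then to invoke the uniqueness assertion of Lemma~\ref{lem:treesolution}. Write $\widehat T_q^\circ$ for the claimed series and let its coefficients be
\bes
	\widehat t_n(q;x_1,\ldots,x_n):=\sum_{(T,(P_i)_{0\le i\le n})\in\mathcal{TP}_n^\circ}w\bigl(T,(P_i)_{0\le i\le n};q,x_1,\ldots,x_n\bigr),
\ees
with $\widehat t_0:=1$ and $w$ the tree weight~\eqref{eq:treeweight}. These functions are measurable, being finite sums of products of the $A_m$'s, and symmetric in $x_1,\ldots,x_n$: a permutation of the non-root labels $\{1,\ldots,n\}$ induces a bijection of $\mathcal{TP}_n^\circ$ under which $w$ transforms by the corresponding permutation of the variables. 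Recalling from the proof of Lemma~\ref{lem:treesolution} that solving~\eqref{tree-eq} is equivalent to the coefficient recursion~\eqref{eq:tnBn}--\eqref{eq:Bn}, it suffices to show that the $\widehat t_n$ satisfy that recursion, whence $\widehat t_n=t_n$ for all $n$. Heuristically this is just the symbolic-method statement ``an enriched tree is a root carrying a set of sibling-cliques, below each sibling a smaller enriched tree'', the outer ``set of'' producing the exponential in~\eqref{tree-eq}.

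The combinatorial input is a two-step peeling of $(T,(P_i))\in\mathcal{TP}_n^\circ$, rooted in $0$ with color $q$. \emph{Step one.} The partition $P_0$ splits the children of the root into cliques $K_1,\ldots,K_m$; letting $J_\ell\subseteq\{1,\ldots,n\}$ be $K_\ell$ together with all descendants of members of $K_\ell$, we get $\{J_1,\ldots,J_m\}\in\mathcal P_n$, and the restriction of $(T,(P_i))$ to $\{0\}\cup J_\ell$ is an enriched tree rooted in $0$ whose root has a \emph{single} clique of children. This map is a weight-multiplicative bijection, so
\bes
	\widehat t_n(q;x_1,\ldots,x_n)=\sum_{\{J_1,\ldots,J_m\}\in\mathcal P_n}\prod_{\ell=1}^m\widehat B_{\#J_\ell}\bigl(q;(x_j)_{j\in J_\ell}\bigr),
\ees
where $\widehat B_k(q;x_1,\ldots,x_k)$ is the weight generating function of enriched trees on $\{0,\ldots,k\}$, colored $q,x_1,\ldots,x_k$, whose root has exactly one clique of children. \emph{Step two.} In such a single-clique tree that clique is the entire set $J\subseteq\{1,\ldots,k\}$ of children of the root, so the root contributes a single factor $A_{\#J}\bigl(q;(x_j)_{j\in J}\bigr)$ to $w$, while the remaining vertices distribute among the sets $V_j$ of proper descendants of the children $j\in J$ (empty blocks allowed, for leaf-children), and the sub-enriched-tree rooted in $j$ with color $x_j$ on $\{j\}\cup V_j$ contributes $\widehat t_{\#V_j}\bigl(x_j;(x_v)_{v\in V_j}\bigr)$. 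Again this is a weight-preserving bijection, and it yields for $\widehat B_k$ precisely the formula~\eqref{eq:Bn} with every $t$ replaced by $\widehat t$.

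Combining the two steps, $(\widehat t_n)_{n\in\N}$ obeys exactly the recursion~\eqref{eq:tnBn}--\eqref{eq:Bn}; since $J\neq\varnothing$ forces each $V_j$ to be a proper subset of $\{1,\ldots,k\}$, the function $\widehat B_k$ depends only on $\widehat t_0,\ldots,\widehat t_{k-1}$, so the induction on $n$ from Lemma~\ref{lem:treesolution} applies verbatim and gives $\widehat t_n=t_n$ for every $n$, i.e.\ $\widehat T_q^\circ=T_q^\circ$, which is the assertion. The point that requires genuine care is that the two peeling maps really are weight-preserving bijections: one must verify that every enriched tree decomposes uniquely in the prescribed way and, conversely, that every admissible choice of constituents --- a set partition of $\{1,\ldots,n\}$ with a single-clique tree on each block in step one, a clique of children with an arbitrary sub-enriched-tree hanging below each of them in step two --- reassembles into a unique element of $\mathcal{TP}_n^\circ$, all the while tracking which $\mathbb X$-variable is carried by which vertex. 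Working with labelled trees and keeping the factors $1/n!$ outside the series renders this bookkeeping routine, since no symmetry factors enter the recursion itself.
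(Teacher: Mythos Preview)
Your proof is correct and follows essentially the same route as the paper's own argument: both introduce the same intermediate quantity (your $\widehat B_k$, the paper's $\tilde B_n$) for single-clique-at-the-root enriched trees, establish the same two bijections (step one $\leftrightarrow$ Eq.~\eqref{eq:tnBntilde}, step two $\leftrightarrow$ Eq.~\eqref{eq:Bntilde}), and then appeal to the uniqueness in Lemma~\ref{lem:treesolution}. Your additional remarks on measurability and symmetry of the $\widehat t_n$ are a welcome bit of care not made explicit in the paper.
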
 

\begin{proof}
	We check that the generating function of the weighted enriched trees satisfies~\eqref{tree-eq}. Functional equations for generating functions of labelled trees are standard knowledge~\cite{bergeron-labelle-leroux1998book}, we provide a self-contained proof for the reader's convenience. 
	Define 
	$$
		\tilde t_n(q;x_1,\ldots,x_n):=  \sum_{(T,(\mathcal P_i)_{i=0,\ldots,n})\in \mathcal{TP}_n^\circ} 
		w\bigl( T,(P_i)_{0\leq i \leq n}; q,x_1,\ldots, x_n\bigr).
	$$
	Further define $\tilde B_n(q;x_1,\ldots,x_n)$ but restricting the sum to enriched trees for which $\#P_0 =1$ (all children of the root belong to the same clique). Further set $t_0 =1$ and $\tilde B_0 =0$. 
	For $V\subset \N$ a finite non-empty set, define $\mathcal{TP}^\circ(V)$ in the same way as $\mathcal {TP}^\circ_n$ but with $\{0,1,\ldots,n\}$ replaced by $\{0\} \cup V$.	For $V=\varnothing$ we define $\mathcal{TP}^\circ(V) =\varnothing$ and assign the empty tree the weight $1$. For non-empty trees, weights $w(R;(x_j)_{j\in V\cup\{0\}})$ are defined in complete analogy with~\eqref{eq:treeweight}. 
	
		Clearly there is a bijection between enriched trees $R\in \mathcal{TP}_n^\circ$ and set partitions $\{J_1,\ldots,J_m\}$ of $[n]:=\{ 1, \ldots , n\}$ together with enriched trees $R_i\in \mathcal {TP}^\circ (J_i)$, $i=1,\ldots,m$ for which all children of the root are in the same clique. Indeed, the number $m$ corresponds to the number of cliques in which the children of the root are divided and the blocks $J_1,\ldots, J_m$ group descendants of the root, where $J_k$ contains the children of the root which are in the $k$-th. clique and all their decedents. The weight of an enriched tree $R$ is equal to the product of the weights of the subtrees $R_i$. Therefore 
	\be \label{eq:tnBntilde}
 		\tilde t_n(q;x_1,\ldots,x_n) =\sum_{m=1}^n \sum_{\{J_1,\ldots, J_m\}\in \mathcal P_n} \prod_{\ell=1}^m \tilde B_{\#J_\ell}\bigl( q;(x_j)_{j\in J_\ell}\bigr). 
 	\ee
Furthermore there is a one-to-one correspondence between, on the one hand, enriched trees where all the children of the root are in the same clique and on the other hand tuples $(J, (V_j)_{j\in J}, (R_j)_{j\in J})$ consisting of non-empty set $J\subset [n]$, an ordered partition $(V_j)_{j\in J}$ of $[n]\setminus J$ (with $V_j =\varnothing$ allowed), and a collection of enriched trees $R_j\in \mathcal {TP}^\circ(V_j)$. Overall, $J$ and $(V_j)_{j\in J}$ give a partition of $[n]$. The set $J$ consists of the labels of the children of the root, that is the one clique which all these children form and for each $j \in J$, the set $V_j$ consists of the labels of the descendants of $j$. ($V_j= \varnothing$ means that $j$ is a leave of the tree) It follows that 
	\be \label{eq:Bntilde}
		\tilde B_n(q;x_1,\ldots,x_n) = \sum_{m=1}^n \sum_{\substack{J\subset [n]  \\ \#J =m}}  A_{m}\bigl(q;(x_j)_{j\in J}\bigr) \sum_{\substack{(V_j)_{j\in J}:\\  \dot \cup_{j\in J} V_j  = [n]\setminus J}} \prod_{j\in J} \tilde t_{\# V_j}\bigl(x_j; (x_v)_{v\in V_j}\bigr). 
	\ee
	It follows from Eqs.~\eqref{eq:tnBntilde} and~\eqref{eq:Bntilde} that the formal power series with coefficients $\tilde t_n$ solves~\eqref{tree-eq}, therefore  Lemma~\ref{lem:treesolution} yields $\tilde t_n = t_n$. 
\end{proof}

\subsection{Scale of Banach spaces. Banach inversion}  \label{sec:banach} 
Formally, one is tempted to say that $\rho[z]$ is given by a power series with leading order $z$, hence differentiable with derivative at the origin given by the identity matrix; therefore the existence and regularity of the inverse map should follow from some general inverse function theorem. When $\mathbb X$ is finite so that $z$ can be identified with a finite vector $(z_x)_{x\in \mathbb X}\in \C^n$, with $n=\#\mathbb X$, this can be implemented and is indeed a standard ingredient for the virial expansion for single-species systems~\cite{lebowitz-penrose1964virial}. 

For infinite spaces $\mathbb X$ one may try a Banach inversion theorem. This works in some cases (see Theorem~\ref{thm:banach} below), but there are situations where the Banach inversion theorem is doomed to fail,
as illustrated by the following example. The example is inspired by concrete features of the multi-species Tonks model~\cite{jansen2015tonks} for rods of unbounded lengths $\ell_k = k$. 

\begin{example} \label{ex:unbounded} 
	Let $\mathbb X= \N$ and identify measures on $\mathbb X$ with sequences $(z_k)_{k\in \N}$. Consider the map $(z_k)\mapsto (\rho_k)$ given by 
	$$
		\rho_1 = z_1,\quad \forall k \geq 2: \ \rho_k = z_k \exp(- k z_1). 
	$$
	Let $\ell^\infty(\N)$ be the space of bounded complex-valued sequences equipped with the supremum norm and  $X_{c}$ the space of sequences $(\nu_k)$ with $||\nu||_c:= \sup_{k\in \N}|\nu_k| \exp( - ck) <\infty$, for some fixed scalar $c>0$. 
	We may view $(z_k)\mapsto (\rho_k)$ as a map from the open ball $B(0,c)\subset \ell^\infty(\N)$ to $X_c$. 
	The derivative $\mathrm D\rho(0)$ is the identity map or more precisely, the embedding $\iota:\ell^\infty(\N)\to X_c$, $\iota(h):= h$. It is injective and continuous but it does not have a continuous inverse, therefore Banach inversion theorems are not applicable. The issue arises because the norms $||\cdot||_\infty$ and $||\cdot||_c$ are not equivalent. A target space with inequivalent norm is needed because, for every $z_1<0$---no matter how small---$|\rho_k|\gg |z_k|$ as $k\to \infty$.
\end{example}	
	
It turns out that the natural analytic framework for our inversion theorem uses not a single Banach space, but instead a scale of Banach spaces, as is the case for the Nash-Moser theorem~\cite{hamilton1982, secchi2016}. We explain this aspect in more detail here as this clarifies the issues raised in~\cite[Section 2.2]{jttu2014} and~\cite[Theorem 2.8]{jansen2015tonks}.

Let us fix a reference measure $m\in \mathfrak M(\mathbb X, \mathcal X)$  and we restrict to measures that are absolutely continuous with respect to $m$. Remember that  $\rho[z](\dd x)$ is absolutely continuous  with respect to the measure $z(\dd x)$, so if $z$ is absolutely continuous with respect to $m$, then so is $\rho[z]$. We work with the Radon-Nikod{\'y}m derivatives rather than the measures and write 
$$
	z(\dd x) = z(x) m(\dd x),\quad \rho(\dd x;z) = \rho(x;z) m(\dd x), 
$$
similarly for $\nu$ and $\zeta$. Fix a weight function $b:\mathbb X\to \R_+$ and assume that $m$ satisfies condition~\eqref{suff1}. Let $L^\infty(\mathbb X,m)$ be the space of bounded functions (precisely, equivalence classes up to $m$-null sets), equipped with the supremum norm 
$$
		||h||_\infty:= \esssup_{x\in \mathbb X} |h(x)|.
$$
Write $B_r(0)$ for open balls of radius $r$ centered at $0$.
For $h:\mathbb X\to \C$ measurable and $k\in \Z$, define the weighted supremum norm
$$
		||h||_{kb}:= ||\e^{kb}h ||_\infty = \esssup_{x\in \mathbb X} |h(x)| \e^{kb(x)}
$$
and let $Y_{kb}$ be the associated Banach space. Notice the inclusions 
$$
	\ldots \subset Y_{2b}\subset Y_b\subset L^\infty(\mathbb X,m) \subset Y_{-b}\subset Y_{-2b}\subset \ldots 
$$
When $b$ is essentially bounded, then the inclusions are equalities and the norms $||\cdot ||_{kb}$, $||\cdot||_\infty$ are equivalent. For $||b||_\infty = \infty$, the inclusions are strict and the norms are inequivalent. 
Let $B(0,r)$ and $B_{kb}(0,r)$ be the open balls of radius $r$, centered at the origin, in  $L^\infty(\mathbb X,m)$ and $Y_{kb}$, respectively. 

\begin{prop}  \label{prop:rhoholo}
	Assume that $m\in \mathfrak M$ satisfies condition~\eqref{suff1}. Then the maps 
	\begin{align*}
		\rho &:\  B_{kb}(0,1) \to B_{(k-1)b}(0,1)\qquad (k\geq - 1)\\
		\zeta &:\  B_{kb}(0,1) \to B_{(k-1)b}(0,1)\qquad (k\geq 0)
	\end{align*} 
	are holomorphic, as maps between the Banach spaces $Y_{kb}$ and $Y_{(k-1)b}$. Moreover we have $\rho[\zeta[\nu]] = \nu$ and $\zeta[\rho[z]] = z$ for all $\nu\in B(0,1)$ and $z \in B_{-b}(0,1)$. 
\end{prop}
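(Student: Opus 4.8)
The plan is to prove Proposition~\ref{prop:rhoholo} in three stages: first establish that $\rho$ and $\zeta$ map the indicated balls into the indicated balls, then establish holomorphy, and finally verify that the two maps are mutual inverses on the stated domains.

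\textbf{Step 1: mapping properties.} For $\rho$, suppose $z\in B_{kb}(0,1)$ with $k\geq -1$, so $|z(x)|\leq \e^{-kb(x)}$ $m$-a.e. Then, writing out $A(q;z)$ as in~\eqref{Adef} and using $\frac{1}{n!}\int |A_n(q;\vect x)|\,|z|(\dd x_1)\cdots|z|(\dd x_n)\leq \frac{1}{n!}\int |A_n(q;\vect x)|\e^{\sum_j b(x_j)}\,m(\dd x_1)\cdots m(\dd x_n)$ when $k\geq -1$ (since then $\e^{-kb(x)}\leq\e^{b(x)}$), condition~\eqref{suff1} for $m$ gives $|A(q;z)|\leq b(q)$, hence $z\in\mathscr D(A)$ and $|\rho(q;z)|=\e^{-\mathrm{Re}A(q;z)}|z(q)|\leq \e^{b(q)}\e^{-kb(q)}=\e^{-(k-1)b(q)}$, i.e.\ $\rho[z]\in B_{(k-1)b}(0,1)$. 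For $\zeta$, suppose $\nu\in B_{kb}(0,1)$ with $k\geq 0$; then $|\nu(x)|\leq\e^{-kb(x)}\leq 1\leq\e^{b(x)}\cdot$something — more carefully, I must check $\nu$ satisfies~\eqref{suff1}. Since $|\nu|(\dd x)\leq \e^{-kb(x)}m(\dd x)\leq m(\dd x)$ for $k\geq 0$, the left side of~\eqref{suff1} for $\nu$ is bounded by the left side of~\eqref{suff1} for $m$, which is $\leq b(q)$; hence $\nu\in\mathscr V_b$ and Theorem~\ref{thm:main1} applies, giving $|T_q^\circ(\nu)|\leq\e^{b(q)}$. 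Thus $|\zeta(q;\nu)|=|T_q^\circ(\nu)|\,|\nu(q)|\leq\e^{b(q)}\e^{-kb(q)}=\e^{-(k-1)b(q)}$, so $\zeta[\nu]\in B_{(k-1)b}(0,1)$.

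\textbf{Step 2: holomorphy.} I would use the standard criterion that a locally bounded map between complex Banach spaces that is Gateaux-holomorphic (holomorphic along every complex line) is holomorphic (Fréchet-analytic); see the references on holomorphy in Banach spaces the paper cites. Local boundedness is exactly the content of Step~1. For Gateaux holomorphy of $\rho$: fix $z_0\in B_{kb}(0,1)$, $h\in Y_{kb}$, and consider $\lambda\mapsto \rho[z_0+\lambda h]$ for small complex $\lambda$; this is a composition of the absolutely convergent power series $z\mapsto A(\cdot;z)$ (polynomial in each slot, so entire along lines, with the dominated-convergence/Fubini arguments justified by~\eqref{suff1}) with $w\mapsto\e^{-w}z(q)$, hence analytic in $\lambda$. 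For $\zeta$ one shows similarly that $\nu\mapsto T_q^\circ(\nu)$ is Gateaux-holomorphic: the coefficients $t_n$ are fixed, the series converges absolutely and locally uniformly by the estimate~\eqref{mainest1} (one gets local uniformity by shrinking the ball slightly and using that~\eqref{suff1} has room to spare, or by a Weierstrass-type argument on the partial sums $S_q^N$), so termwise differentiation in $\lambda$ is legitimate; multiplying by $\nu(q)$ preserves analyticity.

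\textbf{Step 3: mutual inverses.} Here I invoke Theorem~\ref{thm:main2}: for the fixed weight $b$, $\rho:\mathscr U_b\to\mathscr V_b$ is a bijection with inverse $\zeta$. Steps~1--2 show $B_{kb}(0,1)$ (for $k\geq 0$) is contained in $\mathscr V_b$ and $B_{-b}(0,1)\subset\mathscr D(A)$; I must check $B_{-b}(0,1)\subset\mathscr U_b$, i.e.\ that every $z$ with $|z|\leq\e^{b}$ is of the form $\zeta[\nu]$ for some $\nu\in\mathscr V_b$. The natural candidate is $\nu=\rho[z]$: by Step~1 with $k=-1$, $\rho[z]\in B_{0\cdot b}(0,1)=B(0,1)\subset\mathscr V_b$, and then $\zeta[\rho[z]]=z$ follows from the bijection in Theorem~\ref{thm:main2} once we know $z\in\mathscr U_b$ — which is slightly circular, so instead I would argue directly: from $\rho[z]=\nu$ we get $z(\dd q)=\e^{A(q;z)}\nu(\dd q)$, and one verifies by the uniqueness of the formal solution in Lemma~\ref{lem:treesolution} (equivalently, by matching the fixed-point equation~\eqref{tree-eq}) that $\e^{A(q;z)}=T_q^\circ(\nu)$ as convergent series, giving $z=\zeta[\nu]$. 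The identity $\rho[\zeta[\nu]]=\nu$ for $\nu\in B(0,1)$ is exactly the computation in the proof of Theorem~\ref{thm:main2}, using~\eqref{tree-eq}.

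\textbf{Main obstacle.} The routine parts are the norm bookkeeping in Step~1. The genuinely delicate point is Step~2: making precise that $T_q^\circ(\nu)$ depends holomorphically on $\nu$ as a map into a weighted sup-norm space — one needs the convergence in~\eqref{mainest1} to be locally uniform in $\nu$ on the open ball (not merely pointwise on $\mathscr V_b$), which requires exploiting that for $\nu$ in the open unit ball of $Y_{kb}$ with $k\geq 0$ there is strict slack in condition~\eqref{suff1} for $m$, so one can absorb a small perturbation and still apply Theorem~\ref{thm:main1} with a slightly enlarged weight; combined with the locally-bounded-plus-Gateaux criterion this yields Fréchet holomorphy. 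The compatibility of the various domain inclusions ($B_{kb}(0,1)\subset\mathscr V_b$, $B_{-b}(0,1)\subset\mathscr U_b$) with the abstract sets of Theorem~\ref{thm:main2} is the second thing that needs care, but it reduces to the monotone comparison of~\eqref{suff1} for $\nu$ versus for $m$.
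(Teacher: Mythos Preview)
Your Step~1 matches the paper's discussion around~\eqref{rzbbound}. Your Step~2 takes a different route: the paper does not invoke the locally-bounded-plus-Gateaux criterion but instead writes $\rho(q;z)=z(q)\mathcal E(q;z)$, expands $\mathcal E=\sum_n P_n$ with $P_n$ a continuous $n$-homogeneous polynomial, introduces the scalar majorant $M_0(q;r):=1+\sum_n \frac{r^n}{n!}\int|E_n(q;\cdot)|\,\e^{\sum_j b(x_j)}\,m^n(\dd\vect x)$, bounds $M_0(q;r)\leq\e^{b(q)}$ for $r\leq 1$ from~\eqref{suff1}, and then applies the one-variable Cauchy inequality to $r\mapsto M_0(q;r)$ to bound each $P_n$ in the relevant operator norm. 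This yields uniform convergence of the power series directly, hence holomorphy via Definition~\ref{def:holo2}; a short explicit computation of the Fr{\'e}chet derivative then handles the product with $z(q)$. The same scheme is applied to $\zeta$ with $t_n$ in place of $E_n$. Your approach is viable, but the paper's Cauchy-inequality device is precisely what produces the locally uniform bounds you were reaching for, without any slack argument or appeal to the dual of $L^\infty$.

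Your Step~3 contains a concrete error: you write ``by Step~1 with $k=-1$, $\rho[z]\in B_{0\cdot b}(0,1)=B(0,1)$'', but the proposition gives $\rho:B_{kb}(0,1)\to B_{(k-1)b}(0,1)$, so $k=-1$ yields $\rho[z]\in B_{-2b}(0,1)$, not $B(0,1)$. Hence you cannot conclude $\rho[z]\in\mathscr V_b$, and the appeal to Theorem~\ref{thm:main2} fails at this point. Your fallback direct argument has the same gap: to upgrade the formal identity $\e^{A(q;z)}=T_q^\circ(\rho[z])$ (which is Lemma~\ref{lem:tree-sol2}) to an equality of convergent series you need absolute convergence of $T_q^\circ(\rho[z])$, which is exactly what $\rho[z]\in\mathscr V_b$ would supply but which is not available for arbitrary $z\in B_{-b}(0,1)$. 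The identity $\rho[\zeta[\nu]]=\nu$ for $\nu\in B(0,1)\subset\mathscr V_b$ is unproblematic; it is the reverse identity on all of $B_{-b}(0,1)$ that is delicate---and the paper itself disposes of it with the single sentence ``We only need to prove that the maps are holomorphic'', relying on Theorem~\ref{thm:main2}.
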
 

\noindent The proposition is proven at the end of this section. The inclusions $\rho[B_{kb}(0,1)]\subset B_{(k-1)b}(0,1)$ and 
$\zeta[B_{kb}(0,1)]\subset B_{(k-1)b}(0,1)$ follow from the inequalities
\be \label{rzbbound}
	|\rho(q;z)|\leq |z(q)|\e^{b(q)}, \qquad |\zeta(q;\nu)|\leq |\nu(q)|\e^{b(q)},
\ee
valid for all $z \in \overline{B_{-b}(0,1)}$, $\nu \in \overline{B(0,1)}$, and all $q\in \mathbb X$, assuming $m$ satisfies~\eqref{suff1} by using Theorem~\ref{thm:main1}. The difference to the previous results is that we show here  \emph{uniform} convergence of  the power series expansions of $\rho$ and $\zeta$ in the relevant norms.  

We briefly check~\eqref{rzbbound}. 
If  $z\in \overline{B_{-b}(0,1)}$ then $|z(q)|\leq ||z\e^{-b}||_\infty \e^{b(q)}\leq \e^{b(q)}$ for $m$-almost all $q$. Since $m$ satisfies condition~\eqref{suff1}, it follows by Theorem~\ref{thm:main1} that the measure $z(\dd q) = z(q) m(\dd q)$  is in the domain of convergence $\mathscr D(A)$ of $A$ (though it does not fulfill condition~\eqref{suff1}) and $|A(q;z)| \leq b(q)$, consequently 
$|\rho(q;z)|\leq \e^{b(q)} |z(q)|$. If $\nu \in \overline{B(0,1)}$, then, using again that $m$ satisfies condition~\eqref{suff1}, we see that in this case the measure $\nu(\dd q) = \nu(q) m(\dd q)$ satisfies condition~\eqref{suff1} as well and the bound~\eqref{mainest1} yields $|\zeta(q;\nu)| = |\nu(q) T_q^\circ(\nu)|\leq |\nu(q)|\e^{b(q)}$. 

It is an immediate consequence of Proposition~\ref{prop:rhoholo} that $\rho$ is a bijection from $\mathcal U_b:= \zeta[B(0,1)]\subset B_{-b}(0,1)$ onto $B(0,1)$. If $b$ is essentially bounded, then all norms are equivalent, hence $\rho$ and $\zeta$ are holomorphic as maps in $L^\infty(\mathbb X,m)$ and $\mathcal U_b=\rho^{-1}(B(0,1))$ is open in the non-weighted sup norm $||\cdot||_\infty$. Moreover we have the inclusion
$$
	\mathcal U_b\subset\{z:\, ||z\e^{-b}||_\infty< 1\}\subset \{z:\, ||z||_\infty < \e^{||b||_\infty}\}
$$
and we obtain the following corollary. 

\begin{cor} \label{cor:banach}
	Assume that $m\in \mathfrak M$ satisfies condition~\eqref{suff1} and in addition $||b||_\infty <\infty$. Then 
	$\rho[\cdot]$ maps some open subset $\mathcal U_b$  of $B(0,\e^{||b||_\infty})\subset L^\infty(\mathbb X,m)$ biholomorphically onto $B(0,1)$, and the inverse map is $\zeta$. 
\end{cor}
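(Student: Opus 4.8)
The plan is to obtain the corollary as a specialization of Proposition~\ref{prop:rhoholo} to the regime $||b||_\infty<\infty$, in which the entire scale of Banach spaces collapses to a single space. First I would record that for $b\ge 0$ with $||b||_\infty<\infty$ one has, for every $k\in\Z$ and every measurable $h$,
\[
	\e^{-|k|\,||b||_\infty}\,||h||_\infty\ \le\ ||h||_{kb}\ \le\ \e^{|k|\,||b||_\infty}\,||h||_\infty ,
\]
so each norm $||\cdot||_{kb}$ is equivalent to $||\cdot||_\infty$ and $Y_{kb}=L^\infty(\mathbb X,m)$ as topological vector spaces. Since holomorphy between Banach spaces is unaffected by replacing norms with equivalent ones, Proposition~\ref{prop:rhoholo} then tells us that $\rho$ and $\zeta$ are holomorphic maps between open subsets of the single space $L^\infty(\mathbb X,m)$, and the balls $B_{-b}(0,1)=\{z:||\e^{-b}z||_\infty<1\}$, $B(0,1)$ and $B(0,\e^{||b||_\infty})$ are genuine open subsets of $L^\infty(\mathbb X,m)$.

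Next I would set $\mathcal U_b:=\zeta[B(0,1)]$ and exploit the two identities $\rho[\zeta[\nu]]=\nu$ for $\nu\in B(0,1)$ and $\zeta[\rho[z]]=z$ for $z\in B_{-b}(0,1)$ supplied by Proposition~\ref{prop:rhoholo}. From the bound $|\zeta(q;\nu)|\le|\nu(q)|\e^{b(q)}$ of~\eqref{rzbbound} one gets $||\e^{-b}\zeta[\nu]||_\infty\le||\nu||_\infty<1$ for $\nu\in B(0,1)$, hence $\mathcal U_b\subset B_{-b}(0,1)$, while $\rho[\zeta[\nu]]=\nu$ gives $\rho[\mathcal U_b]\subset B(0,1)$. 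Conversely, if $z\in B_{-b}(0,1)$ satisfies $\rho[z]\in B(0,1)$, then $z=\zeta[\rho[z]]\in\mathcal U_b$; therefore $\mathcal U_b=\{z\in B_{-b}(0,1):\rho[z]\in B(0,1)\}=\rho^{-1}(B(0,1))$, the preimage of the open set $B(0,1)$ under the continuous (indeed holomorphic) map $\rho$ restricted to the open set $B_{-b}(0,1)$, hence open in $L^\infty(\mathbb X,m)$. The same two identities make $\rho|_{\mathcal U_b}:\mathcal U_b\to B(0,1)$ and $\zeta|_{B(0,1)}:B(0,1)\to\mathcal U_b$ mutually inverse bijections; since both are holomorphic, $\rho|_{\mathcal U_b}$ is biholomorphic with inverse $\zeta$.

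Finally I would check the claimed location of $\mathcal U_b$: for $z\in\mathcal U_b$ we have $c:=||\e^{-b}z||_\infty<1$, so $|z(q)|\le c\,\e^{b(q)}\le c\,\e^{||b||_\infty}$ for $m$-almost every $q$, whence $||z||_\infty\le c\,\e^{||b||_\infty}<\e^{||b||_\infty}$ and $\mathcal U_b\subset B(0,\e^{||b||_\infty})$. This finishes the argument. I do not expect a genuine obstacle here: the substantive content---holomorphy of $\rho$ and $\zeta$ along the scale and the two inversion identities---is already established in Proposition~\ref{prop:rhoholo}, and the only points requiring a little attention are the norm equivalence that collapses the scale when $||b||_\infty<\infty$ and the identity $\mathcal U_b=\rho^{-1}(B(0,1))$, which is exactly what promotes ``the image of a ball under $\zeta$'' to an open subset of $L^\infty(\mathbb X,m)$.
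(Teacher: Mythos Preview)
Your proposal is correct and follows essentially the same approach as the paper: define $\mathcal U_b:=\zeta[B(0,1)]$, use the norm equivalence from $||b||_\infty<\infty$ to collapse the scale to $L^\infty(\mathbb X,m)$, identify $\mathcal U_b=\rho^{-1}(B(0,1))$ (hence open), and read off the inclusion $\mathcal U_b\subset B_{-b}(0,1)\subset B(0,\e^{||b||_\infty})$. You simply spell out in more detail the step $\mathcal U_b=\rho^{-1}(B(0,1))$ via the two inversion identities from Proposition~\ref{prop:rhoholo}, which the paper leaves implicit.
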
 

Corollary~\ref{cor:banach} points out a situation where Banach inversion does work, which raises the question whether a similar result can be obtained directly, bypassing the introduction of a weight function $b$. This is indeed possible. 
Let us fix a reference measure $m$ as before but drop the requirement that $m$ satisfies \eqref{suff1}. 
Set
\be \label{suffbounded}
	M(r):= \esssup_{q\in \mathbb X}\ \sum_{n=1}^\infty \frac{r^n}{n!} \int_{\mathbb X^n} \bigl|A_n(q;x_1,\ldots, x_n)\bigr| \,   m(\dd x_1)\cdots m(\dd x_n)<\infty.
\ee
and let 
\be \label{Rdef}
	R:=\sup\{r\geq 0\mid M(r)<\infty\}.
\ee

\begin{theorem} [Banach inversion] \label{thm:banach}
	Assume that~\eqref{suffbounded} holds true for some $r>0$ and let 
	$R>0$ be as in~\eqref{Rdef}. Let 
	$$
		P:=\frac18\,  \sup_{0<r< R}  r \e^{-M(r)}.
	$$
	Then the functional $\rho$ maps some open neighborhood of the origin $\mathcal O\subset B(0,R)\subset L^\infty(\mathbb X,m)$ biholomorphically onto the open ball $B(0,P)$. 	
\end{theorem}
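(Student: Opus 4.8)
The plan is to deduce Theorem~\ref{thm:banach} from the weighted result, Corollary~\ref{cor:banach}, by choosing a constant weight function tuned to the data $M(r)$. First I would observe that if $b\equiv \beta$ is a constant, then condition~\eqref{suff1} for the reference measure $rm$ (the measure with density $r$ with respect to $m$, for some $r<R$) reads
\[
	\sum_{n=1}^\infty \frac{r^n \e^{n\beta}}{n!}\int_{\mathbb X^n}|A_n(q;x_1,\ldots,x_n)|\, m(\dd x_1)\cdots m(\dd x_n)\leq \beta
\]
for $m$-almost all $q$, i.e.\ $M(r\e^\beta)\leq \beta$. Hence for any $r'<R$ and any $\beta\geq M(r')$ the measure $m' := (r'\e^{-\beta})\,m$ satisfies~\eqref{suff1} with the constant weight $b\equiv\beta$, and $\|b\|_\infty = \beta<\infty$. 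Applying Corollary~\ref{cor:banach} to this $m'$ gives that $\rho$ maps an open subset $\mathcal U\subset B(0,\e^\beta)\subset L^\infty(\mathbb X, m')$ biholomorphically onto $B(0,1)$ in $L^\infty(\mathbb X,m')$.

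Next I would translate this back to the original normalization. Writing densities with respect to $m$ rather than $m'$ multiplies them by the constant $r'\e^{-\beta}$, so a ball of radius $s$ in $L^\infty(\mathbb X,m')$ corresponds to a ball of radius $r'\e^{-\beta}s$ in $L^\infty(\mathbb X,m)$; moreover the map $\rho$ is the \emph{same} pointwise map — one checks directly from~\eqref{eq:rhoa} that $\rho[z]$ does not depend on the choice of reference measure, only on the measure $z$ itself. Therefore $\rho$ maps an open neighborhood of the origin inside $B(0,r'\e^{-\beta}\e^{\beta}) = B(0,r')\subset L^\infty(\mathbb X,m)$ biholomorphically onto $B(0, r'\e^{-\beta})$. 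Optimizing over the free parameters $r'<R$ and $\beta\geq M(r')$: for fixed $r'$ the radius $r'\e^{-\beta}$ is largest when $\beta = M(r')$, yielding a biholomorphic image $B(0, r'\e^{-M(r')})$, and then taking the supremum over $r'<R$ produces the ball of radius $\sup_{0<r<R} r\e^{-M(r)}$.

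The remaining discrepancy is the factor $\tfrac18$ in the statement of $P$, which I would attribute to the proof of Corollary~\ref{cor:banach} (equivalently Proposition~\ref{prop:rhoholo}) not actually delivering a full ball of radius $\e^{\|b\|_\infty}$ in the domain but only a neighborhood of the origin whose image is the \emph{unit} ball, together with the standard quantitative inverse function theorem in Banach spaces: from the holomorphy of $\rho$ on $B(0,1)$ with $\mathrm D\rho(0)=\mathrm{id}$ and the a~priori bound~\eqref{rzbbound}, Cauchy estimates control $\|\mathrm D\rho(z)-\mathrm{id}\|$ on a smaller ball, and the usual contraction-mapping argument then guarantees biholomorphy onto a ball whose radius is a fixed fraction (here $\tfrac18$) of the radius one started with. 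So concretely I would: (i) record the scaling invariance of $\rho$ under change of reference measure; (ii) for each admissible pair $(r',\beta)$ with $\beta\geq M(r')$ invoke Corollary~\ref{cor:banach} on $m'=r'\e^{-\beta}m$; (iii) carry the quantitative version through so that the image ball in $m$-normalization has radius $\tfrac18 r'\e^{-\beta}$ — perhaps after first rescaling to a ball of radius $\tfrac12$ or similar — and (iv) optimize $\beta=M(r')$ and $\sup_{r'<R}$.

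The main obstacle I anticipate is bookkeeping the quantitative inverse function estimate so that the constant comes out exactly $\tfrac18$ rather than some other explicit fraction: one must be careful about which ball the holomorphy of $\rho$ holds on, what bound on $\|\mathrm D\rho - \mathrm{id}\|$ the Cauchy estimate gives on a shrunken ball, and how the open mapping / contraction argument converts that into the radius of the image. The change-of-reference-measure step, by contrast, is essentially a one-line rescaling once one notices that~\eqref{eq:rhoa} is intrinsic to the measure $z$; and the optimization over $(r',\beta)$ is elementary calculus. I would also double-check that $\mathcal O$ can be taken genuinely open in the unweighted sup norm — this is where $\|b\|_\infty<\infty$ (here trivially true, $b$ constant) is used, exactly as in the passage from Proposition~\ref{prop:rhoholo} to Corollary~\ref{cor:banach}.
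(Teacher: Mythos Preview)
Your route through Corollary~\ref{cor:banach} and rescaling of the reference measure is genuinely different from the paper's proof, and in fact it proves \emph{more} than Theorem~\ref{thm:banach} states. The paper does not pass through Corollary~\ref{cor:banach} at all; it applies Harris's quantitative inverse function theorem (Theorem~\ref{thm:holoinv}) directly: $\rho$ is holomorphic on $B(0,r)$, bounded there by $r\e^{M(r)}$, with $\mathrm D\rho(0)=\mathrm{id}$, so Theorem~\ref{thm:holoinv} with parameters $R=r$, $M=r\e^{M(r)}$, $a=1$ gives biholomorphy from $B(0,\tfrac14 r\e^{-M(r)})$ onto a domain covering $B(0,\tfrac18 r\e^{-M(r)})$; then optimize over $r$. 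The factor $\tfrac18$ is exactly the constant coming out of Harris's theorem.

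Your computation through step~(ii) is correct and, after translating back to $m$-normalization and optimizing $\beta=M(r')$, already delivers the image ball of radius $\sup_{r'<R} r'\e^{-M(r')}=8P$. The paper makes precisely this comparison in the Remark immediately following Theorem~\ref{thm:banach}: it calls $P'$ the radius obtained via Corollary~\ref{cor:banach} and proves $P'=8P$. So your step~(iii) is misguided: you should \emph{not} layer an additional quantitative inverse function argument on top of Corollary~\ref{cor:banach} to manufacture the missing $\tfrac18$. That factor is an artifact of the direct Banach-inversion route and represents a loss, not something your argument has failed to account for. Your approach simply yields the stronger conclusion, and Theorem~\ref{thm:banach} as stated follows by restricting to $\rho^{-1}(B(0,P))$. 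What your route buys is the better constant; what the paper's route buys is independence from the machinery of Section~\ref{sec:general} (the whole point of Theorem~\ref{thm:banach} is to show what a naive Banach inversion gives, for comparison).
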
 

\begin{proof}
	The map $\rho: B(0,R)\to L^\infty(\mathbb X,m)$ is holomorphic. The proof of the holomorphicity is similar to the proof of Proposition~\ref{prop:rhoholo} and therefore omitted. 
	The derivative at the origin is the identity: $\mathrm D\rho(0) = \mathrm{id}$. On $B(0,r)\subset B(0,R)$, the map is bounded by $r\exp( M(r))$. 	Therefore, by Theorem~\ref{thm:holoinv}, for each $r\in (0,R)$, the functional $\rho$ maps 	 the open ball $B(0,\frac14 r\e^{-M(r)})\subset L^\infty(\mathbb X,m)$ biholomorphically onto a domain covering $B(0,\frac18 r \e^{- M(r)})$. We optimize over $r$ and obtain the theorem.
\end{proof} 

\begin{remark}
Let us compare the radius of convergence of the inverse function $P$ which we obtained with the technique of Banach inversion theorem with the convergence results we obtain with the new inversion technique described in Section~\ref{sec:general}, namely Corollary~\ref{cor:banach}. Let us call $P'$ the radius of convergence in the latter case. We will show that $P' =8P$ and thus even in those situations where a direct application of Theorem~\ref{thm:holoinv} is possible, it yields a bound that is worse than ours.
 
Let us first derive an expression of $P'$ in terms of $M$ as defined in \eqref{suffbounded}. If $m$ satisfies condition~\eqref{suff1} with $||b||_\infty <\infty$, then $M(1) \leq ||b||_\infty<\infty$. Conversely, assume $M(s)<\infty$ for some $s>0$ and consider constant weight functions $b(q)\equiv b>0$. Then, for every $b>0$, choosing $s>0$ small enough we may assume $M(s\e^b)\leq b$ and then the rescaled measure $sm$ satisfies condition~\eqref{suff1}. Noting that 
$$
	\left\{\mu \in \mathfrak M_\C:\, \left\|\frac{\dd \mu}{\dd (sm)}\right\|_\infty <1\right\} = \left\{\mu \in \mathfrak M_\C:\, \left\|\frac{\dd \mu}{\dd m}\right\|_\infty <s\right\},
$$
we deduce from Corollary~\ref{cor:banach} that $B(0,s)$ is contained in the domain of convergence of the density expansions. An optimization over $b$ and $s$ shows that the domain of convergence contains the open ball $B(0,P')$ with radius 
$$
	P':= \sup_{b>0}\sup\{s>0 \mid M(s\e^b) \leq b\}. 
$$
Below we check that $P'=8 P$. 
\end{remark} 

\begin{proof}[Proof of $P' = 8 P$]
	Let $\eps>0$ and $s\geq P'- \eps$. By definition of $P'$, there exists $b>0$ such that $M(s\e^b) \leq b$. Set $r:= s\e^b$. Then $M(r)\leq b<\infty$, thus $r\leq R$ and 
	$$
		r\e^{-M(r)} \geq r \e^{-b} = s\geq P'-\eps.
	$$
	It follows that $8 P\geq P'$. Conversely, let $s \geq 8P-\eps$. By definition of $P$ there exists $r\in (0,R)$ such that $s\leq r\exp( - M(r))$, hence $1\leq \exp( M(r))\leq \frac{r}{s}$. Set $b:= \log \frac{r}{s}$, then $b\geq 0$, $r= s\e^b$, and 
	$$M(s\e^b) = M(r) \leq \log \frac{r}{s}\leq b. $$
	It follows that $P'\geq s \geq 8 P-\eps$. We let $\eps\searrow0$ and deduce $P' = 8P$. 
\end{proof} 

\begin{proof}[Proof of Proposition~\ref{prop:rhoholo}]
	We only need to prove that the maps are holomorphic. 
	Consider first the map $\rho$.
	We have $\rho(q;z) = z(q)\mathcal E(q;z)$ with 
	\be \label{eexp}
		\mathcal E[q](z) = \mathcal E(q;z)=1+\sum_{n=1}^\infty \frac{1}{n!}\int_{\mathbb X^n} E_n(q;x_1,\ldots,x_n) z(x_1)\cdots z(x_n) m^n(\vect \dd x)
	\ee
	and 
	$$
		E_n(q;x_1,\ldots,x_n) = \sum_{m=1}^n\sum_{\{V_1,\ldots, V_m\}\in \mathcal P_n} \prod_{\ell=1}^m A_{\#V_\ell}\bigl( (x_j)_{j\in V_\ell}\bigr),
	$$
	see Appendix~\ref{ApFormal}, Eq.~\eqref{eq:formal-exponential}. We show first that $\mathcal E:B_{-b}(0,1)\to Y_{-b}$ is holomorphic, by proving that the series~\eqref{eexp} converges uniformly in the relevant operator norms. 
	Set 
	$$
		M_0(q;r):=1+\sum_{n=1}^\infty \frac{r^n}{n!}\int_{\mathbb X^n} \bigl| E_n(q;x_1,\ldots,x_n) \bigr| \e^{b(x_1)+\cdots + b(x_n)}  m^n(\vect \dd x).
	$$
	Then for all $r\in [0,1]$, we have 
	\be \label{eq:m0bound}
		M_0(q;r) \leq 	\exp \Biggl( \sum_{n=1}^\infty \frac{r^{n}}{n!} \int_{\mathbb X^n}\bigl| A_{n}(q;x_1 , \ldots , x_{n})\bigr|\, \e^{b(x_1)+\cdots b(x_{n})} m^{n}(\dd \vect x)\Biggr)\leq \e^{b(q)}
	\ee
	because $m$ satisfies condition~\eqref{suff1}. In particular, the power series $r\mapsto M_0(q;r)$ has radius of convergence $R\geq 1$. It follows from Cauchy's inequality for the Taylor coefficients of the series that for all $n\in \N$,
	$$
	 \frac{1}{n!}\Bigg|\frac{\partial^n M_0}{\partial r^n} (q;0)\Biggr| \leq  \sup_{t\in \C: |t|=1}|M_0(q;t)| = M_0(q;1)\leq \e^{b(q)}. 
	$$
	Therefore, we can bound
	$$
		\frac{1}{n!}\int_{\mathbb X^n} \bigl|E_n(q;x_1,\ldots,x_n) z(x_1)\cdots z(x_n)\bigr| m^n(\dd \vect x) \leq    \frac{ ||z\e^{-b}||_\infty^n}{n!}\Bigg|\frac{\partial^n M_0}{\partial r^n} (q;0)\Biggr| 
			\leq \e^{b(q)} ||z\e^{-b}||_\infty^n . 
	$$
	As a consequence, the map  $P_n$ defined on $Y_{-b}$  given by 
	$$
		P_n[z](q):= \frac{1}{n!} \int_{\mathbb X^n} E_n(q;x_1,\ldots, x_n) z(x_1)\cdots z(x_n) m^n(\dd \vect x) 
	$$
	satisfies for any $s \in \mathbb{R}$ (we will choose $s$ appropriately at the end)
	\be \label{polybound}
		||\e^{sb} P_n[z]||_{\infty} \leq || e^{(s+1)b}||_\infty \ ||\e^{-b}z ||_\infty^n. 
	\ee
	It follows from the polarization formulas, see e.g. ~\cite{mujica2006}, that  the associated multilinear map from $Y_{-b}^n$ to $Y_{sb}$ 
	is bounded, whenever $s \leq -1$ or $|| b ||_\infty < \infty$, hence $P_n$ is a continuous $n$-homogeneous polynomial (see Definition~\ref{def:holo1}). 
	By~\eqref{polybound}, the series $\mathcal E[z]= \sum_{n=1}^\infty P_n[z]$ converges uniformly in $||\e^{-b} z||_\infty\leq 1$. Therefore, the map 	$z \mapsto \mathcal E[z]$ as a map
	\be \label{ebmap}
		Y_{-b}\supset\{z:\, ||z\e^{-b}||_\infty< 1\}\to Y_{sb},
	\ee
	is holomorphic. For $k\geq -1$, it is also holomorphic as a map
	\be \label{ebmap2}
				Y_{k b}\supset\{z:\, ||z\e^{k b}||_\infty< 1\}\to Y_{sb},
	\ee
	because $Y_{kb}\subset Y_{-b}$ and $||z\e^{-b}||_\infty\leq ||z\e^{kb}||_\infty$. 
	
	Now we return to $\rho(q;z) =z(q) \mathcal E(q;z)$. By~\eqref{eq:m0bound}, we have 
	$$
		|\rho(q;z)| \leq |z(q)|\, |\mathcal E(q;z)|\leq  |z(q)| M_0(q,||ze^{-b}||_\infty)  \leq |z(q)\e^{b(q)}|
	$$
	hence 
	\be \label{oboe}
		||\e^{sb}\rho(z)||_\infty \leq ||z\e^{(s+1)b}||_\infty \leq ||z\e^{kb}||_\infty \leq 1 
	\ee
	whenever $s+1 \leq k$ and  $||z\e^{kb}||_\infty \leq1$. In order to prove the differentiability, let us introduce 
	$$ 
		\bigl(L_z h\bigr):= h(q) \mathcal E(q;z) + z(q) \bigl(\mathrm D\mathcal E(z) h\bigr)(q),
	$$
	which will be shown to be the derivative of $\rho[z]$. Using Cauchy's inequality  we get from the holomorphicity of $\mathcal  E(z)$ that there exists a $C >0$ with 
	  $||\e^{rb} \mathrm D\mathcal E(z) h||_\infty \leq C||h\e^{-b}||_\infty \leq C||h \e^{kb}||_\infty$, whenever $r \leq -1$. Then for $s-k \leq -1$   we get that
	\begin{align*}
		& |\e^{sb} L_z h||_\infty  \leq \left\| e^{kb}|h| \, e^{(s-k)b}| \mathcal E(z)|\, \right\|_\infty
 + \left\|  |z|e^{kb} \, e^{(s-k)b}\bigl|\mathrm D\mathcal E(z) h\bigr| \right\|_\infty 
  \leq \left\| e^{kb} h  \right\|_\infty
 + \left\|   z e^{kb} \right\|_\infty C||h \e^{kb}||_\infty  .\\ & 
\end{align*}
	Thus $L_z: Y_{kb}\to Y_{sb}$ is bounded. Let us show differentiability directly. Write 
	$$ \rho(z+ h) -\rho(z)  =  L_z h  + h \left( \mathcal E(z+h) - \mathcal E(z) \right) + z \left( \mathcal E(z+h)  -  \mathcal E(z)  - \bigl(\mathrm D\mathcal E(z) h\bigr) \right),
	$$
	which can be estimated as 
	\begin{align*}
		& ||\e^{sb} \bigl( \rho(z+h) - \rho(z) - L_z h\bigr) ||_\infty  \leq  ||h \e^{kb}||_\infty  ||\e^{(s-k)b}\bigl( \mathcal E(z+h) - \mathcal E(z) \bigr)  || \\ & + ||z \e^{kb}||_\infty  ||\e^{(s-k)b}\bigl( \mathcal E(z+h) - \mathcal E(z) - \bigl(\mathrm D\mathcal E(z) h \bigr) \bigr)  ||	= o(||h\e^{kb}||_\infty\bigr). 
	\end{align*}
 	Hence $\rho$ is holomorphic on $||z\e^{kb}||_\infty < 1$ with values in $Y_{sb}$ for $s+1 \leq k$. Furthermore, $\rho$  is bounded by $1$ because of~\eqref{oboe}. The result is the extremal case $s=k-1$.

 	The map $\zeta$ is treated in a completely analogous way.
 	We start from $\zeta(q) = \nu(q) T_q^\circ(\nu)$. Since we assume that  $m$ satisfies condition~\eqref{suff1}, we know from Theorem~\ref{thm:main1} that 
 	\be \label{eq:t0bound}
 		1+\sum_{n=1}^\infty \frac{1}{n!}\int_{\mathbb X^n}|t_n(q;x_1,\ldots,x_n)|\e^{b(x_1)+\cdots + b(x_n)} m^n(\dd \vect x) \leq \e^{b(q)}. 
 	\ee
	We can now repeat the reasoning for $\rho[z]$, substituting $\nu$ for $z$, $T_q^\circ(q;\nu)$ for $E(q;z)$, and the bound~\eqref{eq:t0bound} for~\eqref{eq:m0bound}. 
\end{proof} 

\subsection{An equivalent fixed point equation} 

In the proof of Lemma~\ref{lem:tada} in Section~\ref{sec:virial} we need another characterization of the coefficients $t_n(q;x_1,\ldots,x_n)$. 

\begin{lemma}  \label{lem:tree-sol2}
	The family $(T_q^\circ)_{q\in \mathbb X}$ from Lemma~\ref{lem:treesolution} is the unique family of formal power series that solves 
	\be \label{tree-eq2} \tag{$\mathsf{FP}'$}
		1+ \sum_{n=1}^\infty \frac{1}{n!}\int_{\mathbb X^n}  t_n(q;x_1,\ldots,x_n) \prod_{i=1}^n \e^{- A(x_i;z)} z(\dd x_1)\cdots z(\dd x_n) =\e^{A(q;z)}.
	\ee
\end{lemma}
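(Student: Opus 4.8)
The plan is to show that the family $(T_q^\circ)_{q\in\mathbb X}$ solving the original fixed point equation \eqref{tree-eq} also solves \eqref{tree-eq2}, and then to argue uniqueness for \eqref{tree-eq2} separately. The starting point is the observation made in the text preceding Lemma~\ref{lem:treesolution}: if $\zeta[\nu](\dd q) = T_q^\circ(\nu)\,\nu(\dd q)$ is the inverse of $\rho$, then $T_q^\circ(\nu) = \e^{A(q;\zeta[\nu])}$. Substituting $\nu \mapsto \rho[z]$, we have $\zeta[\rho[z]] = z$ at the level of formal power series (this is essentially the formal content of Theorem~\ref{thm:main2}, or can be checked directly by composing formal series), so $T_q^\circ(\rho[z]) = \e^{A(q;z)}$. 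Since $\rho[z](\dd x) = \e^{-A(x;z)} z(\dd x)$, substituting $\nu = \rho[z]$ into the defining expansion of $T_q^\circ$ gives exactly the left-hand side of \eqref{tree-eq2}, and the right-hand side is $\e^{A(q;z)}$. Thus \eqref{tree-eq2} holds.

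Making this rigorous at the level of formal power series requires a little care, since $z\mapsto \rho[z]$ and $\nu \mapsto \zeta[\nu]$ are formal power series substitutions with vanishing constant term and leading term the identity, so composition is well-defined coefficient by coefficient. First I would record that the formal identity $\zeta[\rho[z]] = z$ holds: this follows because $\rho$ and $\zeta$ are mutually inverse formal power series maps — one direction is the fixed-point manipulation used in the proof of Theorem~\ref{thm:main2} reread as an identity of formal series (all the steps there, $\rho(\dd q;\zeta[\nu]) = \e^{-A(q;\zeta[\nu])}\zeta(\dd q;\nu) = \e^{-A(q;\zeta[\nu])}T_q^\circ(\nu)\nu(\dd q) = \nu(\dd q)$, are valid formally by Lemma~\ref{lem:treesolution}), and a formal power series map with identity leading term that has a right inverse also has that as a two-sided inverse. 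Then \eqref{tree-eq2} is obtained by plugging $z$ into the identity $T_q^\circ(\zeta^{-1}[\,\cdot\,])$, i.e.\ evaluating the expansion of $T_q^\circ$ at the argument $\rho[z]$ and using $T_q^\circ(\rho[z]) = \e^{A(q;z)}$.

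For uniqueness, suppose $(S_q)_{q\in\mathbb X}$ is another family of formal power series with $S_q(\nu) = 1 + \sum_n \tfrac{1}{n!}\int s_n(q;\vect x)\,\nu^n(\dd\vect x)$ satisfying \eqref{tree-eq2}. Equation \eqref{tree-eq2} expresses the composition of the series $S_q(\cdot)$ with the fixed substitution $z \mapsto \rho[z](\dd x) = \e^{-A(x;z)}z(\dd x)$ as the known series $\e^{A(q;z)}$. Because the substitution $z\mapsto\rho[z]$ is an invertible formal power series map (identity leading term), composition with it is an invertible operation on formal power series in $z$; hence \eqref{tree-eq2} determines the coefficients $s_n$ uniquely — concretely, one substitutes $z = \zeta[\nu]$ (the formal inverse, which exists by the above) to recover $S_q(\nu) = \e^{A(q;\zeta[\nu])} = T_q^\circ(\nu)$. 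Equivalently, one can run the direct recursion: expanding \eqref{tree-eq2} in powers of $z$, the coefficient of the $n$-th term on the left is $s_n(q;\cdot)$ plus a polynomial expression in $s_1,\dots,s_{n-1}$ and the $A_m$'s, set equal to the known $n$-th coefficient of $\e^{A(q;z)}$, which determines $s_n$ by induction on $n$.

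The main obstacle is purely bookkeeping: one must be careful that the formal power series compositions involved are legitimate — specifically that $z\mapsto\rho[z]$ and $\nu\mapsto\zeta[\nu]$ are formal power series maps on $\mathfrak M_\C$ with no constant term and invertible linearization, so that substitution into another formal series (here $T_q^\circ$, resp.\ $\e^{A(q;\cdot)}$) is well-defined degree by degree and commutes with the algebraic identities. Appendix~\ref{ApFormal} (already invoked for \eqref{eq:Bn} and \eqref{eq:tnBn}) presumably sets up exactly this formalism for composition and exponentiation of such measure-indexed formal series, so I would cite it for the relevant substitution lemma rather than redeveloping it. Once that framework is in place, the proof is a two-line substitution plus the standard ``invertible substitution preserves uniqueness'' argument.
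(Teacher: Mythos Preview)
Your proof is correct and rests on the same core idea as the paper's: exploit the formal inverse relationship between $\rho$ and $\zeta$ via a substitution, together with a triangular-system argument for uniqueness. The organization differs slightly, though, and the paper's choice is a bit more economical. You substitute $\nu\mapsto\rho[z]$ into the identity $T_q^\circ(\nu)=\exp(A(q;\zeta[\nu]))$ to get existence, which forces you to first establish $\zeta[\rho[z]]=z$ as formal series (your ``right inverse is two-sided'' step). The paper instead first asserts existence and uniqueness of a solution $(\tilde T_q^\circ)$ to~\eqref{tree-eq2} by the triangular recursion, and then substitutes in the \emph{other} direction, $z\mapsto\zeta[\nu]=\nu T^\circ(\nu)$, into~\eqref{tree-eq2}: the right side becomes $\exp(A(q;\nu T^\circ(\nu)))=T_q^\circ(\nu)$ directly by~\eqref{tree-eq}, and on the left the factors $\e^{-A(x_i;\zeta[\nu])}T_{x_i}^\circ(\nu)$ collapse to $1$ (again by~\eqref{tree-eq}), leaving $\tilde T_q^\circ(\nu)$. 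This yields $\tilde T_q^\circ=T_q^\circ$ without ever invoking $\zeta\circ\rho=\mathrm{id}$. Your uniqueness argument via substituting $z=\zeta[\nu]$ is in fact exactly this computation, so you already have the paper's shortcut in hand---you could drop the separate existence paragraph and let that substitution do both jobs at once.
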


\noindent Eq.~\eqref{tree-eq2} reflects that $T_q^\circ(\rho[z]) = \exp( A(q;z))$ while the fixed point equation~\eqref{tree-eq}, defining $(T_q^\circ)_{q\in \mathbb X}$, reflects that $T_q^\circ (\nu) = \exp(A(q;\nu T_q^\circ(\nu)))$ because $\rho(\xi(\nu))= \nu$. 

\begin{proof} 
	Let us write $\tilde t_n$ instead of $t_n$ as long as we do not know that the family from Lemma~\ref{lem:treesolution} satisfies~\eqref{tree-eq2}. For the existence and uniqueness of a solution $(\tilde T_q^\circ)_{q\in \mathbb X}$ to~\eqref{tree-eq2},  we note that Eq.~\eqref{tree-eq2} translates into a triangular system of equations for the coefficients $\tilde t_n$. The details are similar to the proof of Lemma~\ref{lem:treesolution} and therefore omitted. 
	
%\tkk{ Next we show $\tilde T_q^\circ = T_q^\circ$. The intuitive reasoning is as follows. Let $\tilde \zeta[\nu](\dd q) := \nu(\dd q) \tilde T_q^\circ[\nu]$. Then 
% $$
% 	\tilde \zeta[\rho[z]] (\dd q)= \rho[z](\dd q) \tilde T_q^\circ[\rho[z]] =  \Bigl( z(\dd q) \e^{-A(q;z)}\Bigr) \e^{A(q;z)}  = z(\dd q)
% $$
% hence $\tilde \zeta$ is a left inverse of $\rho$. By the same reasoning based on~\eqref{tree-eq2}, $\rho[\zeta[\nu]]= \nu$ hence $\zeta$ is a right inverse of $\rho$. But left and right inverse are equal, since 
% $$
% 	\zeta =\mathrm {id} \circ \zeta =(\tilde \zeta\circ \rho) \circ \zeta = \tilde \zeta\circ (\rho\circ \zeta) = \tilde \zeta \circ \mathrm{id}= \tilde \zeta.
%$$
%Thus we should have $\zeta = \tilde \zeta$ and $T_q^\circ = \tilde T_q^\circ$. 
%
%The intuitive argument can be made rigorous by introducing measure-valued formal power series, but we choose to proceed more directly. }{}\comment{I do not what this add beyond the proof and the heuristic explenation at the start, so I suggest to cut it}

 We start from~\eqref{tree-eq2}, written for $\tilde t_n$'s instead of $t_n$'s, and insert $z(\dd q) = \nu(\dd q) T_q^\circ(\nu)$ on both sides. This insertion corresponds precisely to the second notion of composition discussed in Appendix~\ref{ApFormal}, see Eq.~\eqref{fcomp2}, and in particular it is a well-defined operation on formal power series.  The composition yields two formal power series in $\nu$, one for the left and one for the right side, called $L$ and $R$ respectively, and of course we must have $L(q;\nu) = R(q;\nu)$. On the right side we get, by~\eqref{tree-eq}, 
$$
	R(q;\nu) = \exp(A(q;\nu T^\circ(\nu))  = T_q^\circ(\nu).
$$
On the left side we have
\begin{align*}
	L(q;\nu) 
			& = 1+ \sum_{n=1}^\infty \frac{1}{n!} \int_{\mathbb X^n}\tilde t_n(q;x_1,\ldots,x_n) \prod_{i=1}^n \e^{- A(x_i;\nu T^\circ(\nu))} \prod_{i=1}^n \Bigl( T_{x_i}^\circ(\nu) \nu(\dd x_i)\Bigr)  \\
			& = 1+ \sum_{n=1}^\infty \frac{1}{n!} \int_{\mathbb X^n}\tilde t_n(q;x_1,\ldots,x_n) \prod_{i=1}^n \Bigl( \e^{- A(x_i;\nu T^\circ(\nu))} T_{x_i}^\circ(\nu) \Bigr)\nu (\dd x_1)\cdots \nu(\dd x_n). 
\end{align*} 
The product inside the integral is equal to $1$ because of~\eqref{tree-eq}, therefore $L(q;\nu) = \tilde T_q^\circ(\nu)$ and we conclude from $L=R$ that $\tilde T_q^\circ(\nu) = T_q^\circ(\nu)$. In particular, $(T_q^\circ)_{q\in \mathbb X}$ solves~\eqref{tree-eq2}. 
\end{proof}

\section{Virial expansion. Density functional} \label{sec:virial} 

In this section we are consider functions $A$ of a special form, cf.~\eqref{A1} below, which are coming from a system of objects interacting via  a pair potential.

Let $V:\mathbb X\times \mathbb X\to \mathbb R\cup \{\infty\}$ be a measurable pair potential ($V(x,y) = V(y,x)$). We assume that for some measurable function $B:\mathbb X\to [0,\infty)$, we have the stability condition 
\begin{equation}\label{stability}
	\sum_{1\leq i<j\leq n} V(x_i, x_j)\geq -\sum_{i=1}^n B(x_i),
\end{equation}
for all $n\geq 2$ and $x_1,\ldots, x_n\in \mathbb X$. 
In addition, we also assume that for all $x\in \mathbb X$  and some function $B^*:\mathbb X\to \R_+$ we have
\begin{equation}\label{stabextra}
	\inf_{y\in \mathbb X} V(x,y)\geq - B^*(x).
\end{equation}
Define 
$$
	H_n(x_1,\ldots, x_n):= \sum_{1\leq i<j\leq n} V(x_i, x_j),
$$
for $n\geq 2$ and $H_0=0$, $H_1=0$. Let us introduce for the next few calculation up to \eqref{eq:difflog} an extra assumption on  $z \in \mathfrak M_\C(\mathbb X,\mathcal X)$, namely
\be \label{eq:fivo}
	\int_\mathbb X \e^{\beta B(x)}| z|(\dd x) < \infty.
\ee
In the case that $\mathbb{X} \subset \mathbb{R}^d$ and $V$, $z$ respectively, is a translation invariant function, measure respectively, then the above condition means that the volume of $\mathbb{X}$ with respect to the Lebesgue measure is finite. Hence we say that we are in the ``finite volume" case. We will point out which formulas also hold in the ``infinite volume" case. 

The grand-canonical partition function at activity $z$ and inverse temperature $\beta>0$ is 
\be\label{partition}
	\Xi(\beta, z)=1+ \sum_{n=1}\frac{1}{n!}	\int_{\mathbb X ^n} \e^{-\beta  H_n(\vect x)} z^n(\dd \vect x).
\ee
Condition~\eqref{eq:fivo} ensures that $\Xi(\beta,z)$ is finite. 
The one-particle density is
\be \label{eq:density}
	\rho[z](\dd q) = \rho(\dd q;z):= \frac{1}{\Xi(\beta,z)} \Biggl(1+ \sum_{n=1}^\infty \frac{1}{n!} \int_{\mathbb X^n}\e^{- \beta H_{n+1}(q,x_1,\ldots, x_n)} z^n(\dd \vect x) \Biggr) z(\dd q). 
\ee
Notice 
\be\label{eq:difflog}
	\rho(\dd q;z) =\Bigl( \frac{\delta}{\delta z(q)} \log \Xi(\beta, z)\Bigr) z(\dd q),
\ee
see Eqs.~\eqref{fvardev} and~\eqref{fvardev2} in Appendix~\ref{ApFormal} applied to $\log \left( 1 + \left(\Xi(\beta, z)-1\right)\right)$.
We bring the expression for $\rho$ into the form~\eqref{eq:rhoa}. This allows us to extend the definition~\eqref{eq:density} to activities that do not satisfy the finite-volume condition~\eqref{eq:fivo}.
Set 
\be
	f(x,y):=\e^{-\beta V(x,y)}-1,\quad \bar f(x,y):= 1 - \e^{-\beta|V(x,y)|}.
\ee	
	 Let $\mathcal C_n$ be the set of connected graphs  $g$ with vertex set $[n]=\{ 1 , \dots , n\}$, and $E(g)$ the edge set of a graph $g=([n], E(g))$ and 
\begin{equation}\label{A1}
	A_n(q; x_1,\ldots,x_n):=-\left[\prod_{j=1}^n (1+f(q,x_j))-1\right]
	\sum_{g\in\mathcal C_n}\prod_{\{i,j\}\in E(g)}f(x_i, x_j).
\end{equation}
The aim of the section is to use the result of the previous section for this particular $A_n$. Furthermore, define the well-known Ursell functions 
	\be \label{eq:ursell}
		\varphi_n^\mathsf T(x_1,\ldots,x_n):= 	\sum_{g\in\mathcal C_n}\prod_{\{i,j\}\in E(g)}f(x_i, x_j).
	\ee
	Let us recall some known results.
	
\begin{lemma} \label{lem:prelim}
	Let $A_n(q;x_1,\ldots,x_n)$ be as in~\eqref{A1} and define $A(q;z)$ as in~\eqref{Adef}. Let $z\in \mathfrak M_\C$ satisfy only
	\be \label{PU}
		\int_\mathbb X \bar f(x,y)\, \e^{a(y)+ \beta B(y)}|z|(\dd y)\leq a(x)
	\ee
	for some weight function $a:\mathbb X\to \R_+$ and all $x\in \mathbb X$. Then $z$ is in the domain of convergence $\mathscr D(A)$. 
	
	If in addition $z$ satisfies the finite-volume condition~\eqref{eq:fivo}, then the density $\rho(\dd q;z)$ defined in~\eqref{eq:density} is equal to $\exp( - A(q;z)) z(\dd q)$, moreover
	\begin{align*}
		\log \Xi(\beta,z) & = \sum_{n=1}^\infty \frac{1}{n!}\int_{\mathbb X^n}  \varphi_{n}^\mathsf T(x_1,\ldots,x_n) z^n(\dd \vect x),\\
		\rho(\dd q;z) & = z(\dd q)\Biggl( 1+ \sum_{n=1}^\infty \frac{1}{n!}\int_{\mathbb X^n} \varphi_{n+1}^\mathsf T(q,x_1,\ldots,x_n) z^n(\dd \vect x)\Biggr)
	\end{align*} 
	with absolutely convergent integrals and series.
\end{lemma}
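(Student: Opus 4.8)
The plan is to reduce the lemma to classical cluster-expansion technology: the Penrose--Ruelle tree-graph bound for the Ursell functions $\varphi_n^\mathsf{T}$ of~\eqref{eq:ursell}, and the Poghosyan--Ueltschi summation criterion, which under a hypothesis of the shape~\eqref{PU} yields absolute convergence of the associated tree generating functions (cf.~\cite{penrose1963,ruelle1969book,MalyshevMinlos1991,poghosyan-ueltschi2009,procacci-yuhjtman2017} and, for the marked case, \cite{kuna2001}). The one extra ingredient needed here is the elementary estimate $|f(q,x)|\le \e^{\beta B^*(q)}\,\bar f(q,x)$ together with $\e^{-\beta V(q,x)}\le \e^{\beta B^*(q)}$, which is exactly what assumption~\eqref{stabextra} provides: if $V(q,x)\ge0$ then $|f(q,x)|=\bar f(q,x)$, while if $V(q,x)<0$ then $|f(q,x)|=\e^{\beta|V(q,x)|}\bar f(q,x)$ with $|V(q,x)|\le B^*(q)$.

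To show $z\in\mathscr D(A)$, I would expand $\prod_{j}(1+f(q,x_j))-1=\sum_{\varnothing\ne S\subseteq[n]}\prod_{j\in S}f(q,x_j)$ and use the telescoping bound for products, obtaining $|A_n(q;\vect x)|\le\bigl(\prod_{j=1}^n(1+|f(q,x_j)|)-1\bigr)\,|\varphi_n^\mathsf{T}(x_1,\dots,x_n)|$. Then one inserts the tree-graph bound for $|\varphi_n^\mathsf{T}|$ (whose stability factor $\e^{\sum_i\beta B(x_i)}$ is precisely what the weight $\e^{\beta B(y)}$ in~\eqref{PU} is designed to absorb) and converts each $q$-factor via $|f(q,x_j)|\le\e^{\beta B^*(q)}\bar f(q,x_j)$. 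What remains is a sum over spanning trees of $\{q,x_1,\dots,x_n\}$ rooted at $q$, decorated by a bounded number of extra $q$-edges, weighted by products of $\bar f$; this is summed by the standard Poghosyan--Ueltschi induction fed by~\eqref{PU}, giving $\sum_{n}\frac{1}{n!}\int|A_n(q;\vect x)|\,|z|^n(\dd\vect x)<\infty$ for every $q$. This is, in essence, the usual one-point correlation estimate.

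Now assume in addition~\eqref{eq:fivo}. Together with~\eqref{stability} this makes $\Xi(\beta,z)$ absolutely convergent, and likewise with $z$ replaced by the tilted activity $z_q(\dd x):=\e^{-\beta V(q,x)}z(\dd x)=(1+f(q,x))z(\dd x)$, since $|z_q|\le\e^{\beta B^*(q)}|z|$. The classical exponential formula reads, as formal power series, $\Xi(\beta,z)=\exp\bigl(\sum_{n}\frac{1}{n!}\int\varphi_n^\mathsf{T}z^n\bigr)$; the right-hand series converges absolutely under~\eqref{PU} (this is the previous step with the prefactor $\prod_j(1+f(q,x_j))-1$ removed, i.e.\ the bare Mayer bound), so rescaling $z\mapsto\lambda z$ and comparing two absolutely convergent series gives $\Xi(\beta,z)\ne0$ and $\log\Xi(\beta,z)=\sum_{n}\frac{1}{n!}\int\varphi_n^\mathsf{T}z^n$, the first displayed identity. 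Differentiating this absolutely convergent series in $z(q)$ term by term (justified by~\eqref{fvardev}--\eqref{fvardev2}) and combining with~\eqref{eq:difflog} yields the claimed formula for $\rho(\dd q;z)$, with absolutely convergent integrals and series.

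It remains to identify $\rho$ with $\e^{-A}z$. Since $H_{n+1}(q,x_1,\dots,x_n)=H_n(\vect x)+\sum_jV(q,x_j)$, formula~\eqref{eq:density} rewrites \emph{algebraically} as $\rho(\dd q;z)=z(\dd q)\,\Xi(\beta,z_q)/\Xi(\beta,z)$, which is meaningful because $\Xi(\beta,z)\ne0$. On the other hand, the very form of $A_n$ in~\eqref{A1} shows that $-A(q;z)$ equals $\log\Xi(\beta,z_q)-\log\Xi(\beta,z)$ as a formal power series (substitute the series $(1+f(q,\cdot))z$ into the Mayer series for $\log\Xi$ — the second notion of composition of Appendix~\ref{ApFormal}, cf.~\eqref{fcomp2}), hence $\e^{-A(q;z)}=\Xi(\beta,z_q)/\Xi(\beta,z)$ as formal series; since $A(q;z)$ and the one-point correlation series both converge absolutely at $z$, the usual upgrade from formal to genuine identities (rescale $z\mapsto\lambda z$ and compare analytic functions, using that series absolutely convergent at $z$ are continuous up to $\lambda=1$) gives $\rho(q;z)=\e^{-A(q;z)}z(q)$. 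The main obstacle is the tree-graph/Poghosyan--Ueltschi estimate of the second paragraph, and within it the book-keeping for the extra $q$-edges that do not belong to a spanning tree of $\{q,x_1,\dots,x_n\}$, so that the induction still closes; modulo the accounting with $B^*$ this is exactly the established one-point estimate for multi-species and continuum gases, so I would invoke it rather than reprove it. The only recurring secondary point is keeping formal power-series identities (automatic from the combinatorial definitions of $\varphi_n^\mathsf{T}$ and $A_n$) distinct from genuine identities of convergent series; each such upgrade reduces to absolute convergence, which~\eqref{PU}, and for $\Xi$ also~\eqref{eq:fivo}, supply.
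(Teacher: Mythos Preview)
Your overall strategy aligns with the paper's, which simply defers to the Procacci--Yuhjtman tree-graph inequality and the techniques of Lemma~\ref{lem:abkey}. Your treatment of the second half---the absolutely convergent series for $\log\Xi$ and $\rho$ under~\eqref{PU}+\eqref{eq:fivo}, and the identification $\rho=\e^{-A}z$ via the tilted activity $z_q=(1+f(q,\cdot))z$ together with the formal identity $-A(q;z)=\log\Xi(\beta,z_q)-\log\Xi(\beta,z)$---is correct and in the same spirit.

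There is, however, a gap in your argument for $z\in\mathscr D(A)$. After expanding $\prod_j(1+|f(q,x_j)|)-1=\sum_{\varnothing\ne S\subseteq[n]}\prod_{j\in S}|f(q,x_j)|$ and bounding $|f(q,x_j)|\le\e^{\beta B^*(q)}\bar f(q,x_j)$, you are left with $|S|$ edges from $q$ to the $x_j$'s and a factor $\e^{|S|\beta B^*(q)}$. The number $|S|$ ranges over $1,\dots,n$ and is \emph{not} bounded, so the phrase ``a bounded number of extra $q$-edges'' is incorrect, and the Poghosyan--Ueltschi induction fed only by~\eqref{PU} cannot absorb these contributions: effectively you would need the tilted measure $(1+\e^{\beta B^*(q)}\bar f(q,\cdot))|z|$, not $|z|$ itself, to satisfy a PU-type bound. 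The paper's route (Eq.~\eqref{usualmagicformula} in the proof of Lemma~\ref{lem:abkey}) places $B^*$ on the $x_j$'s rather than on $q$ and uses the inductive telescoping
\[
\Bigl|\prod_{j=1}^n(1+f(q,x_j))-1\Bigr|\ \le\ \e^{\beta\sum_{j=1}^n B^*(x_j)}\sum_{i=1}^n\bar f(q,x_i),
\]
which collapses the prefactor to a \emph{single} $q$-edge. Combining this with the tree-graph bound on $|\varphi_n^{\mathsf T}|$ produces a genuine tree sum on $\{q,x_1,\dots,x_n\}$, and one obtains
\[
\sum_{n\ge1}\frac1{n!}\int_{\mathbb X^n}|A_n(q;\vect x)|\,|z|^n(\dd\vect x)\ \le\ \int_{\mathbb X}\bar f(q,y)\,\mathcal R\bigl(y;\e^{\beta B^*}|z|\bigr)\,\e^{\beta B^*(y)}|z|(\dd y),
\]
after which the standard induction closes.
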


\noindent The lemma follows from the tree-graph inequality due to~\cite{procacci-yuhjtman2017} and additional combinatorial considerations, compare for example~\cite[Eq. (4.17)]{jttu2014}. The details are similar to aspects of the proof of Lemma~\ref{lem:abkey} and therefore omitted.

\begin{definition}\label{def:rho1}
For  activities $z$ that satisfy~\eqref{PU} but not necessarily the condition~\eqref{eq:fivo},  we adopt the equality $\rho(\dd q;z) = z(\dd q) \exp( - A(q;z))$ as the definition of the density. 
\end{definition} 

\begin{remark}[Physical interpretation of $A(q;z)$]
	Let $W(q;x_1,\ldots,x_n) := \sum_{i=1}^n V(q,x_i)$ be the total interaction of a particle at $q$ with the particles $x_1,\ldots, x_n$. 
	By~\eqref{eq:density} and Lemma~\ref{lem:prelim}, we have 
	$$
		 \frac 1\beta A(q;z) = - \frac 1\beta  \log \Bigl \la \e^{- \beta W(q;x_1,\ldots,x_n)  }\Bigr \ra ,
	$$
	where $\la \cdot \ra$ denotes the expectation with respect to the grand-canonical Gibbs measure. Thus $\frac1\beta A(q;z)$ is the excess free energy for a test particle pinned at the location $q$. 
\end{remark} 

Let $\mathcal B_n\subset \mathcal C_n$ be the set of bi-connected graphs, i.e., graphs that stay connected upon removal of a single vertex. Define 
\be \label{eq:Dn}
	D_n(x_1,\ldots, x_n) := \sum_{g\in \mathcal B_n} \prod_{\{i,j\}\in E(g)} f(x_i,x_j). 
\ee
We want to invert the map $z\mapsto \rho[z]$ and express the inverse with bi-connected graphs.  Before that we derive a convergent result for power series with  coefficients given by bi-connected graphs.

\begin{theorem} \label{thm:virmain1}
	Let $\nu\in \mathfrak M_\C$. Suppose there exist functions 	$a,b:\mathbb X\to \R_+$ with $a\leq b$ on $\mathbb X$ 	such that 
	\be \label{suffsuff} \tag{$\mathsf S_{a,b}$}
		\int_\mathbb X \bar f(x,y)\, \e^{a(y)+b(y)+ \beta B(y)+\beta B^*(y)}|\nu|(\dd y)\leq a(x),
	\ee
	for all $x\in \mathbb X$. Then 
	\be \tag{$\mathsf M_b$} \label{virMb}
		\sum_{n=1}^\infty \frac{1}{n!}\int_{\mathbb X^n} \bigl| D_{n+1}(q,x_1,\ldots, x_n) \bigr|\, |\nu|(\dd x_1) \cdots |\nu|(\dd x_n) \leq  b(q)
	\ee
	for all $q\in \mathbb X$. 
\end{theorem}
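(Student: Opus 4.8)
\emph{Strategy.} The plan is to reduce \eqref{virMb} to the abstract estimate of Theorem~\ref{thm:main1} for the coefficients $A_n$ of \eqref{A1}, by exhibiting the biconnected--graph series as the solution of a fixed point equation of the type \eqref{tree-eq}. Set
\[
  \mathcal D(q;\nu) := \sum_{n\ge1}\frac{1}{n!}\int_{\mathbb X^n} D_{n+1}(q,x_1,\ldots,x_n)\,\nu^n(\dd\vect x),
\]
a formal power series in $\nu$. The classical decomposition of a connected graph with a marked vertex into the block containing the root together with the connected graphs hanging off the remaining block vertices gives the identity of formal power series $A(q;z)=-\mathcal D\bigl(q;\rho[z]\bigr)$; substituting $z=\zeta[\nu]$, so that $\rho[z]=\nu$, and using $A(q;\zeta[\nu])=\log T_q^\circ(\nu)$ together with $T_{x}^\circ(\nu)=\e^{-\mathcal D(x;\nu)}$, this becomes $\mathcal D(q;\nu)=-\log T_q^\circ(\nu)$ and hence, via \eqref{tree-eq},
\[
  \mathcal D(q;\nu)\;=\;-\sum_{n\ge1}\frac{1}{n!}\int_{\mathbb X^n} A_n(q;x_1,\ldots,x_n)\prod_{i=1}^n\e^{-\mathcal D(x_i;\nu)}\,\nu^n(\dd\vect x).
\]
This determines the coefficients $\tfrac1{n!}D_{n+1}$ recursively, $D_{n+1}$ depending only on $A_1,\ldots,A_n$ and $D_2,\ldots,D_n$.

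\emph{Induction.} Next I would run the induction of the proof of Theorem~\ref{thm:main1} on this recursion. Writing $\overline{\mathcal D}_q^{\,N}:=\sum_{n=1}^N\tfrac1{n!}\int_{\mathbb X^n}|D_{n+1}(q,x_1,\ldots,x_n)|\,|\nu|^n(\dd\vect x)$, one proves $\overline{\mathcal D}_q^{\,N}\le b(q)$ by induction on $N$ and then lets $N\to\infty$. The case $N=0$ is trivial; for the step one applies the triangle inequality to the recursion, uses that in the truncation to order $N$ the factor $\prod_i\e^{-\mathcal D(x_i;\nu)}$ only involves $\mathcal D(x_i;\nu)$ up to order $N-1$ so that its majorant is dominated by $\prod_i\e^{\overline{\mathcal D}_{x_i}^{\,N-1}}$, adds the (non-negative) missing terms to reconstruct full exponentials, and invokes the induction hypothesis $\overline{\mathcal D}_x^{\,N-1}\le b(x)$; this yields
\[
  \overline{\mathcal D}_q^{\,N}\;\le\;\sum_{n\ge1}\frac{1}{n!}\int_{\mathbb X^n}\bigl|A_n(q;x_1,\ldots,x_n)\bigr|\,\e^{\sum_{j=1}^n b(x_j)}\,|\nu|^n(\dd\vect x).
\]
Thus Theorem~\ref{thm:virmain1} follows once one checks that \eqref{suffsuff} (with $a\le b$) implies that $\nu$ satisfies condition \eqref{suff1} for the family \eqref{A1}, i.e.\ that the right-hand side above is $\le b(q)$ for all $q$.

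\emph{The hard step.} This remaining implication is the main obstacle; it is of the same nature as the (omitted) proof of Lemma~\ref{lem:prelim} and is essentially the content of Lemma~\ref{lem:abkey}. One bounds $|A_n(q;\vect x)|\le\bigl|\prod_{j=1}^n(1+f(q,x_j))-1\bigr|\,\bigl|\varphi_n^{\mathsf T}(x_1,\ldots,x_n)\bigr|$, controls the second factor by the tree-graph inequality of \cite{procacci-yuhjtman2017} in the form behind Lemma~\ref{lem:prelim}, namely $|\varphi_n^{\mathsf T}(x_1,\ldots,x_n)|\le\bigl(\prod_i\e^{\beta B(x_i)}\bigr)\sum_{T}\prod_{\{i,j\}\in E(T)}\bar f(x_i,x_j)$ over spanning trees $T$ of $\{1,\ldots,n\}$, and the first by the elementary estimate $\bigl|\prod_j(1+f(q,x_j))-1\bigr|\le\sum_{\varnothing\ne S\subseteq[n]}\prod_{j\in S}\e^{\beta B^*(x_j)}\bar f(q,x_j)$, where \eqref{stabextra} is used in the form $|f(q,x)|\le\e^{\beta B^*(x)}\bar f(q,x)$ (valid since $V(q,x)=V(x,q)\ge-B^*(x)$). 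Inserting these, multiplying by $\e^{\sum_j b(x_j)}$, integrating against $|\nu|$ and summing over $n$ produces a sum over trees on $\{1,\ldots,n\}$ decorated by a star of extra edges at the ghost vertex $q$; evaluating leaf-first, each integration step is exactly of the form controlled by \eqref{suffsuff}, with the extra factor $\e^{b(y)+\beta B^*(y)}$ in \eqref{suffsuff} (compared with \eqref{PU}) absorbing respectively the weight $\e^{\sum b(x_j)}$ demanded by \eqref{suff1} and the star edges at $q$, and with $a\le b$ ensuring that the telescoped bound, of order $a(q)$, is $\le b(q)$. The real work lies entirely in this bookkeeping: choosing the intermediate inequalities so that the combinatorial multiplicities (the subset sum over $S$, the count of spanning trees) are harmless, and so that the ingredients $a$, $b$, $\beta B$, $\beta B^*$ and $|\nu|$ of \eqref{suffsuff} line up precisely.
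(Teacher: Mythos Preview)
Your proposal is essentially correct and follows the same route as the paper: both hinge on the formal identity $T_q^\circ(\nu)=\e^{-\mathcal D(q;\nu)}$ (Lemma~\ref{lem:tada}) together with~\eqref{tree-eq} to compare the $|D_{n+1}|$-series against the left side of~\eqref{suff1}, and both reduce the hard step \eqref{suffsuff}$\Rightarrow$\eqref{suff1} to Lemma~\ref{lem:abkey}. Two minor differences are worth noting. First, where you re-run the induction of Theorem~\ref{thm:main1} on the partial sums $\overline{\mathcal D}_q^{\,N}$, the paper instead invokes Theorem~\ref{thm:main1} once to get $S_q(\nu):=1+\sum_{n\ge1}\tfrac1{n!}\int|t_n|\,|\nu|^n\le\e^{b(q)}$, then reads off coefficient-wise from $-\mathcal D(q;\nu)=\sum_n\tfrac1{n!}\int A_n\prod_i T_{x_i}^\circ(\nu)\,\nu^n$ that $\sum_n\tfrac1{n!}\int|D_{n+1}|\,|\nu|^n\le\sum_n\tfrac1{n!}\int|A_n|\prod_i S_{x_i}(\nu)\,|\nu|^n$, which is $\le b(q)$ by~\eqref{suff1}. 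This is equivalent but slightly shorter.

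Second, your sketch of Lemma~\ref{lem:abkey} bounds the prefactor by the subset sum $\sum_{\varnothing\ne S}\prod_{j\in S}\e^{\beta B^*(x_j)}\bar f(q,x_j)$, whereas the paper uses the cruder but more convenient estimate~\eqref{usualmagicformula}, $\bigl|\prod_j(1+f(q,x_j))-1\bigr|\le\e^{\sum_j\beta B^*(x_j)}\sum_{i=1}^n\bar f(q,x_i)$. With a single factor $\bar f(q,x_i)$, the paper can pull out one distinguished vertex and recognize the remainder as $\mathcal R(y;\e^{\beta B^*+b}|\nu|)$, the standard rooted-cluster majorant, so the tree-graph bound~\eqref{pyrq} applies directly and no further leaf-stripping is needed. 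Your subset version is tighter but produces trees decorated with a star of possibly many edges to $q$; at a leaf $j\in S$ the integral then carries two factors $\bar f(\cdot,x_j)$ simultaneously, and the ``leaf-first'' step as you describe it does not literally match the form of~\eqref{suffsuff}. This is not a real obstruction (bounding all but one star edge by $1$ and rooting at a vertex of $S$ recovers the paper's structure), but the bookkeeping you allude to is cleaner if you start from~\eqref{usualmagicformula}.
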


\noindent Define $\mathsf V_b$ by
$$
	\mathsf V_b = \bigl\{ \nu \in \mathfrak M_\C\mid \exists a:\mathbb X\to \R_+:\, a\leq b,\ \nu\text{ satisfies~\eqref{suffsuff}}\bigr\}. 
$$

\begin{theorem} \label{thm:virmain2}
	There is a set $\mathsf U_b\subset \mathscr D(A) \subset \mathfrak M_\C$ such that $z\mapsto \rho[z]$ is a bijection from $\mathsf U_b$ onto $\mathsf V_b$, and for every $z\in \mathsf U_b$, $\nu\in \mathsf V_b$, we have $\rho[z]=\nu$ if and only if 
	\be \label{eq:virmain2a} 
		z(\dd q ) = \nu(\dd q) \exp\Biggl(-  \sum_{n=1}^\infty \frac{1}{n!}\int_{\mathbb X^n}  D_{n+1}(q,x_1,\ldots, x_n) \nu(\dd x_1) \cdots \nu(\dd x_n) \Biggr),
	\ee
where the latter converge in the sense that  \eqref{virMb} holds.

If $z\in \mathfrak M_\C$ fulfills  \eqref{suffsuff} for some $a\leq b$ and $\e^{a}|z|\in \mathsf V_b$ for the same functions $a$ and $b$, 
%satisfies 
%$\e^{a}|z|\in \mathsf V_b$ for some $a\leq b$ with $|z|$ fulfills 
then  $\rho[z]\in \mathsf V_b$ and hence $z \in \mathsf U_b$. 

If instead the following conditions including also a ``finite volume condition" holds, 
	\be \label{dissymmetry-condition:b}
		\int_{\mathbb X}\bar f(x,y)\, \e^{a(y)+ \beta B(y)} |z|(\dd y) \leq a(x),\quad 	\e^{a+\beta B}|z|\in \mathsf V_b, \quad \int_{\mathbb X}(1+ b(q)) \e^{a(q)+ \beta B(q)}|z|(\dd q)<\infty,
	\ee
	then also
	\be \label{eq:virmain2b}
		\log \Xi(\beta,z) = \int_\mathbb X\rho(\dd x_1;z) 
			- \sum_{n=2}^\infty \frac{1}{n!}\int_{\mathbb X^n} (n-1) D_n(x_1,\ldots,x_n) \prod_{i=1}^n \rho(\dd x_i;z).
	\ee
\end{theorem}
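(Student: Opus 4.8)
The plan is to combine the abstract inversion theorem (Theorem~\ref{thm:main2}) applied to the specific $A_n$ from~\eqref{A1} with a combinatorial identity that rewrites the tree-generating function $T_q^\circ$ in terms of bi-connected graphs. The bulk of the work is the identity $D_{n+1} \leftrightarrow t_n$, i.e.\ showing that the coefficients $t_n(q;x_1,\dots,x_n)$ of the unique solution of~\eqref{tree-eq} for this $A_n$ coincide with $D_{n+1}(q,x_1,\dots,x_n)$; this is presumably the content of Lemma~\ref{lem:tada} and Lemma~\ref{lem:abkey} (referenced but not shown in the excerpt), and it is the main obstacle.

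For the first two assertions I would proceed as follows. First, I would verify that condition~\eqref{suffsuff} on $\nu$ implies the abstract condition~\eqref{suff1} for the weight $b$, which is precisely Theorem~\ref{thm:virmain1}: the bound~\eqref{virMb} with $D_{n+1}$ replaced by $t_n$ gives exactly~\eqref{suff1}. Hence $\mathsf V_b \subset \mathscr V_b$. Then Theorem~\ref{thm:main2} furnishes a set $\mathscr U_b \subset \mathscr D(A)$ on which $\rho$ is a bijection onto $\mathscr V_b$ with inverse $\zeta[\nu](\dd q) = T_q^\circ(\nu)\,\nu(\dd q)$; restricting to $\mathsf U_b := \zeta[\mathsf V_b]$ and using the identification $t_n = D_{n+1}$ rewrites $T_q^\circ(\nu)$ as $\exp\bigl(-\sum_{n\geq 1}\frac{1}{n!}\int D_{n+1}(q,\vect x)\,\nu^n(\dd\vect x)\bigr)$, yielding~\eqref{eq:virmain2a}. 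The statement ``$\rho[z]=\nu$ iff~\eqref{eq:virmain2a}'' is then just the bijection property. For the auxiliary claim that $z$ with $\e^a|z|\in\mathsf V_b$ satisfying~\eqref{suffsuff} lands in $\mathsf U_b$: one checks that $\nu:=\rho[z]$ is dominated by $\e^a|z|$ pointwise (using $|A(q;z)|\leq a(q)$, which follows from Lemma~\ref{lem:prelim} and the tree-graph bound under~\eqref{suffsuff}), so $\nu\in\mathsf V_b$, and then $z=\zeta[\nu]\in\mathsf U_b$ by the bijection.

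For the pressure formula~\eqref{eq:virmain2b} I would argue as follows. Under the finite-volume hypotheses in~\eqref{dissymmetry-condition:b}, Lemma~\ref{lem:prelim} gives $\log\Xi(\beta,z)=\sum_{n\geq1}\frac1{n!}\int\varphi_n^\mathsf T z^n$ and $\rho(\dd q;z)=z(\dd q)(1+\sum_{n\geq1}\frac1{n!}\int\varphi_{n+1}^\mathsf T z^n)$ as absolutely convergent series. The classical combinatorial fact is that $\frac{\delta}{\delta z(q)}\log\Xi$ relates the Ursell functions to one more vertex, and that the bi-connected functions $D_n$ are obtained from $\varphi_n^\mathsf T$ by the block decomposition of connected graphs into bi-connected components. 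The key algebraic identity to establish is
\be
	\log\Xi(\beta,z) = \int_{\mathbb X}\rho(\dd x_1;z) - \sum_{n\geq2}\frac{1}{n!}\int_{\mathbb X^n}(n-1)\,D_n(x_1,\dots,x_n)\prod_{i=1}^n\rho(\dd x_i;z),
\ee
which one proves by substituting $z(\dd q)=\nu(\dd q)T_q^\circ(\nu)=\zeta(\dd q;\rho[z])$ into the cluster expansion of $\log\Xi$ — this is the same substitution used in Lemma~\ref{lem:tree-sol2} — and reorganizing the resulting sum over connected graphs according to their $2$-connected block structure rooted at the tree $T_q^\circ$; the factor $(n-1)$ is the standard tree-counting weight (number of edges minus vertices plus one, or equivalently from Euler's relation applied to the block tree). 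The finite-volume and integrability conditions in~\eqref{dissymmetry-condition:b} guarantee that all the rearrangements are between absolutely convergent series, so Fubini and reindexing are legitimate.

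The main obstacle is the bi-connected-graph identity $t_n=D_{n+1}$: proving that the formal fixed-point solution, defined through set partitions of children in genealogical trees (Proposition~\ref{prop:trees}), reproduces exactly the generating function of $2$-connected graphs. This is a non-trivial combinatorial resummation — essentially the inhomogeneous analogue of the classical fact that the irreducible (virial) coefficients are sums over $2$-connected graphs — and it is where the tree/partition bookkeeping from Gessel's combinatorial proof of Lagrange--Good inversion enters. Once that identity and the analogous block-tree reorganization for $\log\Xi$ are in hand, everything else is assembling previously established pieces (Theorems~\ref{thm:main1}, \ref{thm:main2}, \ref{thm:virmain1} and Lemma~\ref{lem:prelim}) and checking domains of convergence.
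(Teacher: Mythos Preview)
Your overall architecture matches the paper's: reduce to the abstract inversion (Theorem~\ref{thm:main2}) via the inclusion $\mathsf V_b\subset\mathscr V_b$ (which is Lemma~\ref{lem:abkey}, not Theorem~\ref{thm:virmain1} as you write), set $\mathsf U_b:=\zeta[\mathsf V_b]$, and then identify $T_q^\circ$ combinatorially. The second claim about $z\in\mathsf U_b$ is also handled as you sketch, via $|A(q;z)|\leq a(q)$ and domination of $\rho[z]$ by $\e^a|z|$.

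However, the key combinatorial identity is misstated, and this is not just notational. You assert $t_n=D_{n+1}$, but that is false: the coefficients $t_n$ of $T_q^\circ(\nu)=1+\sum_n\frac{1}{n!}\int t_n\,\nu^n$ are \emph{not} the bi-connected weights. What Lemma~\ref{lem:tada} actually proves is that $T_q^\circ(\nu)=\exp\bigl(-\sum_n\frac{1}{n!}\int D_{n+1}(q,\vect x)\,\nu^n(\dd\vect x)\bigr)$, i.e.\ the \emph{logarithm} of $T_q^\circ$ has coefficients $-D_{n+1}$. (Your own sentence ``using the identification $t_n=D_{n+1}$ rewrites $T_q^\circ$ as $\exp(-\sum\ldots)$'' is self-contradictory.) The paper's route to this is not a direct matching of tree-partition weights against $2$-connected graphs as you suggest; instead it proves the block-decomposition identity (Lemma~\ref{lem:biconnected})
\[
-A(q;z)=\sum_{n\geq1}\frac{1}{n!}\int_{\mathbb X^n} D_{n+1}(q,\vect x)\prod_{i=1}^n \e^{-A(x_i;z)}\,z^n(\dd\vect x),
\]
recognizes this as saying that $\exp(-\sum D_{n+1}\,\nu^n/n!)$ solves the alternative fixed-point equation~\eqref{tree-eq2}, and then invokes the uniqueness in Lemma~\ref{lem:tree-sol2}. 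This two-step detour (block decomposition plus $\mathsf{FP}'$) is the actual mechanism, and it bypasses the harder direct resummation you have in mind.

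For~\eqref{eq:virmain2b}, your plan of substituting $z=\zeta[\nu]$ into the cluster series for $\log\Xi$ and reorganizing by block structure is not what the paper does. The paper works at the level of coefficients in $z$: it uses the dissymmetry identity~\eqref{eq:dissymmetry} relating $\varphi_n^\mathsf T$ to $n\varphi_n^\mathsf T$ minus a sum of $D_m$'s multiplied by products of $\varphi^\mathsf T$'s (this is where the factor $(n-1)=n-(m-(m-1))$ comes from, not from Euler's relation on a block tree), and then checks absolute convergence of each of the three resulting $z$-series separately using the tree-graph bound~\eqref{pyrq} and the hypothesis $\e^{a+\beta B}|z|\in\mathsf V_b$. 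Your substitution approach might also work, but you would still need the dissymmetry identity or an equivalent to recognize the answer, and the convergence check in the composed series is more delicate than in the paper's term-by-term argument.
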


\noindent The condition~$\e^a z\in \mathsf V_b$ is a condition directly in terms of $z$ which is sufficient to guarantee that $z \in \mathsf  U_b$. Recall that $\mathsf  U_b$ was just defined indirectly as the image of $\zeta$.

Formula \eqref{eq:virmain2b} does not make any sense in the ``infinite volume case" even if we consider the translation invariant case as discussed below \eqref{eq:fivo}. In this case, though, the right hand side is proportional to the  volume of $\mathbb{X}$, up to boundary errors. Hence, $\log \Xi(\beta,z)$ divided by the volume has a well defined limit. 

%\blue{[Condition $\e^{a+\beta B}|z|\in \mathsf V_b$ means that there exists some function $a'$ not necessarily equal to $a$ such that 
%$$
%\int_\mathbb X \bar f(x,y)\, \e^{a'(y)+b(y)+ \beta B(y)+\beta B^*(y)} \, \e^{a(y) + \beta B(y)} |z|(\dd y)\leq a'(x).
%$$
%It does not imply the newly added blue part  in condition~\eqref{dissymmetry-condition}. ]
%} 

For the definition of the free energy, we fix a reference measure $m(\dd x)$ on $\mathbb X$ (for example, the Lebesgue measure on $\R^d$). The (grand-canonical) free energy $\mathcal F_{\mathrm{GC}}[\nu]$ of a given density profile $\nu \in \mathfrak M$ is defined via the Legendre transform of $\log \Xi(z)$ as 
\begin{equation}\label{GCFELT}
	\beta \mathcal F_{\mathrm{GC}}[\nu]:= \sup_{z}\Bigl(\int_\mathbb X \log \frac{\dd z}{\dd m}(x) \nu(\dd x) - \log \Xi(z)\Bigr)
\end{equation}
with $\frac{\dd z}{\dd m}$ the Radon-Nikod{\'y}m derivative of $z$ with respect to the reference measure $m$. The supremum in~\eqref{GCFELT} is over all non-negative measures $z\in \mathfrak M$ that are absolutely continuous with respect to $m$ and such that the integral with the logarithm is absolutely convergent. 

\begin{theorem}\label{thm:GCFE}
	Assume that $\nu \in \mathsf V_b\cap \mathfrak M$ is absolutely continuous with respect to $m$ and satisfies 
	\be \label{eq:nuassumption}
		\int_{\mathbb X}(1+b(q)) \nu(\dd q)<\infty,\quad 
		\int_\mathbb X\Bigl| \log \frac{\dd \nu}{\dd m}\Bigr|\dd \nu <\infty,\quad \int_\mathbb X \e^{\beta B+ b} \dd \nu <\infty, 
	\ee
	then 
	\begin{equation} \label{eq:free-energy}
		\beta \mathcal F_{\mathrm{GC}}[\nu]
			 = \int_\mathbb X   \bigl[\log \frac{\dd \nu}{\dd m}(x) - 1\bigr]\nu(\dd x)  - \sum_{n=2}^\infty \frac{1}{n!} \int_{\mathbb X^n}D_n\bigl(x_1,\ldots, x_n\bigr) \nu^n(\dd \vect x)
	\end{equation} 
	with absolutely convergent integrals and sum. 
\end{theorem}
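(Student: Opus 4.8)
The plan is to prove Theorem~\ref{thm:GCFE} by evaluating the supremum in~\eqref{GCFELT} explicitly, showing that the maximizing activity is $z = \zeta[\nu]$ and then substituting the series representations of $\zeta$ and $\log\Xi$ obtained earlier. First I would verify that the hypotheses~\eqref{eq:nuassumption} place $\nu$ in the regime where Theorem~\ref{thm:virmain2} applies: since $\nu \in \mathsf V_b$, there is $a \leq b$ with $\nu$ satisfying~\eqref{suffsuff}, and by Theorem~\ref{thm:virmain2} the activity $z := \zeta[\nu]$ lies in $\mathsf U_b \subset \mathscr D(A)$ with $\rho[z] = \nu$ and the explicit product formula~\eqref{eq:virmain2a} for $z$ in terms of the $D_{n+1}$-series. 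I would also check that this $z$ satisfies the ``finite volume'' batch of conditions~\eqref{dissymmetry-condition:b}, using the integrability assumptions $\int (1+b)\,\dd\nu < \infty$ and $\int \e^{\beta B + b}\,\dd\nu < \infty$ together with the bound~\eqref{rzbbound}-type estimate $|z(q)| \leq \nu(q)\e^{b(q)}$; this is what licenses the use of identity~\eqref{eq:virmain2b} relating $\log\Xi(\beta,z)$ to the $D_n$-series integrated against $\rho(\dd x_i;z) = \nu(\dd x_i)$.

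The core computation is then the Legendre duality. The standard convexity argument (finite-dimensional version in~\cite{lebowitz-penrose1964virial}) is that $z \mapsto \log\Xi(z)$ is convex in $\log\frac{\dd z}{\dd m}$ with functional derivative $\rho(\dd q;z)/z(\dd q)$ in the sense of~\eqref{eq:difflog}, so the supremum defining $\beta\mathcal F_{\mathrm{GC}}[\nu]$ is attained where $\rho[z] = \nu$, i.e. at $z = \zeta[\nu]$. Plugging this stationary point in gives
\begin{equation*}
	\beta\mathcal F_{\mathrm{GC}}[\nu] = \int_{\mathbb X}\log\frac{\dd \zeta[\nu]}{\dd m}(x)\,\nu(\dd x) - \log\Xi(\beta,\zeta[\nu]).
\end{equation*}
Now I would use~\eqref{eq:virmain2a} to write $\log\frac{\dd \zeta[\nu]}{\dd m}(x) = \log\frac{\dd\nu}{\dd m}(x) - \sum_{n\geq 1}\frac{1}{n!}\int D_{n+1}(x,x_1,\ldots,x_n)\nu^n(\dd\vect x)$, and~\eqref{eq:virmain2b} to write $\log\Xi(\beta,\zeta[\nu]) = \int\nu(\dd x_1) - \sum_{n\geq 2}\frac{1}{n!}\int(n-1)D_n\nu^n(\dd\vect x)$. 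Subtracting, the $\log\frac{\dd\nu}{\dd m}$ term and the $-\int\nu$ term produce the leading $\int(\log\frac{\dd\nu}{\dd m} - 1)\dd\nu$ in~\eqref{eq:free-energy}, and the two $D$-series must be shown to combine into $-\sum_{n\geq 2}\frac{1}{n!}\int D_n\,\nu^n(\dd\vect x)$. The bookkeeping here is: $\frac{1}{n!}\int\int D_{n+1}(x,x_1,\ldots,x_n)\nu(\dd x)\nu^n(\dd\vect x)$, after relabeling, equals $\frac{n+1}{(n+1)!}\int D_{n+1}\nu^{n+1}$ since $D_{n+1}$ is symmetric, so summing over $n\geq 1$ gives $\sum_{m\geq 2}\frac{m}{m!}\int D_m\nu^m$; subtracting $\sum_{m\geq 2}\frac{m-1}{m!}\int D_m\nu^m$ leaves exactly $\sum_{m\geq 2}\frac{1}{m!}\int D_m\nu^m$ with the correct sign. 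Absolute convergence of all rearrangements follows from~\eqref{virMb} and the integrability hypotheses, so Fubini and the relabelings are justified.

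The main obstacle I anticipate is making the Legendre-transform step fully rigorous in infinite dimensions: one must show that $z = \zeta[\nu]$ is admissible in the supremum (the integral $\int\log\frac{\dd z}{\dd m}\,\dd\nu$ must be absolutely convergent, which needs $\int|\log\frac{\dd\nu}{\dd m}|\dd\nu < \infty$ from~\eqref{eq:nuassumption} plus an $L^1(\nu)$ bound on the $D_{n+1}$-series via~\eqref{virMb} and $\int(1+b)\dd\nu<\infty$), and that it is genuinely the maximizer rather than merely a critical point. For the latter the cleanest route is the Gibbs variational principle / relative entropy inequality: for any competitor $z'$ one writes $\int\log\frac{\dd z'}{\dd m}\dd\nu - \log\Xi(z') - \big(\int\log\frac{\dd z}{\dd m}\dd\nu - \log\Xi(z)\big) = \int\log\frac{\dd z'}{\dd z}\,\dd\nu - \log\frac{\Xi(z')}{\Xi(z)}$, and since $\nu = \rho[z]$ has density proportional to $z$ times the grand-canonical correlation, Jensen's inequality (concavity of $\log$) on the tilted measure shows this difference is $\leq 0$. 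I would also need to confirm measurability and the a.e.\ well-definedness of $\log\frac{\dd\zeta[\nu]}{\dd m}$, which follows because $T_q^\circ(\nu) \geq 1 > 0$ and $\frac{\dd\nu}{\dd m} > 0$ $\nu$-a.e. Everything else — interchanging sum and integral, the combinatorial regrouping of the two $D$-series — is routine given the convergence estimates already established.
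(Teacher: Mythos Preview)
Your overall strategy matches the paper's: identify $z_0=\zeta[\nu]$ as the maximizer in~\eqref{GCFELT}, plug in the series~\eqref{eq:virmain2a} and~\eqref{eq:virmain2b}, and combine the two $D$-series exactly as you describe. The maximizer argument in the paper goes through convexity of $t\mapsto \log\Xi[\e^{(1-t)h_0+th}m]$ together with a careful dominated-convergence computation of the right derivative at $t=0$; your Jensen/relative-entropy sketch is a close cousin and would also work once fleshed out.

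There is, however, a genuine gap in one step. You propose to verify condition~\eqref{dissymmetry-condition:b} for $z=\zeta[\nu]$ using only $|z(q)|\le \nu(q)\e^{b(q)}$ and the hypotheses~\eqref{eq:nuassumption}. The paper explicitly notes that this route ``would result in more restrictive conditions'' than~\eqref{eq:nuassumption} actually provides. Concretely, the second part of~\eqref{dissymmetry-condition:b} asks for $\e^{a+\beta B}|z|\in\mathsf V_b$, i.e.\ that $\e^{a+\beta B+b}\nu$ satisfy an inequality of type~\eqref{suffsuff}; but $\nu\in\mathsf V_b$ only gives~\eqref{suffsuff} for $\nu$ itself, not for $\nu$ multiplied by the extra weight $\e^{a+\beta B+b}$. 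Similarly the third part requires $\int (1+b)\e^{a+\beta B}|z|<\infty$, which via your bound becomes $\int(1+b)\e^{a+\beta B+b}\dd\nu$, and~\eqref{eq:nuassumption} only supplies $\int(1+b)\dd\nu$ and $\int\e^{\beta B+b}\dd\nu$ separately.

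The paper's fix is to bypass~\eqref{dissymmetry-condition:b} entirely for this part. It takes the identity~\eqref{eq:virmain2b} composed with $\zeta$ as a \emph{formal} power series equality in $\nu$, and then checks absolute convergence of both sides directly: the left side $\Xi[\zeta[\nu]]$ is expanded via~\eqref{fcomp2} and shown to equal $\Xi[\mu]$ for the auxiliary measure $\mu(\dd q)=|\nu|(\dd q)\bigl(1+\sum_n\frac{1}{n!}\int|t_n(q;\cdot)|\,|\nu|^n\bigr)$, which satisfies the finite-volume condition~\eqref{eq:fivo} thanks to~\eqref{mainest1} and the third hypothesis in~\eqref{eq:nuassumption}. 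This is the missing ingredient you need to supply in place of the direct verification of~\eqref{dissymmetry-condition:b}.
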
 

\noindent Let us first check that condition~\eqref{suffsuff} is sufficient for the convergence of $A(q;z)$.

\begin{lemma} \label{lem:abkey}
	If $\nu$ satisfies condition~\eqref{suffsuff} for some $a,b:\mathbb X\to \R_+$ with $a\leq b$, then $\nu$ satisfies $|A(q;\nu)| \leq a(q)$ and in particular condition~\eqref{suff1}, where $A$ is defined as in \eqref{Adef} with $A_n$ given by~\eqref{A1}. 
\end{lemma}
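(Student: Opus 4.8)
The plan is to bound $|A(q;\nu)|$ termwise by splitting the combinatorial factor in~\eqref{A1} into a graph part and a ``attach the root'' part. First I would recall that for any connected graph $g$ on $[n]$ one has the tree-graph bound: $\bigl|\sum_{g\in\mathcal C_n}\prod_{\{i,j\}\in E(g)}f(x_i,x_j)\bigr| = |\varphi_n^{\mathsf T}(x_1,\dots,x_n)|$ is controlled by the Procacci--Yuhjtman version of the tree-graph inequality, namely
\begin{equation*}
	|\varphi_n^{\mathsf T}(x_1,\dots,x_n)| \leq \e^{\sum_{i=1}^n \beta B(x_i)} \sum_{T\in\mathcal T_n}\prod_{\{i,j\}\in E(T)}\bar f(x_i,x_j),
\end{equation*}
where $\mathcal T_n$ is the set of trees on $[n]$. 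For the prefactor I would use
\begin{equation*}
	\Bigl|\prod_{j=1}^n(1+f(q,x_j))-1\Bigr| \leq \prod_{j=1}^n(1+|f(q,x_j)|)-1 \leq \sum_{\varnothing\neq S\subset[n]}\prod_{j\in S}\bar f(q,x_j)\,\e^{\beta B^*(x_j)}
\end{equation*}
after noting $|f(q,x_j)| = |\e^{-\beta V(q,x_j)}-1| \le \bar f(q,x_j)\e^{\beta B^*(x_j)}$ (using~\eqref{stabextra} to control the possibly negative part of $V$); alternatively $\bar f$ already bounds $|f|$ when $V\ge 0$, and the $B^*$ factor handles the general stable case.

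Next I would combine these: after expanding, $|A_n(q;x_1,\dots,x_n)|$ is bounded by a sum over a nonempty ``contact set'' $S$ of root-edges together with a spanning tree on $[n]$, each edge carrying a $\bar f$, with an overall weight $\exp(\sum_i \beta B(x_i) + \sum_{j\in S}\beta B^*(x_j))$. The idea is that the root-edges to $S$ plus a tree on $[n]$ assemble into a tree on $\{q\}\cup[n]$ that is connected to the root, but possibly with extra ``root decorations''. Actually the cleanest route: bound $\prod_j(1+|f(q,x_j)|)-1 \le \sum_{j=1}^n |f(q,x_j)|\prod_{k\ne j}(1+|f(q,x_k)|)$ is not quite enough; instead I would directly observe that the product-of-trees structure means $\frac{1}{n!}\int |A_n|\,|\nu|^n$ is dominated by a sum over rooted trees on $\{0,1,\dots,n\}$ (root $0\leftrightarrow q$) in which the root has at least one child, with weights $\bar f$ on edges. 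Writing $c(x):=\int_{\mathbb X}\bar f(x,y)\,\e^{a(y)+b(y)+\beta B(y)+\beta B^*(y)}|\nu|(\dd y)$, the standard Penrose/Ruelle tree-summation argument gives $\sum_{n\ge1}\frac1{n!}\int|A_n(q;\vect x)||\nu|^n(\dd\vect x) \le \e^{\tilde c(q)}-1$ where $\tilde c(q)$ solves a fixed-point inequality; under~\eqref{suffsuff}, which says precisely $c(x)\le a(x)$, induction yields the partial sums are bounded, so $|A(q;\nu)|\le a(q)$ (the prefactor ``$-1$'' removing the empty-root-child term and giving $\e^{a(q)}-1\le a(q)$... no: one wants $\le a(q)$, so more care—one shows the generating function of these root-connected trees is $\le a(q)$ directly by the same inductive scheme as in the proof of Theorem~\ref{thm:main1}, with $b$ there replaced by the pair $(a,b)$).

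The cleanest implementation, and the one I would write, is to reduce to Theorem~\ref{thm:main1}: I would show~\eqref{suffsuff} with the $(a,b)$ pair implies condition~\eqref{suff1} ($\mathscr S_b$) for the function $b$, because $|A_n|$ carries weight $\e^{\sum\beta B}$ from the tree-graph bound and the spanning-tree/contact-set sum is dominated by $\prod$ of $\bar f$-integrals, each bounded using $a\le b$ so that the $\e^{a(x_j)}$ appearing from iterating~\eqref{suffsuff} is absorbed into $\e^{b(x_j)}$; matching this against the exponential weight $\e^{\sum b(x_j)}$ demanded in~\eqref{suff1} is exactly where one needs $a+b+\beta B+\beta B^*$ in~\eqref{suffsuff} versus $b$ in~\eqref{suff1}. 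Simultaneously one gets the sharper bound $|A(q;\nu)|\le a(q)$ rather than just $\le b(q)$ by running the induction against $a$ on the outer variable and $b$ on the inner ones. The main obstacle, and where I would spend the most care, is bookkeeping the two distinct weight functions $a$ (outer/root) and $b$ (inner) through the tree-graph expansion together with the extra root-prefactor $\prod_j(1+f(q,x_j))-1$: one must check that the ``ghost root'' structure of $A_n$ (the root contributes via the prefactor, not via a $\varphi^{\mathsf T}$-vertex) lines up so that precisely the $a$-weight, not the $b$-weight, is produced on the left side, which is what later makes the condition $a\le b$ (rather than $a=b$) both necessary and sufficient for the scheme to close.
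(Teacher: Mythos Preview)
Your proposal assembles the right raw materials---the Procacci--Yuhjtman tree-graph inequality for $|\varphi_n^{\mathsf T}|$ and the elementary bound $|f(q,x_j)|\le \bar f(q,x_j)\e^{\beta B^*(x_j)}$---but it never settles on a single argument, and the one concrete estimate you write for the prefactor is too crude to close the proof. Expanding $\bigl|\prod_j(1+f(q,x_j))-1\bigr|\le \sum_{\varnothing\neq S\subset[n]}\prod_{j\in S}\bar f(q,x_j)\e^{\beta B^*(x_j)}$ attaches the root $q$ to an \emph{arbitrary nonempty subset} of the cluster. Combined with a spanning tree on $[n]$, this produces graphs on $\{0,1,\ldots,n\}$ with up to $n-1$ surplus edges through the root, and the resulting sum does not factor in a way that matches condition~\eqref{suffsuff}. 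You sense this yourself (``is not quite enough''), but the fallback plan---reducing to Theorem~\ref{thm:main1} by an induction with two weight functions---remains a description of what the answer should look like rather than an argument.

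The missing step is a \emph{linear} bound on the prefactor: by a short induction on $n$, using $|1+f(q,x)|=\e^{-\beta V(q,x)}\le \e^{\beta B^*(x)}$ and $|f(q,x)|\le \e^{\beta B^*(x)}\bar f(q,x)$, one has
\[
\Bigl|\prod_{j=1}^n(1+f(q,x_j))-1\Bigr|\ \le\ \e^{\beta\sum_{j=1}^n B^*(x_j)}\sum_{i=1}^n \bar f(q,x_i).
\]
This extracts a \emph{single} $\bar f$-edge from $q$ into the cluster. Plugging it in together with the tree-graph bound, the weighted series $\sum_n\frac{1}{n!}\int|A_n(q;\vect x)|\e^{\sum_j b(x_j)}|\nu|^n(\dd\vect x)$ collapses to
\[
\int_{\mathbb X}\bar f(q,y)\,\mathcal R\bigl(y;\e^{\beta B^*+b}|\nu|\bigr)\,\e^{\beta B^*(y)+b(y)}|\nu|(\dd y),
\]
where $\mathcal R(y;\mu)=1+\sum_{m\ge1}\frac{1}{m!}\int|\varphi_{m+1}^{\mathsf T}(y,\vect x)|\mu^m(\dd\vect x)$. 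Condition~\eqref{suffsuff} says exactly that $\mu:=\e^{\beta B^*+b}|\nu|$ satisfies the Koteck\'y--Preiss-type hypothesis $\int\bar f(x,y)\e^{a(y)+\beta B(y)}\mu(\dd y)\le a(x)$, under which the standard inductive argument gives $\mathcal R(y;\mu)\le \e^{a(y)+\beta B(y)}$. Substituting and applying~\eqref{suffsuff} once more yields the bound $\le a(q)$, hence also $\le b(q)$, giving both conclusions at once. The role of the two weights is thus not an ``outer/inner induction'' as you describe, but simply: $b$ is the weight demanded by condition~\eqref{suff1}, while $a$ is what the final application of~\eqref{suffsuff} actually delivers.
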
 

\begin{proof}
	Set 
	$$
		\mathcal R(y;\mu):= 1 +  \sum_{m=1}^\infty \frac{1}{m!}\int_{\mathbb X^m} \bigl|	\varphi_{m+1}^\mathsf T(y,x_1,\ldots,x_m)\bigr|\, |\mu|^m(\dd \vect x).
	$$
The first factor in \eqref{A1} can be bounded as follows
	\be\label{usualmagicformula}
		\left|\prod_{j=1}^n (1+f(q,x_j))-1\right|
	\leq \e^{\beta \sum_{j=1}^n B^*(x_j)}\sum_{i=1}^n \bar f(q,x_i) 
	\ee
	Indeed, this follows by induction in $n$ using 
	\[
	\left| \prod_{j=1}^{n+1} (1+f(q,x_j))-1 \right| \leq |1 + f(q, x_{n+1}) | \left| \prod_{j=1}^{n} (1+f(q,x_j))-1 \right| + |f(q, x_{n+1}) |
	\]
	 and using that $| e^{-u} -1 | \leq e^{\max\{-u,0\}} ( 1- e^{-|u|})$.
	Using this bound, we get 
	\begin{align}
		& \sum_{n=1}^\infty \frac{1}{n!}\int_{\mathbb X ^n }|A_n(q;x_1,\ldots,x_n)|\e^{\sum_{j=1}^n b(x_j)} |\nu|(\dd x_1)\cdots |\nu|(\dd x_n) \notag \\
		&\quad \leq \sum_{n=1}^\infty \frac{1}{n!}\int_{\mathbb X^n} \e^{ \sum_{j=1}^n (\beta B^*(x_j)+b(x_j))}\sum_{i=1}^n \bar f(q,x_i)\, 	\bigl|	\varphi_n^\mathsf T(x_1,\ldots,x_n)\bigr|\, |\nu|(\dd x_1)\cdots |\nu|(\dd x_n)\notag \\
		& \quad = \int_{\mathbb X} \bar f(q,y)\,   \mathcal R\bigl(y; \e^{\beta B^*+b} |\nu|\bigr)\,  \e^{\beta B^*(y)+ b(y)}\, |\nu|\, (\dd y). \label{unours}
	\end{align} 
	In order to bound $\mathcal R(q;\e^{\beta B^*+b}\nu)$, we use a recent tree-graph inequality due to Procacci and Yuhjtman~\cite{procacci-yuhjtman2017} in the form presented in~\cite{ueltschi2017}.  Then 
	$$
		\bigl|\varphi_n^\mathsf T(x_1,\ldots,x_n)\bigr| 
			\leq \e^{\beta B(x_1)+\cdots +\beta B(x_n)} \sum_{T\in \mathcal T_n}\prod_{\{i,j\}\in E(T)} \bar f(x_i,x_j), 
	$$
	with $\mathcal T_n\subset \mathcal C_n$ the set of trees with vertex set $[n]$. As a consequence, if a non-negative measure $\mu$ satisfies 
	\be \label{fbarcond}
		\int_\mathbb X \bar f(q,y) \e^{a(y)+\beta B(y)} \mu(\dd y) \leq a(q)
	\ee
	for all $q\in \mathbb X$, then 
	\be \label{pyrq}
		\mathcal R(q;\mu) \leq \e^{a(q)+\beta B(q)}.
	\ee
	The inductive proof of~\eqref{pyrq} is similar to the proof of \cite[Theorem 2.1]{poghosyan-ueltschi2009}  and therefore omitted. 
Condition~\eqref{suffsuff} implies that  $\mu:= \exp(\beta B^*+ b)|\nu|$ satisfies
	$$
		\int_{\mathbb X} \bar f(x,y)\e^{a(y)+ \beta B(y)}\mu(\dd y) 
			= \int_{\mathbb X} \bar f(x,y)\, \e^{a(y) +  \beta B(y) + \beta B^*(y) + b(y)} \, |\nu|(\dd y) \leq a(y).
	$$
	Hence ~\eqref{fbarcond} and~\eqref{pyrq} hold true, and we can further bound~\eqref{unours} by 
	\begin{multline*} 
		 \int_{\mathbb X} \bar f(q,y)   \mathcal R\bigl(y; \e^{\beta B^*+b} |\nu|\bigr)\,  \e^{\beta B^*(y)+ b(y)}\, |\nu|\, (\dd y) 	\\	 \leq  \int_{\mathbb X} \bar f(q,y)\,  \e^{ a(y)+ \beta B(y)+\beta B^*(y) + b(y)}\, |\nu|\, (\dd y) \leq a(q) \leq b(q)
	\end{multline*}
	which completes the proof. 
\end{proof} 

\noindent Next let us relate the coefficients of $A(q;z)$ with bi-connected graphs.

\begin{lemma} \label{lem:biconnected}
	The formal power series $A(q;z)$ with coefficients~\eqref{A1} satisfies 
	\be \label{eq:biconnected}
		- A(q;z) = \sum_{n=1}^\infty \frac{1}{n!}\int_{\mathbb X^n} D_{n+1}(q,x_1,\ldots,x_n)\prod_{i=1}^n \e^{- A(x_i;z)} z^n(\dd \vect x).
	\ee
\end{lemma}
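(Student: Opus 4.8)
The plan is to establish \eqref{eq:biconnected} as an identity of formal power series in $z$ by a block decomposition argument for connected graphs, so that the exponentials $\e^{-A(x_i;z)}$ on the right are exactly what resums the pieces hanging off a two-connected ``core''.

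The first step is to rewrite the coefficients in \eqref{A1} graphically. Expanding $\prod_{j=1}^n(1+f(q,x_j))-1=\sum_{\varnothing\ne S\subseteq[n]}\prod_{j\in S}f(q,x_j)$ and inserting the definition \eqref{eq:ursell} of $\varphi_n^{\mathsf T}$ shows that $-A_n(q;x_1,\dots,x_n)$ equals $\sum_g\prod_{\{i,j\}\in E(g)}f(\,\cdot\,,\cdot\,)$, the sum running over connected graphs $g$ on the vertex set $\{0,1,\dots,n\}$ (with $0$ standing for $q$ and $i$ for $x_i$) whose subgraph induced on $\{1,\dots,n\}$ is still connected, i.e. over connected graphs on $\{0,1,\dots,n\}$ in which $0$ is not a cut vertex. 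Hence $-A(q;z)$ is the exponential generating function of connected graphs rooted at $q$ with non-cut root, the root carrying no factor of $z$ and each other vertex one factor of $z$.

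The second and main step is the block decomposition at the root. Given such a $g$ on $\{0\}\cup V$, since the blocks of $g$ partition its edge set and a vertex lies in more than one block only when it is a cut vertex, the root $0$ lies in a unique block $B$; let $U\subseteq V$ be the non-root vertices of $B$, so $B$ is a two-connected graph on $\{0\}\cup U$ (a single edge counts, as $\mathcal B_2$ is the one-edge graph). For $u\in U$ set $W_u:=\{w\in V\setminus U:\ \text{every path of }g\text{ from }w\text{ to }0\text{ passes through }u\}$. I would check, using standard block--cut-tree theory, that the $W_u$ are pairwise disjoint and disjoint from $U$ with $V=U\cup\bigcup_{u\in U}W_u$, that every edge of $g$ incident to $W_u$ stays inside $\{u\}\cup W_u$, and that the subgraph induced on $\{u\}\cup W_u$ is connected; and that $g\mapsto\bigl(B,(W_u)_{u\in U},(g|_{\{u\}\cup W_u})_{u\in U}\bigr)$ is a bijection onto the set of tuples made of a nonempty $U\subseteq V$, a two-connected graph $B$ on $\{0\}\cup U$, an ordered partition $(W_u)_{u\in U}$ of $V\setminus U$ with empty blocks allowed, and, for each $u$, a connected graph rooted at $u$ on $\{u\}\cup W_u$; the $f$-weight of $g$ factorizes as the weight of $B$ times the weights of the pendant pieces. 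Summing over $g$ with weight $z^{\#V}/(\#V)!$ and translating this factorization into formal power series --- which is precisely the composition operation of Eq.~\eqref{fcomp2} in Appendix~\ref{ApFormal} --- the core $B$ contributes $D_{\#U+1}(q,\cdot)$ from \eqref{eq:Dn} and each pendant graph rooted at $u$ contributes $1+\sum_{k\ge1}\tfrac1{k!}\int_{\mathbb X^k}\varphi_{k+1}^{\mathsf T}(x_u,\vect w)z^k(\dd\vect w)$, plus one factor $z(\dd x_u)$ for the vertex $u$ itself. This gives
\begin{equation*}
	-A(q;z)=\sum_{m=1}^\infty\frac1{m!}\int_{\mathbb X^m}D_{m+1}(q,y_1,\dots,y_m)\prod_{i=1}^m\Bigl[\Bigl(1+\sum_{k=1}^\infty\tfrac1{k!}\int_{\mathbb X^k}\varphi_{k+1}^{\mathsf T}(y_i,\vect w)z^k(\dd\vect w)\Bigr)z(\dd y_i)\Bigr].
\end{equation*}

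The third step identifies the bracket. Running the first two steps with the root \emph{deleted} rather than its block peeled off --- that is, decomposing an arbitrary connected rooted graph into the connected components remaining after removal of the root, each being a connected rooted graph with non-cut root --- yields the exponential-formula identity $\e^{-A(y;z)}=1+\sum_{k\ge1}\tfrac1{k!}\int_{\mathbb X^k}\varphi_{k+1}^{\mathsf T}(y,\vect w)z^k(\dd\vect w)$, which is also contained in Lemma~\ref{lem:prelim}. By Definition~\ref{def:rho1} the bracket therefore equals $\e^{-A(y_i;z)}z(\dd y_i)=\rho(\dd y_i;z)$, and the display above becomes \eqref{eq:biconnected}. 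The main obstacle is the second step: setting up the block-decomposition bijection carefully (uniqueness of the block through the root, disjointness of the hanging parts $W_u$, absence of crossing edges) and matching this graph surgery precisely with the formal composition \eqref{fcomp2}, keeping the $1/n!$ symmetry factors correct; the first and third steps are routine once this combinatorics is in place.
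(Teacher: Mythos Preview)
Your proposal is correct and follows essentially the same route as the paper: both identify $-A_n(q;\cdot)$ with the weighted sum over connected graphs on $\{0,1,\ldots,n\}$ in which the root $0$ is not an articulation point, and both establish \eqref{eq:biconnected} via the block-decomposition bijection that splits such a graph into its unique biconnected block through $0$ together with arbitrary connected pendant graphs hanging off the other vertices of that block. The only cosmetic difference is direction---the paper computes the coefficients of the right-hand side and matches them to $-A_n$, whereas you decompose $-A_n$ and recognize the result as the right-hand side---and your explicit invocation of block--cut-tree language where the paper describes the bijection ad hoc.
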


\begin{proof} 
	The lemma follows from well-known identities for connected and bi-connected graphs, see for example~\cite{stell1964, leroux2004, faris2012biconnected, morita-hiroike3}, we sketch the argument for the reader's convenience. 
	If $J\subset \N$ is a finite non-empty set, consider the following classes of graphs with vertex set $J\cup \{0\}$: 
	\begin{itemize}
		\item $\mathcal C^\circ(J)$, the connected graphs on $J\cup \{0\}$; 
		\item $\mathcal B^\circ(J)$, the biconnected graphs on $J\cup\{0\}$;
		\item $\mathcal A^\circ(J)$, the connected graphs that stay connected when removing $0$ and the incident edges (equivalently, the connected graphs for which $0$ is not an articulation point). 
	\end{itemize} 	
	If $g$ is a graph with vertex set $J\cup\{0\}$, define $w(g;(x_i)_{i\in J\cup\{0\}}) = \prod_{\{i,j\}\in E(g)} f(x_i,x_j)$. Then 
	\be \label{eq:anconn}
		- A_n(q;x_1,\ldots,x_n) = \sum_{g\in \mathcal A^\circ([n])} w(g;q,x_1,\ldots,x_n). 
	\ee
	In view of~\eqref{eq:formal-exponential}, setting $x_0=q$, the coefficients of $\exp( - A(q;z))$ are given by 
	\begin{align}
		\mathcal E_n(q;x_1,\ldots, x_n) &=\sum_{m=1}^n \sum_{\{J_1,\ldots, J_m\}\in \mathcal P_n} \prod_{k=1}^n\Biggl( \sum_{g_k\in \mathcal A^\circ(J_k)} w(g_k; (x_j)_{j\in J_k\cup \{0\}}) \Biggr)\notag  \\
		&= \sum_{g \in \mathcal C^\circ([n])} w(g;q,x_1,\ldots,x_n). \label{eq:enconn} 
	\end{align} 
	By  Eq.~\eqref{fcomp2}, the right-hand side of~\eqref{eq:biconnected} is a power series $F(q;z)$ with coefficients 
	$$
		F_n(q;x_1,\ldots, x_n)  = \sum_{m=1}^n \sum_{\substack{L\subset [n]  \\ \#L =m}}  D_{m+1}\bigl((x_j)_{j\in J\cup\{0\}}\bigr) \sum_{\substack{(J_\ell)_{\ell\in L }:\\  \dot \cup_{\ell \in L} J_\ell = [n]\setminus L}} \prod_{\ell \in L} \mathcal E_{\# J_\ell}\bigl(x_\ell; (x_j)_{j\in  J_\ell}\bigr).
	$$
	Eq.~\eqref{eq:enconn} allows us to rewrite $F_n(q;x_1,\ldots,x_n)$ as a sum over tuples $(m,g_0,g_1,\ldots, g_m)$ consisting of an integer $m\in \{1,\ldots,n\}$ and graphs $g_0\in \mathcal B^\circ(L)$, $g_\ell \in \mathcal C^\circ(J_\ell)$ where $\# L=m$ and $L,J_1,\ldots,J_\ell$ form a partition of $[n]$ with $J_\ell = \varnothing$ allowed. 
	Given such a tuple $(m,g_0,g_1,\ldots,g_m)$, a new graph $g$ is defined by gluing each $g_\ell$ to $g_0$ at the vertex $\ell$ (the vertex $\ell$ is identified with root $0$ of $g_\ell$). Precisely, $\{i,j\}$ is an edge of $g$ if and only if:
	\begin{itemize}
		\item either $i,j \in L$ and $\{i,j\}\in E(g_0)$, 
		\item or for some $\ell\in L$ we have $i,j\in J_\ell$ and $\{i,j\}\in E(g_\ell)$,
		\item or for some $\ell\in L$ we have $i=\ell$ and $j\in J_\ell$ (or vice-versa) and $\{0,j\}\in E(g_\ell)$.
	\end{itemize}
	In the new graph $g$, each of the vertices $\ell\in L$ is an articulation point (that is upon the removal of $\ell$ and the edges incident to $\ell$ the graph $g$ has a connect component which does not contain $0$. However, note that  there can be other articulation points inside the $J_\ell$'s!), and the support $J_\ell$ of the graph $g_\ell$ consists of those vertices $j\in [n]$ for which every path connecting $j$ to $0$ has to pass through $\ell$. The weight of the new graph is equal to the product of the weights of the $g_\ell$'s.
	
	The rule $(m,g_1,\ldots,g_m)\mapsto g$ defines a one-to-one correspondence between the tuples under consideration and graphs $g\in \mathcal A^\circ([n])$, and the weights are multiplicative. One deduces that $F_n(q;x_1,\ldots,x_n)$ is given by a sum over graphs $g\in \mathcal A^\circ ([n])$ and weights as in~\eqref{eq:anconn}, therefore~\eqref{eq:biconnected} holds true. 
\end{proof}

\noindent As a consequence we can identify the coefficients of $T_q^\circ(\nu)$.

\begin{lemma}\label{lem:tada}
	For $A_n(q;x_1,\ldots,x_n)$ given by~\eqref{A1}, the family $(T_q^\circ)_{q\in\mathbb X}$ from Lemma~\ref{lem:treesolution} is given by 
	\be \label{tada}
		T_q^\circ(\nu) =  \exp\Biggl( - \sum_{n=1}^\infty \frac{1}{n!} \int_{\mathbb X^n} D_{n+1}(q,x_1,\ldots,x_n) \nu(\dd x_1)\cdots \nu(\dd x_n) \Biggr).
	\ee
\end{lemma}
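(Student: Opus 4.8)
The plan is to identify $T_q^\circ(\nu)$ by showing that the right-hand side of~\eqref{tada} satisfies the defining fixed point equation~\eqref{tree-eq}, and then invoke the uniqueness in Lemma~\ref{lem:treesolution}. Equivalently, since Lemma~\ref{lem:tree-sol2} gives an alternative characterization of the same family via~\eqref{tree-eq2}, one could verify either equation; I expect~\eqref{tree-eq} to be the more direct target here because its exponent is literally $A(q;\nu T^\circ(\nu))$. So first I would set
\[
	S_q(\nu) := \exp\Biggl( - \sum_{n=1}^\infty \frac{1}{n!} \int_{\mathbb X^n} D_{n+1}(q,x_1,\ldots,x_n) \nu(\dd x_1)\cdots \nu(\dd x_n) \Biggr)
\]
as a formal power series in $\nu$, and aim to show $S_q(\nu) = \exp\bigl(A(q;\nu S(\nu))\bigr)$, i.e.\ that $-\log S_q(\nu) = -A(q;\nu S(\nu))$ as formal power series.

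The key input is Lemma~\ref{lem:biconnected}, which states
\[
	- A(q;z) = \sum_{n=1}^\infty \frac{1}{n!}\int_{\mathbb X^n} D_{n+1}(q,x_1,\ldots,x_n)\prod_{i=1}^n \e^{- A(x_i;z)} z^n(\dd \vect x)
\]
as an identity of formal power series in $z$. The strategy is to substitute $z(\dd x) = \nu(\dd x)\, T_x^\circ(\nu)$ into this identity, exactly the composition of formal power series discussed in Appendix~\ref{ApFormal} (Eq.~\eqref{fcomp2}), which was already used in the proof of Lemma~\ref{lem:tree-sol2}. After substitution, the left-hand side becomes $-A(q;\nu T^\circ(\nu))$, which by the fixed point equation~\eqref{tree-eq} equals $-\log T_q^\circ(\nu)$. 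On the right-hand side, the factor $\prod_i \e^{-A(x_i;z)}$ becomes $\prod_i \e^{-A(x_i;\nu T^\circ(\nu))} = \prod_i \bigl(T_{x_i}^\circ(\nu)\bigr)^{-1}$, again by~\eqref{tree-eq}, and this cancels against the factors $T_{x_i}^\circ(\nu)$ coming from $z(\dd x_i) = \nu(\dd x_i) T_{x_i}^\circ(\nu)$. Hence the right-hand side collapses to
\[
	\sum_{n=1}^\infty \frac{1}{n!}\int_{\mathbb X^n} D_{n+1}(q,x_1,\ldots,x_n)\, \nu(\dd x_1)\cdots \nu(\dd x_n) = -\log S_q(\nu).
\]
Thus $-\log T_q^\circ(\nu) = -\log S_q(\nu)$ as formal power series, so $T_q^\circ(\nu) = S_q(\nu)$, which is~\eqref{tada}.

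The main obstacle, as usual in this circle of arguments, is purely bookkeeping at the level of formal power series: one must make sure that the substitution $z \mapsto \nu T^\circ(\nu)$ is a legitimate operation (it is, because $T_x^\circ(\nu) = 1 + O(\nu)$, so there are no constant-term issues and each coefficient is a finite sum), and that the cancellation $\prod_i \e^{-A(x_i;z)} \cdot \prod_i T_{x_i}^\circ(\nu) = 1$ is applied inside the integral in a way compatible with the definition of composition in Appendix~\ref{ApFormal} — this is precisely the same manipulation already carried out in the proof of Lemma~\ref{lem:tree-sol2}, so I would simply refer to it. No convergence is needed for the identification itself; if one additionally wants~\eqref{tada} as an identity of absolutely convergent series, that follows from Theorem~\ref{thm:main1} together with Theorem~\ref{thm:virmain1}, whose hypothesis~\eqref{suffsuff} was shown in Lemma~\ref{lem:abkey} to imply condition~\eqref{suff1}.
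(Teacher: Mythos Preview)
Your argument is correct and is essentially the same as the paper's: both rest on Lemma~\ref{lem:biconnected} and the fact that substituting $z=\nu T^\circ(\nu)$ turns $\prod_i \e^{-A(x_i;z)} z(\dd x_i)$ into $\nu(\dd x_i)$. The only cosmetic difference is packaging: the paper observes that the exponentiated form of Lemma~\ref{lem:biconnected} is precisely the equation~\eqref{tree-eq2} for the candidate family $S_q$, and then invokes the uniqueness statement of Lemma~\ref{lem:tree-sol2} directly, whereas you perform the substitution $z\mapsto \nu T^\circ(\nu)$ by hand (which is exactly the computation inside the proof of Lemma~\ref{lem:tree-sol2}) to read off $\log T_q^\circ(\nu)=\log S_q(\nu)$. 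Your opening remark that \eqref{tree-eq} would be ``the more direct target'' is a slight red herring---what you actually do is the \eqref{tree-eq2} route, just unpacked---but the argument itself is sound.
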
 

\begin{proof}
	Lemma~\ref{lem:biconnected} yields
	\be
		\exp\Biggl( - \sum_{n=1}^\infty \frac{1}{n!} \int_{\mathbb X^n} D_{n+1}(q,x_1,\ldots,x_n) \prod_{i=1}^n \e^{- A(x_i;z)} z^n(\dd \vect x)\Biggr) = 	\e^{A(q;z)}.  
	\ee
	Hence the right-hand side of~\eqref{tada} solves the fixed point equation~\eqref{tree-eq2} as considered in Lemma~\ref{lem:tree-sol2},furthermore the lemmas yields that, as the solution of the fixed point equation, the right hand side must be equal to the family $(T_q^\circ)_{q\in\mathbb X}$ from Lemma~\ref{lem:treesolution}. 
\end{proof} 

\begin{proof}[Proof of Theorem~\ref{thm:virmain1}]
	If $\nu$ satisfies~\eqref{suffsuff}, then by Lemma~\ref{lem:abkey} it also satisfies~\eqref{suff1}. However, by Theorem~\ref{thm:main1}, it follows that~\eqref{mainest1} holds true as well, in particular $T_q^\circ(\nu)$ is absolutely convergent and $|T_q^\circ(\nu)|\leq \exp(b(q))$. Combining Eqs.~\eqref{tada} and~\eqref{tree-eq} we get 
	$$
		- \sum_{n=1}^\infty \frac{1}{n!} \int_{\mathbb X^n} D_{n+1}(q,x_1,\ldots,x_n) \nu^n(\dd \vect x)
			= \sum_{n=1}^\infty \frac{1}{n!}\int_{\mathbb X^n} A_n(q;x_1,\ldots,x_n) \prod_{i=1}^n T_{x_i}^\circ(\nu) \nu^n(\dd \vect x). 
	$$
as formal power series, that means, that the coefficients of the series coincides. If we take the absolute value of the coefficients and we reconstruct the right hand side of the above equality one gets that 
\begin{multline*}
				\sum_{n=1}^\infty \frac{1}{n!} \int_{\mathbb X^n} \bigl|D_{n+1}(q,x_1,\ldots,x_n)  \bigr|\, |\nu|^n(\dd \vect x)
=  \sum_{n=1}^\infty \frac{1}{n!}\int_{\mathbb X^n} \bigr|A_n(q;x_1,\ldots,x_n)\bigr| \prod_{i=1}^n \bigl|S_{x_i}(\nu)\bigr|\, |\nu|^n(\dd \vect x),
	\end{multline*}
where we define
\bes
	S_q(\nu) :=1+ \sum_{n=1}^\infty \frac{1}{n!} \int_{\mathbb X^n}\left| t_{n}(q;x_1,\ldots, x_n)\right| \   |\nu| (\dd x_1)\cdots  |\nu|(\dd x_n).
\ees	
	The right-hand side is bounded by $b(q)$ because of~\eqref{mainest1} and~\eqref{suff1}. 
\end{proof} 

\begin{proof}[Proof of Theorem~\ref{thm:virmain2}]
		Let $\zeta[\nu] (\dd q) = \zeta(\dd q;\nu)= \nu(\dd q) T_q^\circ(\nu)$ as in~\eqref{zetadef}. Set $\mathsf U_b:= \zeta[\mathsf V_b]$. By Lemma~\ref{lem:abkey}, we know that $\mathsf V_b\subset \mathscr V_b$ hence Theorem~\ref{thm:main2} guarantees $\mathsf U_b\subset \mathscr U_b \subset \mathscr D(A)$. 
 It follows from Theorem~\ref{thm:main2} that $\rho$ is a bijection from $\mathsf U_b$ onto $\mathsf V_b$ with inverse $\zeta$, hence $\rho[z] = \nu$ if and only if $z(\dd q) = \nu(\dd q) T_q^\circ (\nu)$. We insert the formula~\eqref{tada} from Lemma~\ref{lem:tada} for $T_q^\circ(\nu)$ and obtain~\eqref{eq:virmain2a}. 

Let $z\in \mathfrak M_\mathbb C$ satisfy \eqref{suffsuff} and $\e^{a}|z|\in \mathsf V_b$, then  $|A(q;z)|\leq a(q)$ by Lemma~\ref{lem:abkey}. By Definition~\ref{def:rho1}, the density is given by $\rho(\dd q;z) = z(\dd q) \e^{-A(q;z)}$ which is bounded by $|\rho|(\dd q; z) \leq | z|(\dd q) \e^{a(q)} \in \mathsf V_b$. 
 As $\zeta[\rho[z]]=z$ in the sense of formal power series and $\zeta$ is a convergent on $ \mathsf  V_b$, it remains to show that the composition is also convergent. For that we do not only need that $\rho[z] \in \mathsf V_b$ but also that all the interchanges are allowed, that is, an estimate in terms of $|z|$ and $|A_n|$, namely~\eqref{suff1}. Therefore, we finally get $z \in  \mathsf  U_b$.

%
%
%we get that
%\[
% \int_\mathbb X \bar f(x,y)\, \e^{a(y)+b(y)+ \beta B(y)+\beta B^*(y)}|\rho(\dd y;z) | \leq   \int_\mathbb X \bar f(x,y)\, \e^{a(y)+b(y)+ \beta B(y)+\beta B^*(y)} e^{a(y)}|z(\dd y)| \leq a(x)
%\]
%and hence $\rho[z] \in V_b$. 

	As an equality of formal power series, Eq.~\eqref{eq:virmain2b} follows from the dissymmetry theorem for connected and biconnected graphs and power series manipulations similar to the proof of Lemma~\ref{lem:biconnected}.  Precisely, we have the following identity 
\begin{multline} \label{eq:dissymmetry} 
		\varphi_n^\mathsf T(x_1,\ldots,x_n) = n \varphi_n^\mathsf T(x_1,\ldots,x_n) \\
		 -\sum_{m=2}^n (m-1) \sum_{\substack{L\subset[n]\\ \#L= m}} D_{m}\bigl( (x_\ell)_{\ell\in L} \bigr) \sum_{\substack{(J_\ell)_{\ell\in L}:\\ \dot \cup J_\ell = J}} \prod_{\ell \in L} \varphi_{\#J_\ell +1}^\mathsf T\bigl( (x_j)_{j\in J_\ell\cup \{\ell\}}\bigr).   	
\end{multline} 
The proof  of~\eqref{eq:dissymmetry} is easily adapted from~\cite[Theorem 3.1]{jttu2014} or~\cite{leroux2004} and therefore omitted. 
The first part of condition~\eqref{dissymmetry-condition:b} is		 condition~\eqref{PU} from Lemma~\ref{lem:prelim}, we have established \eqref{eq:virmain2b} in the sense of formal power series. Next, we check absolute convergence of the power series associated with the terms in Eq.~\eqref{eq:dissymmetry}. 

Let us consider~\eqref{eq:dissymmetry} term by term starting from the left. 
Consider
\be \label{defR}
	\mathcal R(q;|z|)= 1+ \sum_{n=1}^\infty \frac{1}{n!}\int_{\mathbb X^n} \bigl|\varphi_{n+1}^\mathsf T(q,x_1,\ldots,x_n)\bigr| |z|^n(\dd \vect x). 
\ee
The  first part of condition~\eqref{dissymmetry-condition:b} is the same as condition~\eqref{fbarcond} with $|z|$ instead of $\mu$, so we may apply the bound~\eqref{pyrq} and get that
\be \label{dis2}
	\mathcal R(q;|z|) \leq \e^{a(q)+ \beta B(q)}.
\ee
Hence the formal power series for $\log \Xi(\beta,z)$ is converging exactly in the sense that \eqref{defR} is finite.  

Next, by~\eqref{dis2} and condition~\eqref{dissymmetry-condition:b}, we also have
\be \label{dis3}
	\sum_{n=1}^\infty \frac{1}{n!}\int_{\mathbb X^n} \bigl| n \varphi_n^\mathsf T(x_1,\ldots,x_n)\bigr|\, |z|^n(\dd \vect x)  
	\leq \int_\mathbb X \mathcal R(x_1;|z|)\, |z|(\dd x_1) \leq \int_{\mathbb X} \e^{a(x_1) + \beta B(x_1)}\, |z|(\dd x_1) <\infty. 
\ee
which is the sense in which the power series for $\int_\mathbb X\rho(\dd x_1;z)$ converges.

Finally,  define $\tilde\nu(\dd q):= \mathcal R(q;|z|)\, |z|(\dd q)$ which by \eqref{dis2} is bounded by $\tilde \nu \leq \e^{a+\beta B} |z|$.

Now $\e^{a+\beta B}|z|$ is in $\mathsf V_b$ by the second condition in~\eqref{dissymmetry-condition:b} and therefore $\tilde \nu$  is in $\mathsf V_b$  as well. Thus we can bound 
\be \label{dissymmetry-converges}
\begin{aligned}
	&\sum_{n=2}^\infty \frac{1}{n!} \int_{\mathbb X^n} \Biggr(\sum_{m=2}^n m\sum_{\substack{L\subset[n]\\ \#L= m}} \bigl|D_{m}\bigl( (x_\ell)_{\ell\in L} \bigr)\bigr| \sum_{\substack{(J_\ell)_{\ell\in L}:\\ \dot \cup J_\ell = J}} \prod_{\ell \in L} \bigl|\varphi_{\#J_\ell +1}^\mathsf T\bigl( (x_j)_{j\in J_\ell\cup \{\ell\}}\bigr)\bigr|\Biggr)|z|^n(\dd x) \\
	&\quad = \sum_{m=2}^\infty \frac{1}{m!}\int_{\mathbb X^m} m \bigl| D_m(x_1,\ldots,x_m)\bigr| \Biggl( \prod_{i=1}^m \mathcal R(x_i;|z|)\Biggr) |z|^m(\dd \vect x)\\
	& \quad = \int_{\mathbb X} \Biggl( \sum_{m=1}^\infty \frac{1}{m!} \int_{\mathbb X^m} \bigl| D_{m+1}(q,x_1,\ldots,x_m) \bigr|\, \tilde \nu^m(\dd \vect x) \Biggr) \tilde \nu (\dd q) \\
	&\quad \leq \int_\mathbb X b(q) \tilde \nu(\dd q) \leq \int_{\mathbb X} b(q) \e^{a(q) + \beta B(q)} |z|(\dd q) <\infty,
\end{aligned} 
\ee
where in the third but last inequality we applied ~\eqref{virMb} with $\tilde \nu$ instead of $|\nu|$.
At the very end we have used again condition~\eqref{dissymmetry-condition:b}. 
This is the sense in which the third term converges. Note that the sense of convergence is strong enough, such that that also re-ordering of the terms is converging so long one does not break up $D_m$. As a consequence, Eq.~\eqref{eq:virmain2b} holds true not only as an equality of formal power series but also as an equality of convergent sums. 
\end{proof} 

\begin{proof}[Proof of Theorem~\ref{thm:GCFE}] 
	The standard line of reasoning is as follows: we check that the solution $z$ to the equation $\rho[z]=\nu$---which exists by Theorem~\ref{thm:virmain2}---is a maximizer in~\eqref{GCFELT}, deduce a formula for $\mathcal F_{\mathrm GC}[\nu]$ in terms of the maximizer $z$, plug in~\eqref{eq:virmain2a} and~\eqref{eq:virmain2b}, and obtain the statement. The full proof requires us to check that all steps are fully justified.
	
It is convenient to rewrite the definition~\eqref{GCFELT} as 
	\be \label{eq:gcfelt2}
		\beta \mathcal F_\mathrm{GC}[\nu]= \sup_{h} \Biggl( \int_\mathbb X h(x) \nu(\dd x) - \log \Xi[\e^h m] \Biggr),
	\ee
	where the supremum is taken over all measurable $h:\mathbb X\to \R\cup \{-\infty\}$ such that $\int_{\mathbb X} |h|\dd \nu<\infty$. 
	
	Let $\nu\in \mathsf V_b$ satisfy the assumptions of the theorem. By Theorem~\ref{thm:virmain2}, the measure $z_0:=\zeta[\nu]$ satisfies $\rho[z_0]=\nu$. Therefore, it is of the form $z_0 (\dd q)= \e^{h_0(q)} m(\dd q)$ with 
	$$
		h_0(q)= \log \frac{\dd \nu}{\dd m}(q) - \sum_{n=1}^\infty \frac{1}{n!}\int_{\mathbb X^n} D_{n+1}(q,x_1,\ldots,x_n) \nu^n(\dd q). 
	$$
	We check that $h_0$ is a maximizer in~\eqref{eq:gcfelt2}. 	As a preliminary observation, we note that $|h_0(q)|\leq |\log \frac{\dd \nu}{\dd m}(q)| + b(q)$ using \eqref{virMb}, therefore condition~\eqref{eq:nuassumption} yields $\int_\mathbb X |h_0|\dd \nu <\infty$. Thus $h_0$ does indeed belong to the set over which the supremum in~\eqref{eq:gcfelt2} is taken. 
	
	Let $h:\mathbb X\to\R\cup \{-\infty\}$ be another function with $\int_\mathbb X|h|\dd \nu<\infty$. We need to check that 
	\be \label{eq:gcfelt-max}
		\int_\mathbb X h(x) \nu(\dd x) - \log \Xi[\e^h m] 
			\leq \int_\mathbb X h_0(x) \nu(\dd x) - \log \Xi[\e^{h_0} m].
	\ee
	By the last condition in~\eqref{eq:nuassumption}, the measure $z_0 = \e^{h_0} m$ satisfies condition~\eqref{eq:fivo} and so $\Xi[\e^{h_0}m]<\infty$ and the right-hand side in~\eqref{eq:gcfelt-max} is finite. If $\Xi[\e^h m] =\infty$, then the inequality~\eqref{eq:gcfelt-max} holds trivially true. If $\Xi[\e^h m]<\infty$, then the inequality~\eqref{eq:gcfelt-max} is equivalent to 
	\be \label{eq:gcfelt-max2}
		\log \Xi[\e^{h} m]\geq \log \Xi[\e^{h_0} m] + \int_{\mathbb X} (h-h_0)\dd \nu
	\ee	
	and it will be checked with the help of convexity. Set
	$$
		g(t) := \log \Xi\bigl[\e^{(1-t) h_0+ t h} m\bigr], \quad t\in [0,1].
	$$
		It is a well-known consequence of H{\"o}lder's inequality that $g(t)$ is convex. 
		
		Next we check that the right derivative of $g$ at zero exists and is given by $g'(0) = \int_\mathbb X(h-h_0)\dd \nu$. We look at the derivative of $\exp(g(t))$ first. Set $h_t:= (1-t) h_0 + th$. We have 
	\be \label{eq:gcdiff}
		\Xi[\e^{h_t}m]  - \Xi[\e^{h_0} m] 
			 = \sum_{n=1}^\infty \frac{1}{n!} \int_{\mathbb X^n}  \bigl( \e^{\sum_{i=1}^n h_t(x_i)} - \e^{\sum_{i=1}^n h_0(x_i))}\bigr) \e^{- \beta H_n(x_1,\ldots,x_n)} m^n(\dd \vect x).
	\ee
	To facilitate differentiation, we check that configurations with infinite $h_t(x_i)$'s do not contribute. As $\int_{\mathbb X} e^h \dd m \leq \Xi[\e^{ h} m] < \infty$ we have $m$-a.e. that $h <\infty $. Furthermore, we can see that
	\begin{align}
		&\Biggl| \sum_{n=1}^\infty \frac{1}{n!} \int_{\mathbb X^n}  \bigl( \e^{\sum_{i=1}^n h_t(x_i)} - \e^{\sum_{i=1}^n h_0(x_i))}\bigr) \1_{\{\exists i:\, h(x_i) =-\infty\}} \e^{- \beta H_n(x_1,\ldots,x_n)} m^n(\dd \vect x)\Biggr| \notag \\
			&\quad =  \sum_{n=1}^\infty \frac{1}{n!} \int_{\mathbb X^n}  \1_{\{\exists i:\, h(x_i) =-\infty\}} \e^{- \beta H_n(x_1,\ldots,x_n)} z_0^n(\dd \vect x) \notag  \\
			& \quad \leq \int_\mathbb X \1_{\{ h(q) = - \infty\}} \rho(\dd q;z_0) = \int_{\mathbb X }\1_{\{ h(q) = -  \infty\}} \nu(\dd q), \label{eq:gcdiff2}
	\end{align}
where we used in the first equality that $\e^{\sum_{i=1}^n h_t(x_i)} \1_{\{\exists i:\, h(x_i) =-\infty\}} =0$ and in the inequality that $\nu = \rho[z_0]$.	By choice of $h=h_1$, the integral $\int_{\mathbb X} |h|\dd \nu$ is finite, hence $h>-\infty$, $\nu$-almost everywhere.  It follows that the last expression in~\eqref{eq:gcdiff2} vanishes, hence also all preceding expressions in the chain of inequalities vanish. Therefore we have that $m$-a.e. holds $|h(x)| < \infty $. The same holds for $h_0$. 
As $|h_t(x)|= \infty$ only if either  $|h_0(x)|$ or $|h_1(x)| = |h(x_i)|$ are infinite we have that 
\bes
		C:= \{x\in \mathbb X\mid |h(x)| < \infty \mbox{ or } |h_0(x)| < \infty \} 
	\ees
has full $m$-measure and hence $h_t$ is well-defined on $C$.
	The considerations above yield 
	\be\label{eq:gcdiff3}
		\Xi[\e^{h_t}m]  - \Xi[\e^{h_0} m] 
			 = \sum_{n=1}^\infty \frac{1}{n!} \int_{C^n}  \bigl( \e^{ \sum_{i=1}^n h_t (x_i)} -\e^{ \sum_{i=1}^n h_0 (x_i)} \bigr)  \e^{- \beta H_n(x_1,\ldots,x_n)} m^n(\dd \vect x)
	\ee
	for all $t\in [0,1]$.
	We also have as $\nu = \rho[z_0]$ that
	\be \label{eq:gcdiff4}
		\Xi[\e^{h_0} m] \int_\mathbb X (h_1- h_0) \dd \nu = \sum_{n=1}^\infty \frac{1}{n!} \int_{C^n}  \sum_{i=1}^n\bigl( h_1(x_i) -h_0(x_i) \bigr) \e^{\sum_{i=1}^n h_0(x_i)} \e^{- \beta H_n(x_1,\ldots,x_n)} m^n(\dd \vect x).
	\ee
	and therefore it holds that
	\begin{multline}  \label{eq:gcdiff5}
			\frac{1}{t} \left(\Xi[\e^{h_t}m]  - \Xi[\e^{h_0} m] \right) - \Xi[\e^{h_0} m]\int_\mathbb X (h_1 - h_0) \dd \nu \\
			= \sum_{n=1}^\infty \frac{1}{n!} \int_{C^n} 
			\frac1t\Biggl( \e^{\sum_{i=1}^n h_t(x_i)} - \bigl(1+   t\sum_{i=1}^n[h_1(x_i) -h_0(x_i)]\bigr)  \e^{\sum_{i=1}^n h_0(x_i)} \Biggr) \e^{- \beta  H_n(x_1,\ldots,x_n)} m^n(\dd \vect x).
	\end{multline} 
	Each integrand goes to zero as $t\to 0$, we need a $t$-independent integrable upper bound in order to apply dominated convergence. For $a,u\in\R$ and $t>0$ we have 
	$$
		\frac1t\Bigl|\e^{a+tu} - \e^a (1 + t u) \Bigr|=\frac1 t \e^a \Bigl|\int_0^{tu} \bigl( \e^s - 1\bigr) \dd s \Bigr| \leq 
			|u| \e^a \left\{ \begin{array}{ll} 
			\e^{tu} & \mbox{for } u >0 \\ 1 & \mbox{for } u \leq 0 \end{array}\right.  .
	$$
 If $u>0$, pick $\eps\in (0,1)$ and assume $t\in (0,1-\eps)$.   We apply the inequality $x\e^{-x} \leq \e^{-1}$ to $x =\eps u$ and find that the upper bound is $u \exp(a+tu) \leq (\eps \e)^{-1}\exp( a+(t+\eps)u) \leq (\eps \e)^{-1} u \exp (a+u)$. Altogether we find 
	$$
		\frac1t\Bigl|\e^{a+tu} - \e^a (1 + t u) \Bigr| 
		\leq |u|\e^a+ \frac{1}{\eps \e} \e^{a+u}.
	$$
	This inequality applied to $\eps =1/2$,  $a=\sum_i h_0(x_i)$ and $u = \sum_i (h_1(x_i) - h_0(x_i))$ yields, for $t\in(0,1/2)$, that the integrand in~\eqref{eq:gcdiff5} is bounded in absolute value by 
	$$
		\sum_{i=1}^n \bigl|h_1(x_i) - h_0(x_i)\bigr| \e^{\sum_{i=1}^n h_0(x_i)} +  \frac{1}{\eps \e}\, \e^{\sum_{i=1}^n h_1(x_i)}  \e^{- \beta  H_n(x_1,\ldots,x_n)}.
	$$
When one  integrates over $x_1,\ldots,x_n$, multiply with $\frac{1}{n!}$, sum over $n$, one obtains
	$$
		\int_C |h- h_0| \dd \nu + \Xi[\e^h m] <\infty.
	$$
	Thus we may apply dominated convergence to~\eqref{eq:gcdiff5} and find that indeed 
	$$
		\lim_{t\searrow 0} \frac{1}{t} 	\Bigl(\Xi[\e^{h_t}m]  - \Xi[\e^{h_0} m] \Bigr) = \Xi[\e^{h_0} m] \int_\mathbb X (h_1- h_0) \dd \nu 
	$$
	from which we deduce $g'(0) = \int_\mathbb X (h_1- h_0) \dd \nu$. We have already observed that $g(t)$ is convex and hence one has that $g(t) \geq g(0) + g'(0) t$, which for $t=1$ is precisely the inequality~\eqref{eq:gcfelt-max2}. It follows that $h_0$ is a maximizer in~\eqref{eq:gcfelt-max} and 
	\begin{multline} \label{gcfe-prefinal}
		\mathcal F_\mathrm{GC}[\nu] = \int_\mathbb X h_0 \dd \nu - \log \Xi[\zeta[\nu]]  \\
			= \int_\mathbb X \Bigl( \log \frac{\dd \nu}{\dd m} (q) - \sum_{n=1}^\infty \frac{1}{n!}\int_{\mathbb X^n} D_{n+1}(q,x_1,\ldots,x_n) \nu^n(\dd x)\Bigr) \, \dd \nu (q) - \log \Xi[\zeta[\nu]]. 
	\end{multline}
	The final step is to insert the expression for $\log \Xi[\zeta[\nu]]$ from Eq.~\eqref{eq:virmain2b} in Theorem~\ref{thm:virmain2}, keeping in mind that $\rho[\zeta[\nu]] = \nu$. This then yields \eqref{eq:free-energy}. 
	
	To justify the application of~\eqref{eq:virmain2b}, we could in principle impose conditions on $\nu$ that guarantee that $z_0=\zeta[\nu]$ satisfies the condition~\eqref{dissymmetry-condition:b} from Theorem~\ref{thm:virmain2}, however this would result in more restrictive conditions and therefore we take a slightly different approach. 
	We start from the formal power series identity 
	\be \label{dissymmetry2}
		\log\Xi\bigl( \zeta[\nu]\bigr) 
			=\int_\mathbb X \nu(\dd x_1) 
			- \sum_{n=2}^\infty \frac{1}{n!}\int_{\mathbb X^n} (n-1) D_n(x_1,\ldots,x_n) \prod_{i=1}^n \nu^n(\dd \vect x) 
	\ee
	which follows from~\eqref{eq:virmain2b} and $\rho[\zeta[\nu]] = \nu$. It is justified, \emph{as a formal power series identity}, without any conditions on $\nu$. Additional arguments are needed to ensure that~\eqref{dissymmetry2} holds true as an equality of convergent expressions. The exponential of the left-hand side of~\eqref{dissymmetry2} is the formal power series 
	\be \label{xinuzeta0}
		1 + \sum_{n=1}^\infty \frac{1}{n!} \int_{\mathbb X^n}\sum_{\substack{L\subset[n]\\ L\neq \varnothing}} \e^{-\beta H_{\#L}((x_\ell)_{\ell\in L})}\sum_{\substack{(J_\ell)_{\ell \in L}\\ \dot \bigcup J_\ell = [n]\setminus L}} \prod_{\ell \in L} t_n\bigl(x_\ell;(x_j)_{j\in J_\ell}\bigr) \nu^n(\dd \vect x)
	\ee
	see Eq.~\eqref{fcomp2} in Appendix~\eqref{ApFormal}. The set $L$ is non-empty but $J_\ell = \varnothing$ is allowed (we agree $t_0=1$).  We have 
	\begin{multline} \label{xinuzeta}
		1+ \sum_{n=1}^\infty \frac{1}{n!} \int_{\mathbb X^n} \sum_{\substack{L\subset[n]\\ L\neq \varnothing}} \e^{-\beta H_{\#L}((x_\ell)_{\ell\in L})}\sum_{\substack{(J_\ell)_{\ell \in L}:\\ \dot \bigcup J_\ell = [n]\setminus L}} \prod_{\ell \in L} \bigl|t_{\#J_\ell} \bigl(x_\ell;(x_j)_{j\in J_\ell}\bigr)\bigr|\, |\nu|^n(\dd \vect x) \\
		\quad = 1+ \sum_{n=1}^\infty \frac{1}{n!} \int_{\mathbb X^n} \e^{-\beta H_n(x_1,\ldots,x_n)} \prod_{i=1}^n \mu(\dd \vect x) 
	\end{multline}
	with 
	$$
		\mu(\dd q):= |\nu|(\dd q) \Biggl( 1+ \sum_{n=1}^\infty \frac{1}{n!} \int_{\mathbb X^n} \bigl|t_n\bigl(q;x_1,\ldots,x_n\bigr)\bigr|\, |\nu|^n(\dd \vect x)\Biggr).
	$$
	The term in parentheses is smaller or equal to $\exp(b(q))$ by our assumption $\nu \in \mathsf V_b$ and Lemma~\ref{lem:abkey}, therefore 
	$$
		\int_\mathbb X \e^{B(q)}\mu(\dd q)\leq \int_\mathbb X \e^{B(q)+b(q)} |\nu|(\dd q)<\infty
	$$	
	by the last assumption on $\nu$ in~\eqref{eq:nuassumption}. 
	It follows that $\mu$ satisfies the finite-volume condition~\eqref{eq:fivo}, hence $\Xi(\mu)$ is finite and thus~\eqref{xinuzeta} is finite. It follows that~\eqref{xinuzeta0} is equal to $\Xi[\zeta[\nu]]$ not just as a formal power series but as an equality of convergent series.
	
	Similar considerations apply to the right-hand side of~\eqref{dissymmetry2}. It follows that~\eqref{dissymmetry2} holds true as an equality of convergent series. We plug the expression for $\Xi[\zeta[\nu]]$ from~\eqref{dissymmetry2} into the formula~\eqref{gcfe-prefinal} and obtain the expression~\eqref{eq:free-energy} for the free energy. 
\end{proof} 

\section{Examples} \label{sec:examples}

\subsection{Homogeneous gas} \label{sec:hom} 

Consider a homogeneous gas of particles in a domain $\Lambda\subset \R^d$, interacting via a  translationally invariant pair potential $V(x,y) = v(x-y)$, with $v(x) = v(-x)$. The potential is assumed to be stable, 
$$
	\sum_{1\leq i < j \leq N} v(x_i-x_j) \geq - B N
$$
for some $B\geq 0$, all $N\geq 2$, and all $x_1,\ldots, x_N\in \R^d$. Furthermore, we assume
$$
	\bar C(\beta):= \int_{\R^d} \bigl(1 - \e^{- \beta |v(x)|} \bigr) \dd x <\infty.
$$
Further assume that $\inf v\geq - B^*$ for some $B^*\in (0,\infty)$. 
Mayer's irreducible cluster integrals are defined as 
$$
	\beta_{n}:= \frac{1}{n!} \int_{(\R^d)^n} \sum_{g\in \mathcal B_{n+1}} \prod_{\{i,j\}\in E(g)} \bigl( \e^{-\beta v(x_i-x_j)} - 1\bigr) \dd x_2\cdots \dd x_{n+1},\quad x_1:=0, 
$$
which in terms of the coefficients $D_n$ from~\eqref{eq:Dn}, can be expressed as
\be \label{eq:betanDn}
	\beta_n = \frac{1}{n!}\int_{\mathbb X^n} D_{n+1}(0,x_1,\ldots, x_n) \dd x_1\cdots \dd x_n. 
\ee
The grand-canonical partition function $\Xi_\Lambda(\beta,z)$ at inverse temperature $\beta>0$ and activity $z>0$ is defined in the usual way, and the pressure is given by 
\be \label{hom-thermolim}
	\beta p_\beta(z):= \lim_{\Lambda\nearrow \R^d} \frac{1}{|\Lambda|} \log \Xi_\Lambda(\beta,z),
\ee
with the limit taken along van Hove sequences~\cite{ruelle1969book}. Further set 
\be \label{hom-density}
	\rho_\beta(z):= z\frac{\partial}{\partial z} \beta p_\beta (z). 
\ee
It is well-known~\cite{ruelle1969book} that if  $C(\beta)\e^{2\beta B}  |z|\leq \frac{1}{\e}$, then the limit~\eqref{hom-thermolim} and the derivative~\eqref{hom-density} exist, moreover they define functions that are analytic in $C(\beta) \e^{2\beta B}|z|<\frac{1}{\e}$ (at least), we use the same letters for the analytic extensions to the complex disk. 
We fix $\beta>0$ and drop the $\beta$-dependence from the notation in $p_\beta(z)$ and $\rho_\beta(z)$. 

\begin{theorem} \label{thm:hom-virmain}\hfill
	\begin{enumerate}
		\item [(a)] If $\nu \in \C$ satisfies $\bar C(\beta)\e^{\beta [B+B^*]} |\nu|\leq \frac{1}{2\e}$, then $\sum_{n=1}^\infty |\beta_n \nu^n|\leq \frac12$. In particular, the radius of convergence $R_{\mathrm{vir}}$ of $\sum_n \beta_n \nu^n$  is bounded from below by 
	\be \label{hom-virmain}
		R_{\mathrm{vir}} \geq R^*:=\frac{1}{2 \e} \frac{1}{\e^{\beta (B+B^*)} \bar C(\beta)}.
	\ee
		\item [(b)] There exists some neighborhood $\mathcal O$ of the origin with
$$
 \Bigl \{z\in \C\, \Big|\, \bar C(\beta) \e^{\beta (B+B^*)}|z| < \frac{1}{\e \e^{2/\e}} \Bigr\}  \subset \mathcal O\subset \Bigl\{z\in \C\Bigl| \bar C(\beta) \e^{\beta (B+B^*)}\,\, |z| < \frac{1}{2\sqrt{e}} \Bigr\} 
$$
		such that $\rho(\cdot)$ is a bijection from $\mathcal O$ onto the open ball $B(0,	R^*)$, with inverse 
		$$
			z(\rho) = \rho \exp\Bigl( - \sum_{n=1}^\infty \beta_n \rho^n\Bigr).
		$$ 
		\item [(c)] For all $z\in \mathcal O$, we have 
	\be \label{hom-virmain2b}
		\beta p(z) = \rho(z) + \sum_{n=1}^\infty\frac{n\beta_n}{n+1} \rho(z)^{n+1}. 
	\ee
		\item [(d)] For all $\rho\in (0,R^*)$, the Helmholtz free energy $f (\rho):= \sup_{z>0} (\beta^{-1}\rho\log z- p(z))$ is given by 
		\be \label{eq:frho}
			\beta f(\rho) =  \rho(\log \rho - 1) - \sum_{n=1}^\infty \frac{\beta_{n}}{n+1}\rho^{n+1}. 
		\ee
	\end{enumerate} 
\end{theorem}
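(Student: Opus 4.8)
The plan is to obtain all four parts by specialising the abstract results of Sections~\ref{sec:general}--\ref{sec:virial} to the translation-invariant situation, taking $\mathbb X=\Lambda$ (eventually $\R^d$), the reference measure to be Lebesgue measure, and all profiles $z(\cdot)$, $\nu(\cdot)=\rho(\cdot)$ constant; then $z\mapsto\rho[z]$ collapses to the scalar relation $\rho=z\,\e^{-A(z)}$ and \eqref{eq:betanDn} identifies the integrated kernels with the $\beta_n$'s. The only non-abstract inputs are the classical facts recalled around \eqref{hom-thermolim}--\eqref{hom-density}. For (a) I would apply Theorem~\ref{thm:virmain1} with the constant weights $a\equiv b\equiv\tfrac12$ and the constant profile $\nu(\dd x)=|\nu|\,\dd x$; since $\int_{\R^d}\bar f(x,y)\,\dd y=\bar C(\beta)$, condition \eqref{suffsuff} reads exactly $\bar C(\beta)\e^{1+\beta(B+B^*)}|\nu|\le\tfrac12$, i.e.\ the hypothesis of (a), and the conclusion \eqref{virMb} becomes $\sum_n\frac1{n!}\int_{(\R^d)^n}|D_{n+1}(0,x_1,\dots,x_n)|\,\dd x\,|\nu|^n\le\tfrac12$. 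The triangle inequality together with \eqref{eq:betanDn} then gives $\sum_n|\beta_n\nu^n|\le\tfrac12$, so $\sum_n\beta_n\nu^n$ converges absolutely for $|\nu|<R^*$ and $R_{\mathrm{vir}}\ge R^*$.

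For (b) I would first note that, for constant profiles, $\mathsf V_{1/2}$ is the closed disc $\{\bar C(\beta)\e^{\beta(B+B^*)}|\nu|\le\tfrac1{2\e}\}=\overline{B(0,R^*)}$, because optimising the inner weight $a\le\tfrac12$ gives $\max_{0<a\le1/2}a\,\e^{-a-1/2}=\tfrac1{2\e}$. Set $\mathcal O:=\zeta[B(0,R^*)]$, where by Lemma~\ref{lem:tada} $\zeta[\nu]=\nu\exp(-\sum_n\beta_n\nu^n)$ is well defined on $B(0,R^*)$ thanks to part (a). Theorem~\ref{thm:virmain2}, applied to the homogeneous $A$ of the form \eqref{A1}, gives that $\rho$ is a bijection $\mathsf U_{1/2}\to\mathsf V_{1/2}$ with inverse $\zeta$, and restricting to the open disc yields the claimed bijection $\rho:\mathcal O\to B(0,R^*)$ with the explicit inverse. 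The outer inclusion is immediate: for $|\nu|<R^*$, $|\zeta[\nu]|\le|\nu|\exp(\sum_n|\beta_n||\nu|^n)\le|\nu|\,\e^{1/2}$ by (a), hence $\bar C(\beta)\e^{\beta(B+B^*)}|\zeta[\nu]|<\tfrac1{2\e}\e^{1/2}=\tfrac1{2\sqrt{\e}}$.

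The inner inclusion is the technical heart. Given $z$ with $\bar C(\beta)\e^{\beta(B+B^*)}|z|<\tfrac1{\e}\e^{-2/\e}$, I would verify the sufficient condition of Theorem~\ref{thm:virmain2}: that $z$ fulfils \eqref{suffsuff} for suitable constants $a\le b$ and that $\e^a|z|\in\mathsf V_b$; by Lemma~\ref{lem:abkey} this forces $|A(q;z)|\le a$, hence $|\rho[z]|\le|z|\e^a$, and together with $\e^a|z|\in\mathsf V_b\subseteq\overline{B(0,R^*)}$ one gets $\rho[z]\in B(0,R^*)$ and therefore $z=\zeta[\rho[z]]\in\mathcal O$. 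The constants $(a,b)$ must be chosen so that the resulting constraint on $\bar C(\beta)\e^{\beta(B+B^*)}|z|$ is as weak as possible; this is the point where extracting the sharp constant $\tfrac1{\e}\e^{-2/\e}$ (rather than the cruder $\tfrac1{3\e}$ that a naive Brouwer/Rouch\'e argument — solving $\nu=z\exp(\sum_n\beta_n\nu^n)$ inside $B(0,R^*)$ using $\sum_n|\beta_n||\nu|^n\le\tfrac12$ — would give) requires care, and I expect it to be the main obstacle of the proof.

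For (c) I would apply \eqref{eq:virmain2b} in a finite box $\Lambda$, where for real $z>0$ small enough its hypothesis \eqref{dissymmetry-condition:b} holds (all three conditions are satisfied once $z$ is small and $|\Lambda|<\infty$), substitute $\rho(\dd x;z)=\rho_\Lambda(z)\,\dd x$ and $\int_{\Lambda^n}D_n=|\Lambda|(n-1)!\,\beta_{n-1}$ up to boundary terms, divide by $|\Lambda|$, and let $\Lambda\nearrow\R^d$ along van Hove sequences, using $\rho_\Lambda(z)\to\rho(z)$, convergence of the finite-volume irreducible integrals, and dominated convergence justified by (a) and the fact, from (b), that $\rho(z)\in B(0,R^*)$; this gives \eqref{hom-virmain2b} for small real $z$, and then for all $z\in\mathcal O$ by analytic continuation, since both sides are holomorphic on the connected open set $\mathcal O$. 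Finally (d) follows by Legendre transform: $\log z\mapsto\beta p(z)$ is convex with $\tfrac{\dd}{\dd\log z}\beta p=\rho(z)$, and by (b) there is a unique $z=z(\rho)\in\mathcal O$ with $\rho(z)=\rho$ for each $\rho\in(0,R^*)$, which is the maximiser; inserting $\log z(\rho)=\log\rho-\sum_n\beta_n\rho^n$ and the pressure formula from (c) gives \eqref{eq:frho}. Alternatively, (d) is Theorem~\ref{thm:GCFE} applied to the constant profile $\rho\,\dd x$ in a box and divided by the volume.
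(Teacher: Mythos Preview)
Your proposal is correct and follows essentially the same route as the paper: specialise the abstract machinery to constant profiles on $\R^d$, use Theorem~\ref{thm:virmain1} with $a=b=\tfrac12$ for (a), define $\mathcal O=\zeta[B(0,R^*)]$ and invoke Theorem~\ref{thm:virmain2} for (b), pass through finite volume plus analytic continuation for (c), and use either a direct Legendre argument or Theorem~\ref{thm:GCFE} for (d). Your identification of the inner inclusion in (b) as the delicate point is accurate; the paper treats it just as tersely, asserting that the sufficient condition ``$z$ satisfies \eqref{suffsuff} and $\e^{a}|z|\in\mathsf V_b$'' from Theorem~\ref{thm:virmain2} is met for suitable constants precisely when $\bar C(\beta)\e^{\beta(B+B^*)}|z|\le\tfrac{1}{\e}\e^{-2/\e}$, without spelling out the optimisation over $(a,b)$.
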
 

\noindent Let us compare our result to what was known before. The bound~\eqref{hom-virmain} should be contrasted with the best known bound
\be \label{virbest} 
	R_\mathrm{vir} \geq R_0:=\frac{k}{\bar C(\beta) \exp(\beta \bar B)}
\ee
where 
\be \label{virconstant}
	k:= \max_{0\leq w\leq 1} (2 \e^{-w} - 1)w
	\geq 0.14476
\ee
(the lower bound is sharp, it is actually $k = \frac{(W(\e/2) - 1)^2}{W(\e/2)}$, cf. \cite{tate2013virial}) and 
\be \label{basuev}
	\bar B:= \inf_{n\geq 2}\frac{1}{n-1} \inf_{x_1,\ldots,x_n\in \R^d} H_n(x_1,\ldots,x_n).
\ee
For non-negative pair potentials, we have $\bar B=0$ and~\eqref{virbest} coincides with the lower bound proven by Lebowitz and Penrose~\cite{lebowitz-penrose1964virial}, who also proved the lower bound in~\eqref{virconstant}. 
For attractive pair potentials, the bound~\eqref{virbest} is an improvement on the bound from~\cite{lebowitz-penrose1964virial}, which was  proven in~\cite{procacci2017correction}, where the constant $\bar B$ is called the \emph{Basuev stability constant}. The constant $\bar B$ also enters an asymptotic \emph{upper} bound to $R_\mathrm{vir}$ as $\beta\to \infty$, see~\cite[Theorem 2.8]{jansen2012mayer}. 

Let us compare our bound~\eqref{hom-virmain} with~\eqref{virbest}. It differs in two ways:
it has a different constant $\frac{1}{2\e}$ and a different exponential $\exp( - \beta (B^*+B))$. Our constant $\frac{1}{2\e}$ is better but for attractive interactions  our exponential  in general is worse. As a consequence, for non-negative interactions, our bound yields a considerable improvement over the bound from~\cite{lebowitz-penrose1964virial} and hence~\eqref{virbest}
$$
	0.1840 > \frac{1}{2\e} > 0.1839 > 0.14477 > k > 0.14476,
$$
therefore our bound improves substantially all known bound. The improvement subsists for attractive interactions with small $\beta$. For large $\beta$ or strong interactions, the bound~\eqref{virbest} due to~\cite{procacci2017correction} trumps ours.\\

\begin{remark}[Attractive potentials]
	Additional work is needed to see whether our exponent $\exp(- \beta(B+ B^*))$ in~\eqref{hom-virmain} can be replaced by the exponent $\exp( - \beta \bar B)$ as in~\eqref{virbest}. This is related to the fact that bounding $b_n$'s in the Mayer expansion $\rho(z) =\sum_{n=1}^\infty n b_n z^n$ may sometimes be better than bounding $a_n$ in the representation $\rho(z) = z\exp( - \sum_{n=1}^\infty a_n z^n)$. Indeed, in our approach, the factor $\exp( -\beta B^*)$ comes up in Lemma~\ref{lem:abkey} where, in order to  write $\rho(z)/z$ the density as an exponential $\exp(-A(z))$ and bound the exponent, we split the expansion of $A$ and we get an additional factor $\exp(\beta B^*)$ in Eq.~\eqref{usualmagicformula}. 
\end{remark} 

\begin{remark} [Relation with Lagrange inversion]
	After the proof of Theorem~\ref{thm:hom-virmain} we will explain how to recover our bound~\eqref{hom-virmain} in the case $B=0$ based on a slightly different treatment of the Lagrange inversion from~\cite{lebowitz-penrose1964virial}, and where exactly our gain is achieved. 
\end{remark} 

\begin{remark}[Further improvements for non-negative pair potentials]
	The factor $\frac{1}{2\e}$ could be further improved using our techniques combining them with the  refined tree-graph inequality from~\cite{fernandez-procacci-scoppola2007}, i.e., working with  trees where children communicate, resulting in additional constraints on trees.  Instead of the generating function of the trees, one has to consider the solution of the equation
	\[
	G(s) = s \left( 1+  \sum_{k=0}^\infty \frac{1}{k!}\left(  G(s) \right)^k \tilde{g}_d(k) \right),
	\]
where $\tilde{g}_d(k)$ are defined as in \cite{fernandez-procacci-scoppola2007}, where it was used for the activity expansion.	
	Doing so, for hard disks, that is $d=2$, one obtains $0.196$ as  a lower bound for the radius of convergence of the virial expansion instead of $\frac{1}{2\e}\approx 0.184$.
\end{remark}

\begin{proof} [Proof of Theorem~\ref{thm:hom-virmain}]
	We apply the considerations from Section~\ref{sec:virial} to the case $\mathbb X= \R^d$, $\mathcal X$ the Borel sets, and specialize to translationally invariant measures $z(\dd x) = z \dd x$ with a constant scalar $z$. For such a measure the measure $\rho(\dd q;z)$ given by $\exp( - A(q;z)) z(\dd q)$ is translationally invariant as well, we write $\rho(\dd q; z) = \rho(z) \dd q$ and note that $\rho(z)$ is equal to the limit~\eqref{hom-density}, moreover 
	$\rho(z) = z \exp( - A(z))$ with 
	$$
		A(z) = -  \sum_{n=1}^\infty \frac{z^n}{n!}\int_{(\R^d)^n}  \Biggl[ \prod_{i=1}^n (1+ f(0,x_i)) - 1\Biggr] \varphi_n^\mathsf T(x_1,\ldots, x_n) \dd x_1\cdots \dd x_n.
	$$
	Conversely, if $\nu(\dd q) = \nu \dd q$ is a translationally invariant measure, then the inverse $\zeta(\dd q;\nu)$ from is translationally invariant as well. 
	
	By Theorem~\ref{thm:virmain2} applied to $\nu(\dd x) = \nu \dd x$, constant functions $a$ and $b$,  if the number	$\nu\in \C$ satisfies 
	\be \label{hom-suff}
		\bar C(\beta)\e^{\beta (B+B^*)} \e^{a+b} |\nu| \leq a
	\ee
	for some $a,b\geq 0$ with $a\leq b$, then  $\nu \in \mathsf V_b$
	\be \label{eq:hom-virmain1}
		\sum_{n=1}^\infty |\beta_n|\, |\nu|^n  \leq b
	\ee
	(remember~\eqref{eq:betanDn} and~\eqref{virMb}). Condition~\eqref{hom-suff} is further evaluated as 
	$$
		\bar C(\beta)\e^{\beta (B+B^*)}|\nu|\leq \sup\{ a\, \e^{-a - b}\mid b\geq a \geq 0\} = \sup\{ a\, \e^{-2 a}\mid a\geq 0\} = \frac{1}{2\e}.
	$$
	Therefore if $\bar C(\beta)\e^{\beta (B+B^*)}|\nu|\leq \frac{1}{2\e}$, then condition~\eqref{hom-suff} holds true with $a= b=\frac 12$ and Eq.~\eqref{eq:hom-virmain1} holds true with $b=\frac12$. 
Part (a) of the theorem follows. 

Part (b) follows from the first part of Theorem~\ref{thm:virmain2}, with $b=1/2$ and $\mathsf V_{1/2} = B(0,R^*)$. Then $|\zeta[\nu]| = |\nu| \ \left| T^\circ(\nu) \right| \leq R^* e^b $ by Theorem~\ref{thm:main1}. For $z\in \mathbb C$ there exists $a,b$ with $0 \leq  a\leq b$ such that $e^{a}|z| \in \mathsf V_b$ if and only if
$ C(\beta)\e^{\beta (B+B^*)} |z|\leq \frac{1}{\e\e^{2/\e}} $. Note that $e^{a} |z|\in \mathsf V_b$ means that \eqref{suffsuff} holds for an $0 \leq \tilde{a} \leq b$ instead of $a$ and not necessarily $a = \tilde{a}$.
	
	For part (c), we note that the validity of~\eqref{hom-virmain2b} for sufficiently small $|z|$ is already known~\cite{lebowitz-penrose1964virial}. Alternatively, we may deduce from Theorem~\ref{thm:virmain2} by working first in finite volume and then taking the infinite-volume limit. This way of proceding guarantees the validity of~\eqref{hom-virmain2b} under the additional condition $\e^{a+ \beta B}|z|<R^*$ for some $0\leq a\leq \frac12$. The additional condition is eliminated by invoking analyticity: The left and right sides of~\eqref{hom-virmain2b} define functions of $z$ that are analytic in $\mathcal O$ and coincide on some non-empty open ball, therefore they are equal on all of $\mathcal O$. 
	
	Theorem~\ref{thm:GCFE}, using that in part (a) we have shown  $\rho \in \mathsf V_b$ whenever $0 \leq |\rho| \leq R^*$, gives part (d) of the theorem  in finite volume.  
		By part (a) the radius of convergence of the series in~\eqref{eq:frho} is independent of the volume. 	Combining with the translation invariance we see that the right-hand side of \eqref{eq:free-energy}, divided by the volume, converges to the right-hand side of~\eqref{eq:frho} in the infinite-volume limit. 	Furthermore, that the free-energy is the Legendre transform of the pressure can be extended to the infinite volume as well. 	
	 We note that the validity of~\eqref{hom-virmain2b} for sufficiently small $|z|$ was already known~\cite{lebowitz-penrose1964virial}.
\end{proof} 

\noindent Let us provide an alternative derivation of the bound~\eqref{hom-virmain} for non-negative potentials ($B=0$). The key point in~\cite{lebowitz-penrose1964virial} is a lower bound for the radius of convergence $R_\mathrm{vir}$ of the expansion in $\rho$ as 
\be \label{LP1}
	 R_\mathrm{vir}  \geq \sup_{R \geq >r\geq 0}\inf_{|z|=r}|\rho(z)|
\ee
which is derived in \cite{lebowitz-penrose1964virial} using a Lagrange inversion /tk{, where $R$ is the convergence radius of $\rho$ at zero}. A lower bound for $R_\mathrm{vir}$ is then deduced from a lower bound for $|\rho(z)|$. This is done in~\cite{lebowitz-penrose1964virial} (and also in~\cite{tate2013virial})  with the help of the triangle inequality $|\rho(z)|\geq |z| - |\rho(z) - z|$.
It turns out that if, instead, one uses the exponential structure 
$\rho(z)= z \e^{-A(z)}$	
and an upper bound for $|A(z)|$  one can recover our bound~\eqref{hom-virmain} from~\eqref{LP1}. Let us explain the strategy. Our aim is to prove the following chain of inequalities
\begin{align}\label{LP}
R_\mathrm{vir} & \geq \sup_{R \geq r \geq 0} \inf_{|z|=r} 
\left| z\e^{-A(z)}\right|
 \geq \sup_{R \geq r\geq 0} r \e^{-T(\bar C(\beta) r)}=\frac{1}{2\e}\frac{1}{\bar C(\beta)}
\end{align}
where $T(z) =\sum_{n\geq 1} \frac{n^{n-1}}{n!} z^n$ is the generating function of labelled rooted trees (equivalently, $T(z) = - W(-z)$ with $W$ the Lambert function) and $R$ is the radius of convergence of the analytic function $A(z)$. 

The first inequality in Eq.~\eqref{LP}, merely uses the idea $\rho(z)= z \e^{-A(z)}$.
The second inequality can be derived in several ways. It follows directly from Penrose's tree-graph inequality, namely  as in Lemma~\ref{lem:abkey} use estimate~\eqref{usualmagicformula} with $B^*=0$, then by the tree-graph inequality $\mathcal{R}(q;z) \leq \frac{T(\bar C(\beta) |z|)}{\bar C(\beta) |z|}$. Hence one gets $|A(z)| \leq T(\bar C(\beta) |z|)$. Alternatively, following more the type of results used in this article, one gets from the inductive proof of Theorem 2.1 in \cite{poghosyan-ueltschi2009}, cf. Lemma~\ref{lem:abkey} as well,  that
$|A(z)|\leq a$ whenever $\bar C(\beta) \e^a |z|\leq a$. It remains to optimize over $a$.Recall that $T(s)$ is converges on $[0,1/\e]$ with $T(1/\e)=1$ and is also the branch of the real solution of the relation $T(s) = s \e^{T(s)}$ with $T(s) \leq 1$. Hence, $T(s)$ satisfies, for  $s\geq 0$, 
\be \label{trick0}
	T(s) = \inf \{ a\mid a\in [0,1],\ a \e^{-a}\geq s \}
\ee 
Since $T(s)$ diverges for $s>1/\e$, Eq.~\eqref{trick0} stays true 
 for $s> 1/\e$ if we interpret the infimum of the empty set as infinity. Equation~\eqref{trick0} follows from the relation $T(s) = s \e^{T(s)}$ solved by $T$, the bound $T(s) \leq T(1/e) =1$  and the the fact that $a\mapsto a \e^{-a}$ is strictly increasing on $[0,1]$. 
 Consequently, using ~\eqref{trick0} we get
\be\label{trick}
|A(z)|\leq \inf\bigl\{a \, \big|\, \bar C(\beta) \e^a |z|\leq a\leq 1\bigr\}=T\bigl(\bar C(\beta) |z|\bigr).
\ee
Hence the bound is finite and thus the radius of convergence of $A(z)$ is $R < \frac{1}{\e \bar C(\beta) }$.

Finally, the third inequality can be derived using again~\eqref{trick0}  and $T(s) = s\e^{T(s)}$, we have
\begin{align}
\sup_{1/\e \geq s\geq 0} s\, \e^{-T(s)} & = \sup_{s\geq 0} s\e^{-T(s)} = \sup_{s\geq 0} \frac{s^2}{T(s)} = \sup\left\{ \left.\frac{s^2}{a}\right| s\geq 0,\ a\in[0,1],\ s\leq a\e^{-a} \right\} \notag \\
	&= \sup_{a \in[0,1]} \frac{(a\e^{-a})^2}{a}=\frac{1}{2\e}.
\end{align}
Setting $s=\bar C(\beta) r$ we deduce the final bound in \eqref{LP}, which is the same as~\eqref{hom-virmain} in the case of non-negative potential. 

\subsection{Inhomogeneous gas} 
Here we start from a homogeneous gas with fixed reference activity $z_0>0$ and then add an external potential $V_{\mathrm{ext}}(x)$. 
The grand-canonical partition function in some bounded domain $\Lambda$ becomes 
\be \label{inhom-partitionfct}
	\Xi_\Lambda = \Xi_\Lambda(\beta, z_0,V_\mathrm{ext}) = 1+ \sum_{n=1}^\infty\frac{z_0^n}{n!}\int_{\Lambda^n} \e^{- \beta [\sum_{1\leq i<j \leq n} v(x_i -x_j) + \sum_{i=1}^n V_\mathrm{ext}(x_i)]} \dd x_1\cdots \dd x_n
\ee
and the density is given by 
\be \label{inhom-density}
	\rho_\Lambda(x_0; V_\mathrm{ext}):= z_0 \e^{-\beta V_\mathrm{ext}(x_0)}\, \frac{1}{\Xi_\Lambda}\Bigl( 1+ \sum_{n=1}^\infty \frac{z_0^n}{n!}\int_{\Lambda^n} \e^{- \beta [\sum_{0\leq i<j \leq n} v(x_i -x_j) + \sum_{i=1}^n V_\mathrm{ext}(x_i)]} \dd x_1\cdots \dd x_n\Bigr).
\ee
Eq.~\eqref{inhom-density} can be brought into the form from Section~\ref{sec:virial}: let 
\be \label{inhom-activity}
	z(x):= z_0 \exp\Bigl( - \beta V_\mathrm{ext}(x)\Bigr),
\ee
then 
\be \label{inhom-density2}
	\rho_\Lambda(x_0; V_\mathrm{ext}):= z(x_0) \frac{1}{\Xi_\Lambda}\Bigl( 1+ \sum_{n=1}^\infty \frac{1}{n!}\int_{\Lambda^n} \e^{- \beta \sum_{0\leq i<j \leq n} v(x_i -x_j)} \prod_{i=1}^n z(x_i) \dd x_1\cdots \dd x_n\Bigr),
\ee
similarly for the partition function. It follows from the tree-graph inequality in~\cite{procacci-yuhjtman2017} that if 
\be \label{inhom-PU}
	\int_{\R^d}\bar f(x,y)\, \e^{a(y)+ \beta B} z(y) \dd y = z_0	\int_{\R^d}{\bar f}(x,y)\, \e^{a(y)+ \beta B} \e^{- \beta V_\mathrm{ext}(y)} \dd y \leq a(x)
\ee
for some $a:\R^d\to \R_+$ and all $x \in \R^d$, then the limit 
$$
	\rho(x_0;V_\mathrm{ext}) = \lim_{\Lambda\nearrow \R^d} \rho_\Lambda(x_0;V_\mathrm{ext})
$$
exists and is given by the usual combinatorial formulas, with position-dependent activity $z(x)$ given in~\eqref{inhom-activity}. 

It is a classical problem to ask whether,  given a density profile $\rho(x)$, there exists a background potential $V_\mathrm{ext}(x)$ such that the density profile $\rho(x;V_\mathrm{ext})$ in the associated grand-canonical ensemble is equal to the given profile $\rho(x)$. In view of~\eqref{inhom-activity}, Theorem~\ref{thm:virmain2} has direct implications for this problem when activities converge. For results without cluster expansions, see~\cite{chayes-chayes-lieb1984}. 

\begin{theorem}  \label{thm:inhom}
	Fix $\beta,z_0>0$ and a pair potential $v(x-y)$ with stability constant $B$ and lower bound $\inf v \geq - B^*>-\infty$. Let $\rho:\Lambda\to \R_+$ be a measurable function such that 
	\be \label{inhom-suff}
		\int_{\R^d} \bar f(x,y)\, \e^{a(y)+ \beta (B+B^*) + b(y)} \rho(y) \dd y \leq a(x)
	\ee
	for all $x\in \R^d$ and some functions $a,b:\R^d\to \R_+$ with $a\leq b$ pointwise. Then there exists a unique (up to null sets) background potential $V_\mathrm{ext}:\Lambda\to \R\cup\{\infty\}$ that satisfies~\eqref{inhom-PU} and such that $\rho(q;V_\mathrm{ext}) = \rho(q)$ for Lebesgue-almost all $q$.  It is given by 
	\be \label{inhom-V}
		\beta V_\mathrm{ext}(q) = \log z_0 - \log \rho(q) + \sum_{n=1}^\infty\frac{1}{n!}\int_{\mathbb{R}^{dn}} D_{n+1}(q,x_1,\ldots,x_n) \rho(x_1)\cdots \rho(x_n) \dd x_1\cdots \dd x_n
	\ee
	with absolutely convergent integrals and sum.
\end{theorem}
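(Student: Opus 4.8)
The plan is to read Theorem~\ref{thm:inhom} as a restatement of Theorem~\ref{thm:virmain2} under the dictionary $z(x)=z_0\e^{-\beta V_\mathrm{ext}(x)}$ of~\eqref{inhom-activity}. Since $V(x,y)=v(x-y)$ is translation invariant the stability functions are constant, $B(y)\equiv B$ and $B^*(y)\equiv B^*$, so condition~\eqref{suffsuff} for the measure $\nu(\dd y):=\rho(y)\,\dd y$ is \emph{literally} the hypothesis~\eqref{inhom-suff}; hence $\nu\in\mathsf V_b$ with the given pair $a\le b$. Theorem~\ref{thm:virmain2} then supplies $z:=\zeta[\nu]\in\mathsf U_b\subset\mathscr D(A)$ with $\rho[z]=\nu$, and~\eqref{eq:virmain2a} together with Lemma~\ref{lem:tada} gives $z(q)=\rho(q)\exp\bigl(-\sum_n\frac1{n!}\int_{\R^{dn}}D_{n+1}(q,x_1,\ldots,x_n)\rho(x_1)\cdots\rho(x_n)\,\dd x_1\cdots\dd x_n\bigr)$. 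I would then simply \emph{define} $\beta V_\mathrm{ext}(q):=\log z_0-\log z(q)$, with the convention $V_\mathrm{ext}(q)=+\infty$ on $\{\rho=0\}$; this is exactly formula~\eqref{inhom-V}, and by construction $z$ is the activity profile~\eqref{inhom-activity} attached to this $V_\mathrm{ext}$.

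The next step is to verify that this particular $z$ satisfies the standing hypothesis~\eqref{inhom-PU} (which is condition~\eqref{PU} of Lemma~\ref{lem:prelim}), so that the finite-volume densities $\rho_\Lambda(\cdot;V_\mathrm{ext})$ of~\eqref{inhom-density2} converge, as $\Lambda\nearrow\R^d$, to the combinatorial density $\rho[z]$. This is one line of weight bookkeeping: by~\eqref{mainest1} one has $0\le T_q^\circ(\nu)\le\e^{b(q)}$, hence $z(y)\le\rho(y)\e^{b(y)}$, and since $B^*\ge0$,
\[
  \int_{\R^d}\bar f(x,y)\,\e^{a(y)+\beta B}\,z(y)\,\dd y
  \;\le\; \int_{\R^d}\bar f(x,y)\,\e^{a(y)+b(y)+\beta(B+B^*)}\,\rho(y)\,\dd y \;\le\; a(x)
\]
by~\eqref{inhom-suff}. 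Thus~\eqref{inhom-PU} holds with the very same weight $a$, the infinite-volume density exists, equals $\rho[z]=\nu$, and therefore $\rho(q;V_\mathrm{ext})=\rho(q)$ for Lebesgue-almost every $q$. This settles existence.

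For uniqueness, let $\tilde V_\mathrm{ext}$ be any potential satisfying~\eqref{inhom-PU} (for \emph{some} weight $\tilde a$) and producing density $\rho$, and set $\tilde z(y):=z_0\e^{-\beta\tilde V_\mathrm{ext}(y)}$. By Lemma~\ref{lem:prelim}, $\tilde z\in\mathscr D(A)$ with $|A(q;\tilde z)|\le\tilde a(q)<\infty$, and the limiting density is $\rho[\tilde z]=\e^{-A(\cdot;\tilde z)}\tilde z$; since it equals $\nu$, we get $\tilde z\ll\nu$ with Radon--Nikod{\'y}m derivative $\e^{A(\cdot;\tilde z)}$. It remains to identify $\e^{A(q;\tilde z)}$ with $T_q^\circ(\nu)$, which is also $\frac{\dd z}{\dd\nu}(q)$ because $z=T^\circ_\bullet(\nu)\,\nu$ and $\rho[z]=\nu$. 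For this I would feed $\tilde z$ into the fixed point equation~\eqref{tree-eq2} of Lemma~\ref{lem:tree-sol2}: because $\e^{-A(x_i;\tilde z)}\tilde z(\dd x_i)=\rho[\tilde z](\dd x_i)=\nu(\dd x_i)$, the left-hand side of~\eqref{tree-eq2} evaluated at $\tilde z$ collapses to $1+\sum_n\frac1{n!}\int t_n(q;\vect x)\,\nu^n(\dd\vect x)=T_q^\circ(\nu)$, while the right-hand side is $\e^{A(q;\tilde z)}$; the rearrangement is legitimate since $1+\sum_n\frac1{n!}\int|t_n(q;\cdot)|\,|\nu|^n\le\e^{b(q)}<\infty$ by~\eqref{mainest1} and the right-hand side is finite. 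Hence $\e^{A(\cdot;\tilde z)}=T^\circ_\bullet(\nu)=\frac{\dd z}{\dd\nu}$, so $\tilde z=z$ as measures and $\tilde V_\mathrm{ext}=V_\mathrm{ext}$ Lebesgue-a.e.\ (on $\{\rho=0\}$ both equal $+\infty$, since $\e^{-A}>0$).

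I expect the only genuinely delicate point to be this last identification. The competing profile $\tilde z$ is only assumed to satisfy the weaker condition~\eqref{PU} with an a priori unrelated weight $\tilde a$, not the stronger~\eqref{suffsuff}, so one cannot directly invoke the bijectivity of $\rho$ on $\mathsf U_b$ from Theorem~\ref{thm:virmain2}; instead one has to promote the formal power series identity~\eqref{tree-eq2} to a genuine numerical identity at $\tilde z$ and exploit that $\rho[\tilde z]=\nu$ converts the activity measure back into $\nu$ inside the tree integrals. Everything else is a routine transcription of the results of Section~\ref{sec:virial}.
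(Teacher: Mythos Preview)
Your existence argument is essentially identical to the paper's: identify \eqref{inhom-suff} with \eqref{suffsuff}, apply Theorems~\ref{thm:virmain1}--\ref{thm:virmain2} to $\nu(\dd y)=\rho(y)\,\dd y$, read off formula~\eqref{eq:virmain2a}, define $V_{\mathrm{ext}}$ through~\eqref{inhom-activity}, and verify~\eqref{inhom-PU} from the bound $z(y)\le\rho(y)\e^{b(y)}$. The paper cites~\eqref{virMb} rather than~\eqref{mainest1} for that last bound, but the content is the same.

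Where you diverge is uniqueness. The paper's proof simply writes ``By Theorem~\ref{thm:virmain2}, there is a unique measure $z(\dd q)$ in the domain of convergence $\mathscr D(A)$\ldots'' and leaves it at that. Strictly speaking, Theorem~\ref{thm:virmain2} only yields a bijection between $\mathsf U_b$ and $\mathsf V_b$, so the paper is tacitly reading the uniqueness clause of Theorem~\ref{thm:inhom} as uniqueness within that class; it does not attempt the stronger statement you aim for (uniqueness among all $\tilde V_{\mathrm{ext}}$ satisfying~\eqref{inhom-PU} for \emph{some} weight $\tilde a$). Your route via Lemma~\ref{lem:tree-sol2}, substituting $\e^{-A(\cdot;\tilde z)}\tilde z=\nu$ into~\eqref{tree-eq2} to force $\e^{A(q;\tilde z)}=T_q^\circ(\nu)$, is a genuine addition. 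The caveat you flag---that \eqref{tree-eq2} is a formal power series identity in $z$, and one must justify its numerical validity at $\tilde z$---is real and is exactly the point the paper does not touch; your bookkeeping with~\eqref{mainest1} handles the $\nu$-side, but absolute convergence of the $z$-expansion of $\e^{A(q;z)}$ at $\tilde z$ under~\eqref{PU} alone still needs a line (e.g.\ via the bound on $\mathcal R(q;|\tilde z|)$ in the proof of Lemma~\ref{lem:abkey}). In short: your proof matches the paper on existence, and on uniqueness you are more careful than the paper while correctly isolating the one step that remains to be spelled out.
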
 

\noindent A sufficient condition for~\eqref{inhom-suff} to hold true is that 
$
	\bar C(\beta) \e^{\beta B} ||\rho||_\infty \leq \frac{1}{2\e}
$
(pick $a=b\equiv \frac12$). In fact one easily checks that, if we are interested in bounded density profiles only, we are in the situation where a direct application of the Banach inversion theorem (Theorem~\ref{thm:banach}) is possible. 

\begin{proof}
	The absolute convergence of the series in~\eqref{inhom-V} follows right away from Theorem~\ref{thm:virmain1} applied to $\nu(\dd x) = \rho(x) \dd x$. By Theorem~\ref{thm:virmain2}, there is a unique measure $z(\dd q)$ in the domain of convergence $\mathscr D(A)$ such that $\nu(\dd q) = \rho(\dd q;z)$, with $\rho(\dd q;z)$ the density at activity $z(\dd x)$ for the interaction potential $v(x-y)$. Moreover the activity is given by Eq.~\eqref{eq:virmain2a}, which after plugging in $\nu(\dd q) = \rho(q) \dd q$ becomes $z(\dd q) = z(q) \dd q$ with 
	\be \label{inhom-zq}
		z(q) = \rho(q)\exp\Biggl( - \sum_{n=1}^\infty\frac{1}{n!}\int_{\Lambda^n} D_{n+1}(q,x_1,\ldots,x_n) \rho(x_1)\cdots \rho(x_n) \dd x_1\cdots \dd x_n\Biggr).
	\ee
	We adopt~\eqref{inhom-activity} as a definition of the external potential, then $\beta V_\mathrm{ext}(q) = \log z_0 - \log z(q)$ and $V_\mathrm{ext}(q)$ is given by~\eqref{inhom-V}. It satisfies $\rho(q;V_\mathrm{ext}) = \rho(q)$ by the definition~\eqref{inhom-zq} of $z(q)$ and $V_\mathrm{ext}$. Condition~\eqref{inhom-PU} follows rom \eqref{virMb} as then $	|z(q)| \leq |\rho(q)| e^{b(q)}$ 
	and thus \eqref{inhom-suff} implies~\eqref{inhom-PU}.
\end{proof} 

\subsection{Mixture of hard spheres}  
Consider a mixture of hard spheres with radii $R_1,R_2,\ldots$, for example, $R_k = k^{1/d}$.  The activity $z_k$ of the sphere depends on the type $k$ but otherwise the system is homogeneous. To bring the model into the form from Section~\ref{sec:virial}, let $\mathbb X= \R^d\times \N$, with $(x,k)$ representing a sphere of radius $R_k$ centered at $x$. We consider measures $z$ informally given by $z= \oplus_{k\in\N} z_k \dd x$. More precisely, $\int_\mathbb X  h \dd z = \sum_{k=1}^\infty \int_{\R^d} h(x,k) z_k \dd x$ for every non-negative test function $h$. The interaction is hard core exclusion
$$
	V\bigl( (x,k),(y,\ell)\bigr) = \begin{cases}
		\infty, &\quad |x-y|\leq R_k+R_\ell,\\
		0, &\quad \text{else}. 
	\end{cases} 
$$
Let $p((z_k)_{k\in \N})$ be the infinite-volume pressure and $\rho_k((z_j)_{j\in \N}):= z_k \frac{\partial p}{\partial z_k}((z_j)_{j\in \N})$. A sufficient condition for the convergence of the activity expansion of the pressure is 
\be \label{spheres-activity}
	\sum_{\ell=1}^\infty|z_\ell|\, |B(0,R_k +R_\ell)|\, \e^{a_\ell} \leq a_k,
\ee
for some non-negative sequence $(a_j)_{j\in \N}$ of positive numbers and all $k\in \N$, as is easily checked from~\cite{ueltschi2004}.

\begin{theorem}
	Suppose that $(\rho_k)_{k\in\N}\in \C^\N$ satisfies 
	\be \label{spheres-density}
		\sum_{\ell=1}^\infty |\rho_\ell|\, |B(0, R_k+R_\ell)| \e^{a_\ell + b_		\ell} \leq a_k,
	\ee
	for all $k\in \N$ and two sequences $(a_j)$, $(b_j)$ with $b_j\geq a_j\geq0$ for all $j\in \N$. Then there exists a unique sequence $(z_k)_{k\in\N}$ with $\rho_j((z_k)_{k\in \N}) = \rho_j$ for all $j\in \N$ and such that condition~\eqref{spheres-activity} holds. It is given by 
	\be \label{spheres-inversion}
		z_k = \rho_k \exp\Biggl( - \sum_{n=1}^\infty \frac{1}{n!} \sum_{k_1,\ldots,k_n\in \N} \int_{(\R^d)^n} D_{n+1}\bigl((0,k),(x_1,k_1),\ldots,(x_n,k_n)\bigr) \rho_{k_1}\cdots \rho_{k_n} \dd \vect x \Biggr). 
	\ee
\end{theorem}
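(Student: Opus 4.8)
The statement is a direct specialization of Theorems~\ref{thm:virmain1} and~\ref{thm:virmain2} to the hard-sphere mixture, so the plan is to set up the translation, read off the formula, and then give a separate argument for the uniqueness clause. First I would put the model into the framework of Section~\ref{sec:virial}: take $\mathbb X=\R^d\times\N$ with the Borel $\sigma$-algebra and the pair potential $V$ as given. Since $V\geq 0$ one may take the stability constant $B\equiv 0$, and since $\inf V=0$ also $B^*\equiv 0$. Then $f((x,k),(y,\ell))=-\1_{\{|x-y|\leq R_k+R_\ell\}}$ and $\bar f((x,k),(y,\ell))=\1_{\{|x-y|\leq R_k+R_\ell\}}$, so $\int_{\R^d}\bar f((x,k),(y,\ell))\,\dd y=|B(0,R_k+R_\ell)|$. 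I encode the profile as the measure $\nu:=\bigoplus_{k\in\N}\rho_k\,\dd x\in\mathfrak M_\C$ and take the species-only weights $a(x,k):=a_k$ and $b(x,k):=b_k$, which satisfy $a\leq b$ pointwise. With these choices, and using $\beta B=\beta B^*=0$, condition~\eqref{suffsuff} for $\nu$ is exactly condition~\eqref{spheres-density}.

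Next I would invoke Theorem~\ref{thm:virmain1}: since $\nu$ satisfies~\eqref{suffsuff} with $a\leq b$, the bound~\eqref{virMb} holds, and written out in the species variables at $q=(0,k)$ this is precisely the absolute convergence, with sum bounded by $b_k$, of the series appearing in the exponent of~\eqref{spheres-inversion}. Since $\nu\in\mathsf V_b$, Theorem~\ref{thm:virmain2} gives that $z:=\zeta[\nu]\in\mathsf U_b\subset\mathscr D(A)$ satisfies $\rho[z]=\nu$, and formula~\eqref{eq:virmain2a} evaluated on this $\nu$ is exactly~\eqref{spheres-inversion} (by translation invariance $z$ is of the form $z_k\,\dd x$). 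Hence the sequence $(z_k)$ defined by~\eqref{spheres-inversion} satisfies $\rho_j((z_k)_{k\in\N})=\rho_j$ for every $j$. It remains to check that this $(z_k)$ satisfies~\eqref{spheres-activity}: from the bound $b_k$ on the exponent we get $|z_k|\leq|\rho_k|\e^{b_k}$, whence $\sum_{\ell}|z_\ell|\,|B(0,R_k+R_\ell)|\e^{a_\ell}\leq\sum_{\ell}|\rho_\ell|\,|B(0,R_k+R_\ell)|\e^{a_\ell+b_\ell}\leq a_k$ by~\eqref{spheres-density}, so~\eqref{spheres-activity} holds with the same sequence $(a_k)$. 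This settles existence together with the explicit formula.

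For uniqueness, suppose $(z'_k)_{k\in\N}$ is another sequence with $\rho_j((z'_k)_{k\in\N})=\rho_j$ for all $j$ and satisfying~\eqref{spheres-activity} for some non-negative sequence $(a'_j)$. Writing $z':=\bigoplus_k z'_k\,\dd x$, condition~\eqref{spheres-activity} is condition~\eqref{PU} (again $\beta B=0$), so by Lemma~\ref{lem:prelim} we have $z'\in\mathscr D(A)$ and $\rho(\dd q;z')=\e^{-A(q;z')}z'(\dd q)$; moreover, re-running the estimate in the proof of Lemma~\ref{lem:abkey} with the trivial simplification $\e^{\beta B^*}\equiv 1$ in~\eqref{usualmagicformula} yields $|A((0,k);z')|\leq a'_k<\infty$, so that $z'_k=\rho_k\exp\bigl(A((0,k);z')\bigr)$ with $|z'_k|\leq|\rho_k|\e^{a'_k}$. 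Thus both $z'$ and $z=\zeta[\nu]$ solve the fixed-point relation $w(q)=\nu(q)\exp(A(q;w))$ inside the convergence region of $A$, and I would conclude $z'=z$ by the injectivity of the density--activity map, exactly as in the proof of Theorem~\ref{thm:inhom}.

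I expect this last point to be the only step that is not pure bookkeeping: Theorem~\ref{thm:virmain2} literally provides injectivity only on $\mathsf U_b$, which need not contain an arbitrary competitor $z'$ merely known to satisfy~\eqref{spheres-activity}. To make the fixed-point uniqueness fully self-contained I would truncate to the finitely many species $\{1,\dots,N\}$ --- where the map is the classical density--activity map of a finite mixture, injective on the Ueltschi-type domain by the standard inverse-function-theorem argument of~\cite{ueltschi2004,lebowitz-penrose1964virial} --- and then let $N\to\infty$, using the uniform bounds $|z'_k|\leq|\rho_k|\e^{a'_k}$ together with the convergence estimate~\eqref{virMb} to pass to the limit, which gives $z'=z$ and completes the proof.
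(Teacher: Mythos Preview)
Your proposal is correct and follows exactly the route the paper indicates: the paper's entire proof reads ``The theorem is deduced from Theorems~\ref{thm:virmain1} and~\ref{thm:virmain2}, the details are left to the reader,'' and your first two paragraphs carry out precisely this specialization (setting $\mathbb X=\R^d\times\N$, $B=B^*=0$, weights $a(x,k)=a_k$, $b(x,k)=b_k$, identifying~\eqref{spheres-density} with~\eqref{suffsuff}, and reading off existence and the formula from~\eqref{eq:virmain2a}, plus the verification of~\eqref{spheres-activity} via $|z_k|\leq|\rho_k|\e^{b_k}$).

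On uniqueness you are in fact more scrupulous than the paper, which does not isolate the issue at all. Your observation that Theorem~\ref{thm:virmain2} only yields injectivity on $\mathsf U_b$, whereas a competitor $z'$ satisfying~\eqref{spheres-activity} is a priori only known to lie in $\mathscr D(A)$, is well taken; the paper's parallel proof of Theorem~\ref{thm:inhom} glosses over the same point. Your proposed resolution via truncation to finitely many species and passage to the limit is reasonable, though an alternative that stays closer to the paper's toolkit is to note that if $z'$ satisfies~\eqref{spheres-activity} with the \emph{same} sequence $(a_k)$ as in the hypothesis, then $\e^{a}|z'|$ satisfies~\eqref{spheres-density} (since $\e^{a_\ell}|z'_\ell|\leq\e^{a_\ell+b_\ell}|\rho_\ell|$ would need $|z'_\ell|\leq|\rho_\ell|\e^{b_\ell}$, which is not automatic), so this route also requires care. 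In short, the paper treats the uniqueness clause loosely, and your flagging of it is appropriate rather than a defect of your argument.
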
 

\noindent The coefficients $D_n$ are given by sums over $2$-connected graphs as in~\eqref{eq:Dn}. The sum in the exponential in~\eqref{spheres-inversion}, with absolute values inside the integral, is bounded by $b_k$. 

 The theorem is deduced from Theorems~\ref{thm:virmain1} and~\ref{thm:virmain2},  the details are left to the reader. 

\subsection{Flexible molecules. Liquid crystals}
Finally we come to a system of objects with internal degrees of freedom: we assume that the space $\mathbb X$ is of the form $\mathbb X  = \Lambda \times S$ with $\Lambda\subset \R^d$ a bounded domain.\footnote{We could also allow for spaces $\mathbb X = \sqcup_{k\in \N}(\Lambda\times S_k)$ representing a multi-species system where each species $k$ has its own spin space $S_k$, but for simplicity we stick to the single-species case.} The space $S$ represents internal degrees of freedom (spin, orientation, shape of a molecule...). For example, we could take $S$ as the projective space $\mathbb P^{d-1}$ (i.e., $\R^d\setminus \{0\}$ with identification of parallel vectors) and think of $(x,\vec u)$ as a thin rod centered at $x$ with orientiation $\vec u$. Such a model is often used for the study of liquid crystals~\cite{onsager1949}. 

Suppose we are given a reference measure $m$ on $\mathbb X$ that is of the form $m(\dd (x,\sigma)) = \dd x \, \lambda(\dd \sigma)$, i.e., it is the product of the Lebesgue measure on $\Lambda$ and a reference measure $\lambda$ on $S$ (e.g. a uniform measure on orientations of thin rods). To simplify formulas, we write $\dd \sigma$ instead of $\lambda(\dd \sigma)$. The pair potential $V((x,\sigma), (y,\tau))$ is a function of both position and internal degree of freedom.  

Following Onsager, one could work in a multi-species canonical ensemble, where each species represents a discretized orientation. In such  a setup,  deriving the canonical free energy is immediate following~\cite{pulvirenti-tsagkaro2012}. 
One can easily derive a functional for continuous orientations, using our techniques presented here, that is, to start in the grand-canonical ensemble, and obtain the grand-canonical free energy via Legendre transform and inversion of the density-activity relation, which is precisely the definition~\eqref{GCFELT} for $\mathcal F_{\mathrm{GC}}[\nu]$. Let us write $\nu(\dd(x,\sigma)) = \rho(x,\sigma) \dd x\dd \sigma$ and, by a slight abuse of language, $\mathcal F_{\mathrm{GC}}[\rho]$ instead of $\mathcal F_{\mathrm{GC}}[\nu]$.

For simplicity we prove results for non-negative pair potentials $V$ only but note that our general theorems lead just as easily to stable pair potential. 

\begin{theorem} \label{thm:onsager}
	Let $V\geq 0 $ and $\rho:\mathbb X\to \R_+$. 
	Suppose there exist weight functions $a,b:\mathbb X\to \R_+$ with $b\geq a$.  Suppose that $\rho:\Lambda\times S\to \R_+$ satisfies 
	$$
		\int_{\Lambda\times S} \rho(y,\sigma)\, \Bigl(1- \e^{-\beta V((x,\sigma), (y,\tau)) }\Bigr) \e^{a(y,\sigma)+b(y,\sigma)} \dd y\, \dd \tau  \leq a(x,\sigma),
	$$
	for all $(x,\sigma)\in \Lambda\times S$, and 
	$$
		\int_{\Lambda\times S} 	\rho(x,\sigma) \Bigl( \bigl|\log \rho(x,\sigma) \bigr| + 1+ b(x,\sigma) + \e^{b(x,\sigma)} \Bigr) \dd x\, \dd \sigma <\infty. 
	$$	
	Then 
	\begin{multline*}
		\beta \mathcal F_\Lambda[\rho]
			 = \int_\Lambda \rho(x,\sigma) \bigl[\log \rho(x,\sigma) - 1\bigr]\dd x \dd \sigma \\
		 - \sum_{n=2}^\infty \frac{1}{n!} \int_{\Lambda^n}\int_{S^n} D_n\bigl((x_1,\sigma_1),\ldots, (x_n,\sigma_n) \bigr) \prod_{i=1}^n \rho(x_i,\sigma_i) \dd \vect x \dd \vect \sigma. 
	\end{multline*} 
	with absolutely convergent integrals and sum.
\end{theorem}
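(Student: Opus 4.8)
The plan is to obtain Theorem~\ref{thm:onsager} as the special case of Theorem~\ref{thm:GCFE} corresponding to $\mathbb X=\Lambda\times S$ with reference measure $m(\dd(x,\sigma))=\dd x\,\lambda(\dd\sigma)$ and non-negative pair potential $V\bigl((x,\sigma),(y,\tau)\bigr)$. Since $V\geq 0$, the stability bound~\eqref{stability} holds with $B\equiv 0$ and the lower bound~\eqref{stabextra} holds with $B^*\equiv 0$, so every factor $\e^{\beta B}$ and $\e^{\beta B^*}$ appearing in Theorems~\ref{thm:virmain1}, \ref{thm:virmain2}, \ref{thm:GCFE} and in condition~\eqref{suffsuff} equals $1$; moreover $\bar f=1-\e^{-\beta|V|}=1-\e^{-\beta V}$. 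Put $\nu(\dd(x,\sigma)):=\rho(x,\sigma)\,m(\dd(x,\sigma))$, a non-negative $\sigma$-finite measure with $\dd\nu/\dd m=\rho$.

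First I would check $\nu\in\mathsf V_b$: with the simplifications above, condition~\eqref{suffsuff} for the pair $(a,b)$ is exactly the first displayed inequality in the statement (read with the integration variables carrying the same label as the argument of $\rho$), so $\nu$ satisfies~\eqref{suffsuff} with $a\le b$ and hence $\nu\in\mathsf V_b$. By Theorem~\ref{thm:virmain1} the series $\sum_n\frac1{n!}\int|D_{n+1}(q,x_1,\ldots,x_n)|\,\nu^n$ then converges and is $\le b(q)$, which already secures the absolute-convergence assertions of the theorem.

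Next I would verify the three integrability requirements~\eqref{eq:nuassumption} of Theorem~\ref{thm:GCFE}. In the present notation they read $\int_{\Lambda\times S}(1+b)\,\rho\,\dd x\,\dd\sigma<\infty$, $\int_{\Lambda\times S}|\log\rho|\,\rho\,\dd x\,\dd\sigma<\infty$, and (since $\e^{\beta B}=1$) $\int_{\Lambda\times S}\e^{b}\,\rho\,\dd x\,\dd\sigma<\infty$; all three are contained in the second displayed hypothesis, which bundles $|\log\rho|+1+b+\e^{b}$. As $\nu\in\mathsf V_b\cap\mathfrak M$ is absolutely continuous with respect to $m$, Theorem~\ref{thm:GCFE} applies and yields~\eqref{eq:free-energy}. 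Rewriting~\eqref{eq:free-energy} with $\log(\dd\nu/\dd m)=\log\rho$, with the marked variables $(x_i,\sigma_i)\in\Lambda\times S$, and with $\nu^n(\dd\vect x)=\prod_{i}\rho(x_i,\sigma_i)\,\dd\vect x\,\dd\vect\sigma$, gives precisely the stated formula for $\beta\mathcal F_\Lambda[\rho]$, with both the integrals and the series absolutely convergent.

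I do not expect a genuine obstacle here; the work is bookkeeping and a careful matching of hypotheses. The one point needing attention is the implicit finite-volume ingredient inside the proof of Theorem~\ref{thm:GCFE}: there $z_0=\zeta[\nu]$ must satisfy~\eqref{eq:fivo}, which with $B\equiv0$ reduces to $\int_{\Lambda\times S}|\zeta[\nu]|<\infty$; this follows from the bound $|\zeta(\dd q;\nu)|\le\rho(q)\,\e^{b(q)}\,m(\dd q)$ provided by Theorem~\ref{thm:main1} together with $\int\e^{b}\rho<\infty$. For bounded $\rho$ one could instead route the inversion step through the Banach inversion theorem (Theorem~\ref{thm:banach}), but the general statement needs the scale-of-Banach-spaces approach via Theorem~\ref{thm:GCFE}.
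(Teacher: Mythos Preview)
Your proposal is correct and follows exactly the paper's approach: the paper's proof consists of the single sentence ``The theorem is an immediate consequence of Theorem~\ref{thm:GCFE}.'' Your write-up simply spells out the bookkeeping (that $V\geq 0$ gives $B=B^*=0$, that the first hypothesis is precisely~\eqref{suffsuff}, and that the second hypothesis bundles the three conditions in~\eqref{eq:nuassumption}), which is entirely in line with what the paper leaves implicit.
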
 

\begin{proof} 
	The theorem is an immediate consequence of Theorem~\ref{thm:GCFE}. 
\end{proof} 

When we think of rods with an orientiation, we may specialize to situations where there is translational invariance but not necessarily rotational invariance: 

\begin{cor} \label{cor:feinternal}
	Assume that $\rho(x,\sigma) = \rho_0 p(\sigma)$ for some scalar $\rho_0>0$ and non-negative $p:S\to \R_+$ with $\int_S p(\sigma) \dd \sigma =1$. Assume that $|\Lambda|<\infty$, $\int_S p(\sigma) |\log p(\sigma)|\, \dd \sigma<\infty $, and 
	$$
	  	\rho_0 \sup_{(x,\sigma) \in \Lambda\times S}\int_{\Lambda\times S} \bar f\bigl((x,\sigma),(y,\tau)\bigr) 
	  	p(\tau) \dd \tau \dd y \leq \frac{1}{2\e}. 
	$$
	Then 
	\begin{multline} \label{feinternal} 
		\beta \mathcal F_\Lambda[\rho]
			 = |\Lambda|\Bigl( \rho_0(\log\rho_0 - 1) + \rho_0\int_S p(\sigma) \log p(\sigma) \dd \sigma\Bigr) \\
			 	 - \sum_{n=2}^\infty \frac{\rho_0^n}{n!} \int_{\Lambda^n}\int_{S^n} D_n\bigl((x_1,\sigma_1),\ldots, (x_n,\sigma_n) \bigr) \prod_{i=1}^n p(\sigma_i) \dd \vect x \dd \vect \sigma
	\end{multline}
	with absolutely convergent integral and series. 
\end{cor}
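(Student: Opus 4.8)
The plan is to obtain Corollary~\ref{cor:feinternal} directly from Theorem~\ref{thm:onsager} (which itself rests on Theorem~\ref{thm:GCFE}) by specializing to the translationally invariant profile $\rho(x,\sigma) = \rho_0\, p(\sigma)$ and choosing the constant weight functions $a\equiv b\equiv\tfrac12$, in the same spirit as the passage from the general theorems to Theorem~\ref{thm:hom-virmain} in the homogeneous case. With constant weights one has $\e^{a+b}=\e$, so the first hypothesis of Theorem~\ref{thm:onsager}, evaluated at the integration point and written via $\bar f = 1-\e^{-\beta V}$ (valid since $V\ge 0$), reads
\[
	\rho_0\,\e\int_{\Lambda\times S}\bar f\bigl((x,\sigma),(y,\tau)\bigr)\,p(\tau)\,\dd\tau\,\dd y \le \tfrac12\qquad\text{for all }(x,\sigma)\in\Lambda\times S,
\]
which is precisely the standing assumption $\rho_0\sup_{(x,\sigma)}\int_{\Lambda\times S}\bar f((x,\sigma),(y,\tau))p(\tau)\,\dd\tau\,\dd y\le\tfrac{1}{2\e}$ of the corollary.

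Next I would verify the remaining (integrability) hypothesis of Theorem~\ref{thm:onsager} for this $\rho$ and $b\equiv\tfrac12$. Since $|\Lambda|<\infty$, the $\dd x$-integral over $\Lambda$ contributes only a factor $|\Lambda|$; using $|\log(\rho_0 p(\sigma))|\le|\log\rho_0|+|\log p(\sigma)|$ together with $\int_S p\,\dd\sigma=1$ one bounds $\int_{\Lambda\times S}\rho_0 p(\sigma)\bigl(|\log(\rho_0 p(\sigma))|+\tfrac32+\e^{1/2}\bigr)\,\dd x\,\dd\sigma$ by $\rho_0|\Lambda|\bigl(|\log\rho_0|+\int_S p|\log p|\,\dd\sigma+\tfrac32+\e^{1/2}\bigr)$, which is finite by the hypothesis $\int_S p(\sigma)|\log p(\sigma)|\,\dd\sigma<\infty$. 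With both hypotheses in hand, Theorem~\ref{thm:onsager} delivers the free-energy formula together with absolute convergence of all the integrals and of the series, $D_n$ being given by~\eqref{eq:Dn}.

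It then remains to substitute $\rho(x_i,\sigma_i)=\rho_0 p(\sigma_i)$ into that formula and collect terms. For the entropy term, using $\int_S p\,\dd\sigma=1$,
\[
	\int_{\Lambda\times S}\rho_0 p(\sigma)\bigl[\log(\rho_0 p(\sigma))-1\bigr]\,\dd x\,\dd\sigma
	= |\Lambda|\Bigl(\rho_0(\log\rho_0-1)+\rho_0\int_S p(\sigma)\log p(\sigma)\,\dd\sigma\Bigr),
\]
and factoring the $n$ copies of $\rho_0$ out of $\prod_{i=1}^n\rho(x_i,\sigma_i)$ turns the interaction series into $\sum_{n\ge2}\frac{\rho_0^n}{n!}\int_{\Lambda^n}\int_{S^n}D_n\prod_{i=1}^n p(\sigma_i)\,\dd \vect x\,\dd \vect \sigma$; this is exactly~\eqref{feinternal}.

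I do not anticipate a genuine obstacle here: all the analytic work is already contained in Theorem~\ref{thm:onsager}, and the only points calling for a little care are the bookkeeping of the two constants $\tfrac12$ in $a$ and $b$ (so that the sharp threshold $\tfrac{1}{2\e}$ emerges) and the elementary splitting of $\log(\rho_0 p(\sigma))$ that reduces the integrability requirement to the single condition $\int_S p|\log p|<\infty$.
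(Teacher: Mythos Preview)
Your proposal is correct and is precisely the intended argument: the paper treats this corollary as an immediate specialization of Theorem~\ref{thm:onsager}, and the natural way to effect that specialization is exactly your choice of constant weights $a\equiv b\equiv\tfrac12$ (mirroring the homogeneous case in the proof of Theorem~\ref{thm:hom-virmain}), followed by the straightforward substitution $\rho(x,\sigma)=\rho_0\,p(\sigma)$ and the splitting $\log(\rho_0 p)=\log\rho_0+\log p$. There is nothing to add.
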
 

\noindent If $V$ is translation invariant, then the right-hand side of~\eqref{feinternal} is proportional to the volume, up to boundary errors that become irrelevant in the thermodynamic limit, and the corollary also yields an expression for the thermodynamic limit $\lim \frac{1}{|\Lambda|} 		\beta \mathcal F_\Lambda[\rho]$.

The right-hand side of~\eqref{feinternal} corresponds to the functional from Eq.~(27) in~\cite{onsager1949}, which is the free energy functional derived by Onsager \emph{before} applying additional approximations due to thinness of rods etc.

\begin{remark}
	In \cite{jttu2014}, in order to obtain $2$-connected coefficients for the case of molecules with internal degrees of freedom, we needed to assume rigidity of the molecules so that Lemma~4.1 in ~\cite{jttu2014} about factorization of graph weights holds true. In the present article, as seen in Corollary~\ref{cor:feinternal}, we obtain the $2$-connected coefficients  as well provided we keep the probability density $p(\sigma)$ of shapes as an explicit variable. If instead we look at 
	$$
		f_\Lambda(\rho_0):= \inf_p \frac{1}{|\Lambda|} \mathcal F_\Lambda[\rho_0 \, p],
	$$
	expand the minimizer $p(\sigma;\rho_0)$ in powers of $\rho_0$ and compose with the expansion of $\frac{1}{|\Lambda|} \mathcal F_\Lambda[\rho_0 p]$, we see that the coefficient of $\rho_0^n$ in the expansion of $f_\Lambda(\rho_0)$ is \emph{not} given by $D_n$.
\end{remark}

\appendix

\section{Formal power series and Ruelle's algebraic formalism} \label{ApFormal}

Here we summarize some facts on the formal power series used in this article, and point out the relation with Ruelle's algebraic formalism. We are interested in power series and formal power series of the form
\be \label{eq:formal}
	K(z) = K_0 + \sum_{n=1}^\infty \frac{1}{n!} \int_{\mathbb X^n} K_n(x_1,\ldots,x_n)  z(\dd x_1)\cdots  z(\dd x_n)
\ee
where $(\mathbb X,\mathcal{X})$ is a measurable space $z$ is a measure on $(\mathbb X,\mathcal X)$, and $K_0\in \C$ is a scalar, and $K_n:\mathbb X ^n \to \C$ are measurable maps that are invariant under permutation of the arguments.

In general, for a formal power series, the integrals and the series need not to converge, hence, in analogy with the theory of formal power series of a single variable, we define a formal power series as a sequence $(K_n)_{n\in \N}$ of symmetric functions and downgrade~\eqref{eq:formal} to a mnemonic notation. Standard operations such as sums and products are defined directly as operations on the sequences $(K_n)_{n\in \N_0}$ in such a way that for two sufficiently well convergent power series one obtains the same result.   The sum of two formal power series $K+G$ is the formal series with coefficients $(K_n+G_n)_{n\in \N_0}$, for $\lambda\in \C$ the formal  series $\lambda K$ is the series with coefficients  $(\lambda K_n)_{n\in \N_0}$. Other operations are defined below. The resulting algebra of formal power series is exactly the algebra of symmetric functions introduced by Ruelle~\cite[Chapter 4.4]{ruelle1969book}.  \\

\noindent \emph{Product.}
Let $K,G$ be formal power series, then $KG$ is defined by 
\be \label{eq:formal-product}
	(KG)_n (x_1,\ldots,x_n):=\sum_{\ell=0}^n \sum_{J\subset [n], \#J=\ell}  K_\ell\bigl( (x_j)_{j\in J} \bigr)G_{n-\ell}\bigl( (x_j)_{j\in [n]\setminus J}\bigr).
\ee
The empty set $J=\varnothing$ is explicitly allowed. As an operation on sequences of symmetric functions, this is exactly the convolution in~\cite[Chapter 4.4]{ruelle1969book}.
 It is not difficult to check that the product is commutative and associative. 
  Eq.~\eqref{eq:formal-product} generalizes to products $K^{(1)}\cdots K^{(r)}$ as 
\be \label{eq:formal-higher-product}
	\bigl(K^{(1)}\cdots K^{(r)}\bigr)_n(x_1,\ldots,x_r)
		= \sum_{(V_1,\ldots, V_r)} \prod_{\ell =1}^r K^{(\ell)}_{\#V_\ell}\bigl( (x_j)_{j\in V_\ell}\bigr)
\ee
where the sum runs over ordered partitions $(V_1,\ldots, V_r)$ of $[n]$ into $r$ disjoint parts, with $V_i=\varnothing$ explicitly allowed. 

 The definition~\eqref{eq:formal-product} is motivated by the following computation, which is valid if the power series are absolutely convergent: From
\begin{align*}
	K(z) G(z) & = \Biggl( K_0 + \sum_{m=1}^\infty \frac{1}{m!} \int_{\mathbb X^m} K_m(x_1,\ldots,x_m) z(\dd x_1)\cdots z(\dd x_m) \Biggr)  \\
		& \qquad \qquad \times \Biggl( G_0 + \sum_{\ell=1}^\infty \frac{1}{\ell!} \int_{\mathbb X^\ell} G_\ell(x_1,\ldots,x_\ell)  z(\dd x_1)\cdots z(\dd x_\ell) \Biggr) 
\end{align*} 
we get
\begin{align*}
	K(z) G(z) & = \sum_{n=0}^\infty \frac{1}{n!} 			\int_{\mathbb X^n} \Biggl( \sum_{\substack{0\leq m,\ell\leq n\\ m+\ell =n}} \frac{n!}{m! \ell!}\,  K_m(x_1,\ldots,x_m) G_\ell(y_1,\ldots, y_\ell)\Biggr) z^m(\dd \vect x) z^\ell( \dd \vect y). 
\end{align*} 
The summand for $m=\ell = 0$ should be read as $K_0G_0$. 
The binomial coefficient $\binom{n}{m}$ is equal to the number of subsets $J\subset[n]$ of cardinality $\#J =m$. 
The value of the integral 
$$
		\int_{\mathbb X^n} K_m\bigl( (x_j)_{j\in J}\bigr) G_\ell\bigl( (x_j)_{j\in [n]\setminus J}\bigr) z(\dd x_1)\cdots z(\dd x_n)
$$
depends on the cardinality $m$ of $J$ alone, and so we find that 
$$
	K(z) G(z) = \sum_{n=0}^\infty \frac{1} {n!}\int_{\mathbb X^n} (KG)_n(x_1,\ldots, x_n) z(\dd x_1)\cdots z(\dd x_n)
$$
with $(KG)_n$ defined in~\eqref{eq:formal-product}. \\

\noindent {\emph{Variational derivative}}. For $q\in \mathbb X$ and $K$ a formal power series over $\mathbb X$, we define 
\be \label{fvardev}
	\Bigl(\frac{\delta }{\delta z(q)} K\Bigr)_n (x_1,\ldots,x_n) \equiv 	\Bigl(\frac{\delta K}{\delta z}\Bigr)_n (q;x_1,\ldots,x_n)= K_{n+1}(q,x_1,\ldots,x_n). 
\ee
In the language of~\cite[Chapter 4.4]{ruelle1969book},  $\frac{\delta}{\delta z(q)}$ corresponds to the derivation $D_q$. Formally, 
\begin{align*}
	K(z+ t \mu) & = K_0+ \sum_{n=1}^\infty \frac{1}{n!}\int_{\mathbb X^n} K_n(x_1,\ldots, x_n)  \prod_{i=1}^n \bigl( z(\dd x_i)+ t \mu(\dd x_i)\bigr)  \\
	& = K(z) + t \Biggl(\sum_{n=1}^\infty \frac{1}{(n-1)!}
		\int_{\mathbb X^n} K_n(x_1,\ldots, x_n) \mu(\dd x_1) z(\dd x_2)\cdots z(\dd x_n)\Biggr) + O(t^2) \\
		& = K(z) + t \int_{\mathbb X} \Biggl(K_1(q)+   \sum_{n=1}^\infty \frac{1}{n!}
		\int_{\mathbb X^n} K_{n+1}(q,x_1,\ldots, x_n) z(\dd x_1)\cdots z(\dd x_n)\Biggr) \mu(\dd q) + O(t^2)
\end{align*} 
and
\be \label{fvardev2}
	\frac{\dd}{\dd t} K(z+ t\mu)\Bigg|_{t=0} = \int_{\mathbb X} \frac{\delta K}{\delta z}(q;z) \mu (\dd q)
\ee
as it should be. \\

\noindent \emph{Composition I and exponential series.}
Let $F(t) = \sum_{n=0}^\infty f_n t^n/n!$ be a formal power series in a single variable $t$ and $K$ a formal power series on $(\mathbb X,\mathcal X)$ with $K_0=0$. The formal power series $F\circ K$ on $\mathbb X$ is defined by $(F\circ K)_0:=f_0$ and for $n\geq 1$, 
\be \label{composition2} 
	(F\circ K)_n (x_1,\ldots,x_n) := \sum_{m=1}^n\sum_{\{J_1,\ldots,J_m\}\in \mathcal P_n} f_{m} \prod_{\ell=1}^m  K_{\#J_\ell}\bigl( (x_j)_{j \in J_\ell}\bigr)
\ee
with $\mathcal P_n$ the collection of set partitions of $\{1,\ldots,n\}$. Note that only because $K_0=0$ the expression \eqref{composition2} is well-defined, because only in this case the sum is finite. Formally, 
\begin{align*}
	F\bigl( K(z)\bigr) & = f_0 + \sum_{m=1}^\infty \frac{1}{m!} f_m \bigl( K(z)\bigr)^m \\
	&= f_0 + \sum_{m=1}^\infty \frac{1}{m!} f_m \sum_{n=1}^\infty \frac{1}{n!} \int_{\mathbb X^n} \Biggl( \sum_{\substack{(J_1,\ldots, J_m)\\J_1\dot \cup \cdots\dot \cup J_m = [n]}} \prod_{\ell=1}^m  K_{\#J_\ell}\bigl( (x_j)_{j \in J_\ell}\bigr)\Biggr) z^n(\dd \vect x),\\
	&= f_0 +\sum_{n=1}^\infty \frac{1}{n!} f_m  \int_{\mathbb X^n} \Biggl( \sum_{m=1}^\infty \frac{1}{m!}  \sum_{\substack{(J_1,\ldots, J_m)\\J_1\dot \cup \cdots\dot \cup J_m = [n]}} \prod_{\ell=1}^m  K_{\#J_\ell}\bigl( (x_j)_{j \in J_\ell}\bigr)\Biggr) z^n(\dd \vect x)
\end{align*} 
In the second line we have used~\eqref{eq:formal-higher-product}.  Because of $K_0=0$, the only relevant contributions in the last line are from non-empty$J_r$'s. The factor $1/m!$ can be removed if we decide to sum over non-ordered partitions $\{J_1,\ldots, J_m\}$ instead of ordered partitions $(J_1,\ldots, J_r)$, and we arrive at the expression~\eqref{composition2} for the coefficients of $F(K(z))$. 

An important special case is $F(t) = \exp(t)$, for which Eq.~\eqref{composition2} becomes 
\be \label{eq:formal-exponential} 
	(\exp(K))_n (x_1,\ldots,x_n) = \sum_{m=1}^n\sum_{\{J_1,\ldots,J_m\}\in \mathcal P_n}  \prod_{\ell=1}^m  K_{\#J_\ell}\bigl( (x_j)_{j \in J_\ell}\bigr),
\ee
which is exactly the exponential on the algebra of symmetric functions from~\cite[Chapter 4.4]{ruelle1969book}.
\\

\noindent \emph{Composition II}.  
In the proof of Lemma~\ref{lem:treesolution} we need a more general type of composition, namely let $K$ be a formal power series on $\mathbb X$ with $K_0=0$ and $(G(q;z))_{q\in \mathbb X}$ a family of power series 
$$
	G(q;z) =G_0(q)+ \sum_{n=1}^\infty \frac{1}{n!}\int_{\mathbb X^n} G_n(q;x_1,\ldots,x_n) z(\dd x_1) \cdots z(\dd x_n). 
$$
If  $G(q;z)$ is absolutely convergent for each $q$, define 
$$
	\tilde z(\dd q):=G (q;z) z(\dd q), \quad F(z):= K(\tilde z). 
$$
If sums and integrals are absolutely convergent, then 
\begin{align*}
	F(G(\cdot;z)z) \equiv F(z) & = \sum_{m=1}^\infty\frac{1}{m!} \int_{\mathbb X^m} K_m(x_1,\ldots,x_m) G(x_1;z)\cdots G(x_m;z) z(\dd x_1)\cdots z(\dd x_m) \\
	& = \sum_{m=1}^\infty\frac{1}{m!} \int_{\mathbb X^m} K_m(x_1,\ldots,x_m) \\
		&\qquad \times \left( \sum_{r=0}^\infty \frac{1}{r!} \int_{\mathbb X^r} \Biggl( \sum_{\substack{(V_1,\ldots, V_m)\\ V_1\dot \cup \cdots \dot \cup V_m = [r]}} \prod_{\ell=1}^m G_{\#V_\ell} \bigl(x_\ell; (y_j)_{j\in V_\ell}\bigr) \Biggr)z^r (\dd \vect y) \right) z(\dd x_1)\cdots z(\dd x_m), 
\end{align*} 
where the $V_i$ can be empty.
We group pairs $(m,r)$ with a common sum $m+r = n$. For the factorials we note 
$$
	\frac{1}{m!}\frac{1}{r!} = \frac{1}{n!} \binom {n}{m} 
		= \frac{1}{n!}  \#\{J\subset [n]\mid \#J = m\}. 
$$
Exploiting the symmetry of the functions $K_m(\cdot)$ and $G_j(x;\cdot)$, we find that the coefficients of $F$ are given by  
\be \label{fcomp2}
	F_n(x_1,\ldots, x_n)  = \sum_{m=1}^n \sum_{\substack{J\subset [n]  \\ \#J =m}}  K_{m}\bigl((x_j)_{j\in J}\bigr) \sum_{\substack{(V_j)_{j\in J}:\\  \dot \cup_{j\in J} V_j  = [n]\setminus J}} \prod_{j\in J} G_{\# V_j}\bigl(x_j; (x_v)_{v\in V_j}\bigr).
\ee

\section{Holomorphic functions on Banach spaces} \label{app:holo}

Here we collect some fact that are useful for the Banach inversion.  We refer the reader to~\cite{harris2003,mujica2006} for accessible surveys and~\cite{dineen1999book, mujica1986book} for details.  Let $E$ and $F$ be two complex Banach spaces. A  multilinear map $A:E^m\to F$ is bounded if 
$$
	||A||:= \sup \{||A(x_1,\ldots,x_m)||\, \mid x_1,\ldots, x_m\in E,\, \max_{j=1,\ldots,m} ||x_j|| \leq 1\} <\infty.  
$$

\begin{definition}[Homogeneous polynomials and power series] \hfill \label{def:holo1}
	\begin{enumerate}
		\item A mapping $P:E\to F$ is a \emph{continuous $m$-homogeneous polynomial} if there exists a bounded  multilinear map $A:E^m\to F$ such that $P(x) = A(x,\ldots, x)$.
		\item A \emph{power series} from $E$ into $F$ is a series of the form $\sum_{m=0}^\infty P_m (x-a)$, with $a\in E$ and $P_m$ a continuous $m$-homogeneous polynomial. The \emph{radius of uniform convergence} of the series is the supremum over all $r>0$ such that the series converges uniformly on $\{x\in E\mid ||x-a ||\leq r\}$. 
	\end{enumerate} 
\end{definition} 

\noindent The norm of a continuous $m$-homogeneous polynomial $P$ is 
$$
			||P||:= \sup \{ ||Px||\, \mid x \in E:\, ||x||\leq 1 \}. 
$$
For example, if $E=F=\C$ and $P(z) = a_m x^m$, then $||P|| =|a_m|$. 

\begin{prop}[Cauchy-Hadamard formula]\label{prop:roc}
     \cite[Prop. 6]{mujica2006}
	The radius of uniform convergence of the power series $\sum_{m=0}^\infty P_m(x-a)$ satisfies 
	$$
		\frac{1}{R} = \limsup_{m\to \infty} ||P_m||^{1/m}.
	$$
\end{prop}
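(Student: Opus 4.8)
The plan is to adapt the classical scalar Cauchy--Hadamard argument, the only genuinely new ingredient being the homogeneity scaling $P_m(\lambda x) = \lambda^m P_m(x)$ for $\lambda \in \C$, which is immediate from the representation $P_m(x) = A_m(x,\ldots,x)$ with $A_m$ an $m$-linear map. This scaling yields both $\|P_m(x)\| \le \|P_m\|\,\|x\|^m$ and, for every $r > 0$, the identity $\sup_{\|x\|\le r}\|P_m(x)\| = r^m\|P_m\|$. I would first reduce to $a = 0$ by replacing $x$ by $x-a$, set $L := \limsup_{m\to\infty}\|P_m\|^{1/m} \in [0,\infty]$, and write $R_{\mathrm u}$ for the radius of uniform convergence defined in the statement; the goal is $R_{\mathrm u} = 1/L$, with the conventions $1/0 = \infty$ and $1/\infty = 0$.

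For the inequality $R_{\mathrm u} \ge 1/L$ I would fix any $r$ with $0 < r < 1/L$ and choose $r'$ with $r < r' < 1/L$, so that $1/r' > L$. Then $\|P_m\|^{1/m} < 1/r'$, hence $\|P_m\| \le (r')^{-m}$, for all sufficiently large $m$; consequently $\|P_m(x)\| \le (r/r')^m$ uniformly on $\overline{B(0,r)}$, and since $r/r' < 1$ the Weierstrass $M$-test gives uniform convergence there. As $r < 1/L$ was arbitrary this proves the claim (the case $L = 0$ is included, every $r > 0$ being admissible).

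For the reverse inequality $R_{\mathrm u} \le 1/L$, suppose the series converges uniformly on $\overline{B(0,r)}$ for some $r > 0$. Then its partial sums are uniformly Cauchy, so $\sup_{\|x\|\le r}\|P_m(x)\| \to 0$ and in particular is bounded by some finite $C$. Using $\sup_{\|x\|\le r}\|P_m(x)\| = r^m\|P_m\|$ one gets $\|P_m\| \le C r^{-m}$, hence $\|P_m\|^{1/m} \le C^{1/m}/r \to 1/r$, so $L \le 1/r$, i.e. $r \le 1/L$. Taking the supremum over all such $r$ yields $R_{\mathrm u} \le 1/L$; when $L = \infty$ this shows no $r > 0$ is admissible, so $R_{\mathrm u} = 0$. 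Combining the two bounds gives $R_{\mathrm u} = 1/L$.

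I do not expect any real obstacle: the argument is routine once the scaling identity for $\|P_m\|$ is recorded, and the only points needing a word of care are the degenerate cases $L \in \{0,\infty\}$ and the elementary remark that a uniformly convergent series has a general term tending uniformly to zero, hence uniformly bounded. One could of course simply invoke \cite[Prop.~6]{mujica2006}, but the short self-contained argument above seems preferable here.
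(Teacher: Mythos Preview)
The paper does not prove this proposition; it is stated in Appendix~B as a background fact with a bare citation to \cite[Prop.~6]{mujica2006}. Your self-contained argument is correct: the scaling identity $\sup_{\|x\|\le r}\|P_m(x)\| = r^m\|P_m\|$ follows from $m$-homogeneity, and with it the classical Cauchy--Hadamard proof goes through verbatim, including the degenerate cases $L\in\{0,\infty\}$.
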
 

\begin{theorem}\cite[Theorem 7]{mujica2006} \label{thm:holo}
	Let $U\subset E$ be a non-empty open subset and $f:U\to F$. The following conditions are equivalent:
	\begin{enumerate}
		\item For each $a\in  U$, the Fr{\'e}chet derivative of $f$ at $a$ exists: i.e., there exists a bounded linear map $A:E\to F$ such that 	
		$$
			|| f(x) - f(a) - A(x-a)|| = o(||x-a||)\qquad (x\to a).
		$$
		\item For each $a\in U$, there exists a power series $\sum_{m=0}^\infty P_m(x-a)$ that converges to $f(x)$ uniformly on some ball $B(a,r)\subset U$ (with $r>0$).
		\item $f$ is continuous in $U$ and, for each $a\in U$, all elements  $\psi$ of the dual Banach space $E'$, and all $b\in E$, the map $\lambda\to \psi( f(a+ \lambda b))$ is holomorphic in the usual sense in the open set $\{\lambda \in \C\mid a+ \lambda b\in U\}$. 
	\end{enumerate} 
\end{theorem}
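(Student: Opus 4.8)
\medskip
\noindent\textbf{Proof strategy.} The plan is to prove the three conditions equivalent through the cycle $(2)\Rightarrow(1)\Rightarrow(3)\Rightarrow(2)$, reproducing only the structure of the standard development of holomorphy in Banach spaces (see \cite{mujica2006,dineen1999book}). For $(2)\Rightarrow(1)$ I would apply $(2)$ at a fixed $a\in U$ to write $f(x)=\sum_{m\ge0}P_m(x-a)$ uniformly on some $B(a,r)\subset U$; Cauchy's estimates give $\|P_m\|\le r^{-m}M$ with $M:=\sup_{B(a,r)}\|f\|$, hence $\|f(a+h)-f(a)-P_1(h)\|=\|\sum_{m\ge2}P_m(h)\|\le M r^{-2}\|h\|^2\sum_{k\ge0}(\|h\|/r)^k=O(\|h\|^2)$ for $\|h\|<r$, so $f$ is Fréchet differentiable at $a$ with $\mathrm{D}f(a)=P_1$; since $a$ is arbitrary, $(1)$ follows. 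For $(1)\Rightarrow(3)$: Fréchet differentiability at every point forces continuity of $f$ on $U$, and for fixed $a,b\in E$ and $\psi\in E'$ the expansion $f(a+(\lambda_0+t)b)-f(a+\lambda_0 b)=t\,\mathrm{D}f(a+\lambda_0 b)[b]+o(|t|)$ together with boundedness of $\psi$ shows $\lambda\mapsto\psi(f(a+\lambda b))$ has complex derivative $\psi(\mathrm{D}f(a+\lambda b)[b])$ at every $\lambda$ with $a+\lambda b\in U$, i.e.\ it is holomorphic there.

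The substantive implication is $(3)\Rightarrow(2)$. I would fix $a\in U$, pick $\rho>0$ with $\overline{B(a,\rho)}\subset U$, and set $M:=\sup_{\overline{B(a,\rho)}}\|f\|<\infty$ (finite by continuity). For $\|x-a\|<\rho$ consider $g(\lambda):=f(a+\lambda(x-a))$, a continuous $F$-valued function on a neighbourhood of the closed unit disc with $\psi\circ g$ holomorphic for every $\psi\in E'$. The classical fact that a continuous, scalarly holomorphic Banach-space–valued function of one complex variable is strongly holomorphic — resting on completeness of $F$ and the uniform boundedness principle — then equips $g$ with a Cauchy representation, so $f(x)=g(1)=\frac{1}{2\pi\mathrm{i}}\oint_{|\lambda|=R}\frac{g(\lambda)}{\lambda-1}\,\dd\lambda$ for any $1<R<\rho/\|x-a\|$; expanding $(\lambda-1)^{-1}=\sum_{m\ge0}\lambda^{-m-1}$ yields $f(x)=\sum_{m\ge0}P_m(x-a)$ with $P_m(x-a):=\frac{1}{2\pi\mathrm{i}}\oint_{|\lambda|=R}\lambda^{-m-1}f(a+\lambda(x-a))\,\dd\lambda$. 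I would then check that $P_m$ is $m$-homogeneous (substitute $\lambda\mapsto t\lambda$), that $\|P_m(x-a)\|\le M(\|x-a\|/\rho)^m$ upon letting $R\uparrow\rho/\|x-a\|$ so that the series converges uniformly on $B(a,\rho/2)$, and that each $P_m$ is an honest continuous $m$-homogeneous polynomial, its associated symmetric $m$-linear form being recovered from the $P_m$'s by the polarization formula and bounded by the same Cauchy estimate. This closes the cycle.

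I expect the only real obstacle to lie in $(3)\Rightarrow(2)$: upgrading the \emph{scalar} holomorphy of $\lambda\mapsto\psi(f(a+\lambda b))$ to a genuine vector-valued Cauchy integral formula for $g$ (this is precisely where completeness of $F$ and the uniform boundedness principle enter), and then confirming that the Cauchy coefficients $P_m$ are bona fide bounded $m$-homogeneous polynomials rather than merely continuous homogeneous maps, which is exactly what polarization combined with the Cauchy estimates provides. The other two implications reduce to term-by-term manipulation of a uniformly convergent power series and the chain rule for Fréchet derivatives, and should be routine.
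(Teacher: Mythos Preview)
The paper does not prove this theorem at all: it is quoted verbatim from \cite[Theorem~7]{mujica2006} in Appendix~\ref{app:holo} as background material, with no proof given. Your sketch is the standard textbook argument (essentially the one in Mujica's survey and book), and it is correct as a proof strategy, but there is nothing in the paper to compare it against.
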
 

\begin{definition} \label{def:holo2}
	A mapping $f:U\to F$ is called \emph{holomorphic} if it satisfies one (hence,  all three) of the conditions (1)-(3) in Theorem~\ref{thm:holo}.
\end{definition} 

\noindent Many theorems for holomorphic functions in $\C$ have analogues (for example, Cauchy integral formulas), but there are a few pitfalls. For example, it is not true that the Taylor series of a function holomorphic on all of $E$ has infinite uniform radius of convergence. Also, it is not true that a holomorphic function is bounded on balls that are bounded away from $\partial U$. 

\begin{example} \cite[Example 2.6]{harris2003}
	Let $c_0(\N)$ be the Banach space of complex-valued sequences that converge to zero, equipped with the usual supremum norm.  Define $f:c_0(\N) \to \C$ by
	$$
		f\bigl( (z_n)_{n\in \N}\bigr):= \sum_{n=1}^\infty z_n^n.
	$$
	Then $f$ is holomorphic on all of $c_0(\N)$, but the radius of uniform convergence (in the sense of Definition~\ref{def:holo1}) of the series is $1$, and for every $r>1$, the function $f$ is unbounded on the ball $\{ z \in c_0(\N) \mid \sup_{n\in \N} |z_n| \leq r\}$.
\end{example} 

\noindent We conclude with a quantitative inverse function theorem. Let $U$ and open subset of $ E$  and $h : U \rightarrow F$. Call $V := h(U)$. An inverse function theorem give condition under which there exist open neighborhoods $U' \subset U$ of $0$ and $V'\subset V$ of $h(0)$, respectively, such that $h:U'\to V'$ is bijection with holomorphic inverse. An quantitative inverse function theorem additionally singles out numbers $r>0$ and $P>0$, which only depends on $U,V, \|Dh(0)^{-1}\|$, for which we may  choose $U' = B_r(0)$ and $V'=h(U')\supset B_P(0)$. Alternatively, one can have quantitative inversion theorem such that  $V'= B_P(0)$ and $U'= h^{-1}(B_P(0))\subset B_r(0)$. Such numbers $r$ and $P$ are sometimes called \emph{Bloch radii} after Bloch's theorem.
 In the following theorem $E=F$. 

\begin{theorem}\cite[Proposition 2]{harris1977}  \label{thm:holoinv}
	Let $B_R(0)$ and $B_M(0)$ be open balls in some complex Banach space $E=F$ and $h : B_R(0) \rightarrow B_M(0)$ a  holomorphic function.  Suppose that  the derivative $\mathrm Dh(0)$ at the origin is invertible with bounded inverse $||\mathrm D h(0)^{-1}||^{-1} \geq a>0$. 
	Let 
	\[
		r=\frac{R^2a}{4M}, \quad P=\frac{R^2 a^2}{8M}.
	\]
	 Then $h$ maps $ B_r(0)$ biholomorphically onto a domain covering $B_P(h(0))$.
\end{theorem}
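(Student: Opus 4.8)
The plan is to prove this by the standard quantitative inverse function theorem technique: solve $h(x)=y$ for $y$ near $h(0)$ by Banach's fixed point theorem, and read off the Bloch radii from Cauchy-type estimates on the nonlinear part of $h$.

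First I would reduce to the case $h(0)=0$ by replacing $h$ with $\tilde h:=h-h(0)$. Then $\tilde h$ is holomorphic on $B_R(0)$ with values in $B_{2M}(0)$, since $\|\tilde h(x)\|\le\|h(x)\|+\|h(0)\|<2M$, and $\mathrm D\tilde h(0)=\mathrm Dh(0)=:L$; inverting $\tilde h$ and translating by $h(0)$ inverts $h$, with $B_P(h(0))$ corresponding to $B_P(0)$ for $\tilde h$. This reduction is the source of the factor $2$ between the stated denominators $4M,8M$ and the ``clean'' values $2M,4M$. Write $\tilde h=L+R$, where $R=\sum_{m\ge2}P_m$ is the sum of the homogeneous Taylor polynomials of $\tilde h$ at the origin of degree $\ge2$, so that $R(0)=0$ and $\mathrm DR(0)=0$. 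Since $\|\tilde h\|\le N:=2M$ on $B_R(0)$, Cauchy's inequalities for Taylor coefficients of holomorphic maps between Banach spaces (see~\cite{mujica2006}) give $\|P_m\|\le N/R^m$ for all $m\ge1$; in particular $\|L\|\le N/R$, hence $a\le N/R$, and summing the geometric series yields, for $\|x\|<R$,
\begin{equation*}
	\|R(x)\|\le\frac{N\|x\|^2}{R(R-\|x\|)},
\end{equation*}
together with a companion bound of the form $\|R(x_1)-R(x_2)\|\le\ell(\rho)\,\|x_1-x_2\|$ on each ball $\overline{B_\rho(0)}$ with $\rho$ bounded away from $R$, where $\ell(\rho)\to0$ as $\rho\to0$; the latter is obtained from the Cauchy estimate for $\mathrm DR$ (using that $\mathrm DR$ vanishes at $0$) or by applying the one-variable Schwarz lemma along the segment $[x_2,x_1]$. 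The point is that both bounds are $O(\|x\|^2)$, respectively $O(\rho)\cdot\|x_1-x_2\|$, with explicit constants, reflecting the second-order vanishing of $R$.

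Next, for $\|y\|<P$ I would study the map $\Phi_y(x):=L^{-1}\bigl(y-R(x)\bigr)$, whose fixed points are precisely the solutions of $\tilde h(x)=y$. Using $\|L^{-1}\|\le1/a$ and the two estimates above, one checks that for a suitable radius $\rho\le r$ the map $\Phi_y$ sends $\overline{B_\rho(0)}$ into itself and is a $\tfrac12$-contraction there: the self-map condition is $\tfrac1a\bigl(\|y\|+\|R(x)\|\bigr)\le\rho$ for $\|x\|\le\rho$, and the contraction condition is $\tfrac1a\,\ell(\rho)\le\tfrac12$. Both hold once $\rho$ and $\|y\|$ are below appropriate thresholds, and a careful choice of $\rho$, balanced against these two constraints, pins down exactly $r=R^2a/(4M)$ and $P=R^2a^2/(8M)$. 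Banach's fixed point theorem then produces, for each $y\in B_P(0)$, a unique $x=x(y)\in\overline{B_\rho(0)}$ with $\tilde h(x)=y$, and one sees $x(y)\in B_r(0)$. Moreover the same increment estimate gives, for all $x_1,x_2\in B_r(0)$,
\begin{equation*}
	\|\tilde h(x_1)-\tilde h(x_2)\|\ge\bigl(\|L^{-1}\|^{-1}-\ell(r)\bigr)\|x_1-x_2\|\ge\tfrac a2\|x_1-x_2\|,
\end{equation*}
so $\tilde h$ is injective on $B_r(0)$.

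Finally I would upgrade this to a biholomorphism. On $B_r(0)$ the derivative $\mathrm D\tilde h(x)=L\bigl(\mathrm{id}+L^{-1}\mathrm DR(x)\bigr)$ is invertible with bounded inverse, because $\|L^{-1}\mathrm DR(x)\|\le\tfrac12<1$ (Neumann series); hence by the qualitative holomorphic inverse function theorem on Banach spaces $\tilde h$ is a local homeomorphism at each point of $B_r(0)$, so $\tilde h(B_r(0))$ is open, in particular it is a domain containing $B_P(0)$. The inverse $y\mapsto x(y)$ is Lipschitz, hence continuous, by the bound $\|x(y_1)-x(y_2)\|\le\tfrac2a\|y_1-y_2\|$ from the previous step, and a standard argument shows it has Fr\'echet derivative $\mathrm D\tilde h(x(y))^{-1}$ at every $y$; by Theorem~\ref{thm:holo} it is therefore holomorphic. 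Translating back by $h(0)$ gives the statement for $h$. The main obstacle is not the architecture, which is routine, but the estimates in the middle steps: one must control the nonlinear remainder $R$ and, above all, its Lipschitz constant $\ell(r)$ on $B_r(0)$ sharply enough — balancing the quadratic decay of $R$ near the origin against the amplification $1/a$ coming from $\|L^{-1}\|$ — since naive polarization/Cauchy bounds are lossy and would degrade the constants; carrying this bookkeeping out accurately is exactly what yields the stated $r=R^2a/(4M)$ and $P=R^2a^2/(8M)$.
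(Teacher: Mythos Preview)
The paper does not give a proof of this theorem; it is quoted as Proposition~2 from~\cite{harris1977} and used as a black box in the proof of Theorem~\ref{thm:banach}. Your sketch follows the standard contraction-mapping route---reduce to $h(0)=0$, split $\tilde h=L+R$, bound the nonlinear part via Cauchy estimates, and solve $\tilde h(x)=y$ by iterating $\Phi_y(x)=L^{-1}(y-R(x))$---which is essentially the approach Harris himself takes, so your strategy is the right one and there is nothing to compare against in the present paper.

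Your caveat at the end is well placed: the architecture is routine, but pinning down the exact constants $r=R^2a/(4M)$ and $P=R^2a^2/(8M)$ requires sharper control of the Lipschitz constant of $R$ than term-by-term Cauchy bounds with polarization would give. Harris organizes this slightly differently, applying a Schwarz-type estimate directly to $x\mapsto x-L^{-1}\tilde h(x)$ rather than summing the homogeneous pieces of $R$, which avoids the polarization loss you flag. With that adjustment the bookkeeping closes and yields the stated radii; your outline, suitably filled in along these lines, is correct.
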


%\blue{[Optional: cite instead or in addition Theorem~3 in Harris or Theorem 5 in Harris --- check whether these give better bounds]}

%%%%%%%% End stuff 
\subsubsection*{Acknowledgments} 
The main part of this article was completed when the first and third authors were members of the Department of Mathematics
at the University of Sussex and the second was frequently visiting; 
the authors acknowledge the department for the nice atmosphere. 
S. J. thanks the GSSI and T.K. the university in L'Aquila, Italy, for hospitality and M. Lewin for pointing out possible connections with the setting of the Nash-Moser theorem. 

%\nocite{*}
%\bibliographystyle{amsalpha}
%\bibliography{virial}

\providecommand{\bysame}{\leavevmode\hbox to3em{\hrulefill}\thinspace}
\providecommand{\MR}{\relax\ifhmode\unskip\space\fi MR }
% \MRhref is called by the amsart/book/proc definition of \MR.
\providecommand{\MRhref}[2]{%
  \href{http://www.ams.org/mathscinet-getitem?mr=#1}{#2}
}
\providecommand{\href}[2]{#2}

\end{document}